\newfont{\mycrnotice}{ptmr8t at 7pt}
\newfont{\myconfname}{ptmri8t at 7pt}
\newcommand{\RNum}[1]{\uppercase\expandafter{\romannumeral #1\relax}}
\algnewcommand{\LineComment}[1]{\Statex \hskip\ALG@thistlm \(\triangleright\) #1}
\begin{document}

\graphicspath{{figures/}}
\title{Protecting Locations with Differential Privacy under Temporal Correlations}

\numberofauthors{2}
\author{
\alignauthor
Yonghui Xiao\\
\affaddr{Dept. of Math and Computer Science}\\
       \affaddr{Emory University}
       \email{yonghui.xiao@emory.edu}\\
\alignauthor
Li Xiong\\
\affaddr{Dept. of Math and Computer Science}\\
       \affaddr{Emory University}
        \email{lxiong@emory.edu}
}

\maketitle
\begin{abstract}
Concerns on location privacy frequently arise with the rapid development of GPS enabled devices and location-based applications.
While spatial transformation techniques such as location perturbation or generalization have been studied extensively, most techniques rely on syntactic privacy models without rigorous privacy guarantee.
Many of them only consider static scenarios or perturb the location at
single timestamps without considering temporal correlations of a moving user's locations, and hence are vulnerable to various inference attacks.
While differential privacy has been accepted  as a standard for privacy protection, applying differential privacy in location based applications presents new challenges, as the protection needs to be enforced on the fly for a single user and needs to incorporate  temporal  correlations between a user's locations.

In this paper, we propose a systematic solution to preserve location privacy with rigorous privacy guarantee.
First, we propose a new definition, ``$\delta$-location set'' based differential privacy, to account for the temporal correlations in location data.
Second, we show that the well known $\ell_1$-norm sensitivity fails to capture the geometric sensitivity in multidimensional space and propose a new notion, sensitivity hull, based on which the error of differential privacy is bounded.
Third, to obtain the optimal utility we present a planar isotropic mechanism (PIM) for location perturbation, which is the first mechanism achieving the lower bound of differential privacy.
Experiments on real-world datasets also demonstrate that PIM significantly outperforms baseline approaches in data utility.
\end{abstract}
\category{C.2.0}{Computer-Communication Networks}{General}[Security and protection]
\category{K.4.1}{Computers and Society}{Public Policy Issues}[Privacy]

\keywords{Location privacy; Location-based services; Differential privacy; Sensitivity hull; Planar isotropic mechanism}

\newtheorem{theorem}{Theorem}[section]
\newtheorem{lemma}{Lemma}[section]
\newtheorem{definition}{Definition}[section]
\newtheorem{corollary}{Corollary}[section]
\newtheorem{observation}{Observation}[section]
\newtheorem{proposition}{Proposition}[section]
\newtheorem{example}{Example}[section]
\newtheorem{fact}{Fact}[section]
\section{Introduction}
Technology and usage advances in smartphones with localization capabilities
have provided tremendous opportunities for location based applications.
Location-based services (LBS) \cite{junglas2008location,dey2010location} range from searching
points of interest to location-based games and location-based commerce.
Location-based social networks allow users to
share locations with friends, to find friends, and to provide
recommendations about points of interest based on their locations.

One major concern of location based applications is location
privacy \cite{beresford2003location}. To use these applications, users
have to provide their locations to the respective service providers or
other third parties.  This location disclosure raises important privacy concerns since
digital traces of users' whereabouts can expose them to attacks ranging
from unwanted location based spams/scams to blackmail or even physical danger.

\vspace{2mm}\noindent{\bf Gaps in Existing Works and New Challenges.}
Many location privacy protection mechanisms have been proposed during the last
decade \cite{krumm2009survey, ghinita2013privacy} in the setting of LBS or continual location sharing. In such setting, a user sends her location to untrusted service providers or other parties in order to
obtain some services (e.g. to find the nearest restaurant).
One solution is Private Information Retrieval (PIR)
technique, based on cryptography instead of revealing individual locations (e.g. \cite{papadopoulos2010nearest}). However, such technique tends to be computationally expensive and
not practical in addition to requiring different query plans to be designed for
different query types.

Most solutions proposed in the literature are based on location obfuscation which transforms the exact location of a user to an area
(location generalization) or a perturbed location (location perturbation) (e.g. \cite{gedik2008protecting,geo-indistinguishability-CCS13}). Unfortunately, most spatial transformation techniques proposed
so far rely on
syntactic privacy models such as k-anonymity, or ad-hoc uncertainty models, and
do not provide rigorous privacy.
Many of them only consider static scenarios or perturb the location at
single timestamps without considering the temporal correlations of a moving
user's locations, and hence are vulnerable to various inference attacks. 
Consider the following examples.
\begin{enumerate}[I]
\item
Suppose a user moved from school to the cafeteria (where ``$\star$'' is) in Figure \ref{Figure-map1} (left). Three perturbed locations were released by selecting a point probabilistically in each of the three circles (by some spatial cloaking methods). Even though the individual locations were seemingly protected at each timestamp, considering them together with road constraints or the user's moving pattern will enable an adversary to accurately figure out the user is in the cafeteria, resulting in privacy breach.
\item
Suppose a user's location ``$\star$'' is protected in a circle as shown in Figure \ref{Figure-map1} (right).
If by estimation based on previous locations
 the user can only be in the five places at current timestamp as shown in the figure, then the obfuscated location actually exposes the true location. Thus technically, the radius of the circle (in location obfuscation) should be subject to temporal correlations.
\end{enumerate}
While such temporal correlations can be commonly modeled by Markov chain \cite{Quantifying-location-privacy-SP2011,MaskIt-SIGMOD12,Liao-learning-pattern}, and few works have considered such Markov models \cite{Quantifying-location-privacy-SP2011, MaskIt-SIGMOD12}, it remains a challenge to provide rigorous privacy protection under temporal correlations for continual location sharing.

\begin{figure}
\centering
\includegraphics[height=3cm]{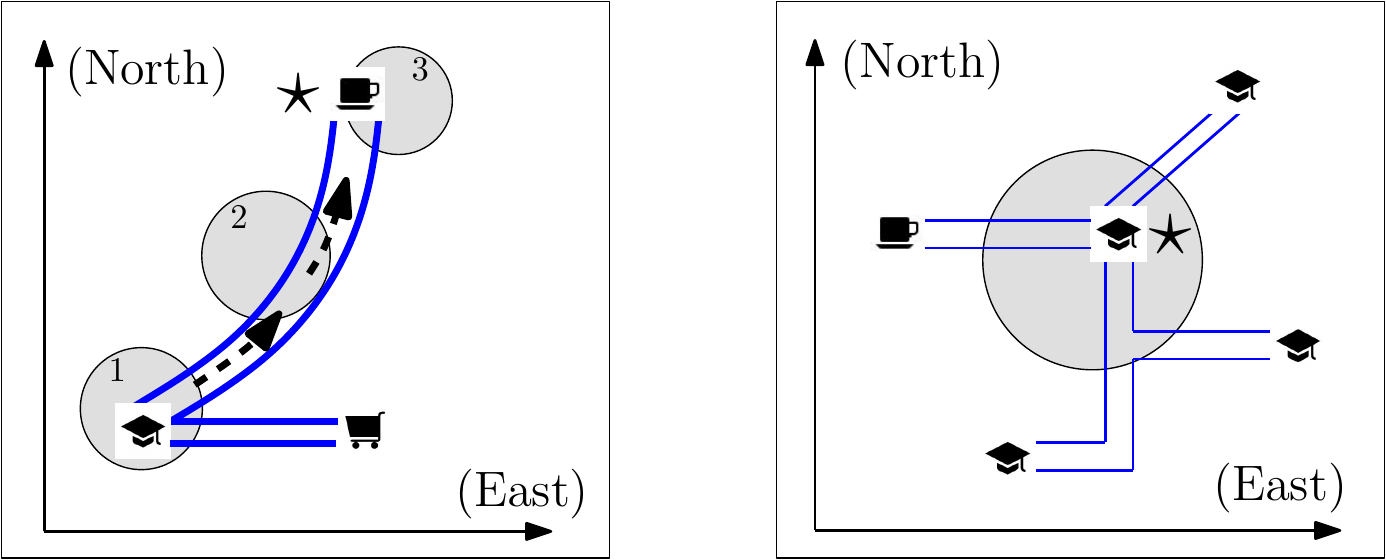}
\caption{{\small Examples of privacy breach caused by temporal correlations of user locations}}
\label{Figure-map1}
\end{figure}

Differential privacy \cite{dwork2006differential} has been accepted  as a standard
for privacy preservation. It was originally proposed to protect aggregated statistics of a dataset by bounding the knowledge gain of an adversary whether a user opts in or out of a dataset.
Applying differential privacy for location
protection is still at an early stage. In particular, several works (e.g. \cite{dp-trajectory-KDD12,
DBLP:conf/icde/QardajiYL13, DBLP:conf/dbsec/FanXS13})
have applied differential privacy on location or trajectory data but in a {\em data
publishing} or {\em data aggregation} setting. In this setting, a trusted data publisher with access to
a set of location snapshots or user trajectories publishes an {\em
aggregate} or synthetic view of the original data while guaranteeing user-level differential privacy, i.e. protecting the presence of a user's
location or entire trajectory in the aggregated data.

There are several challenges in applying differential privacy in the new setting of continual location sharing.  First,
standard differential privacy only protects {\em user-level} privacy (whether a user opts in or out of a dataset); while in our setting,
the protection needs to be enforced on the fly for {\em a single user}.
 Second, as shown in Figure \ref{Figure-map1}, temporal correlations
  based on road networks or the user's moving patterns
  exist and the privacy guarantee needs to account for such correlations.
 Finally, there is no effective location release mechanism with differential privacy under such model.


\vspace{2mm}\noindent{\bf Contributions.}
In this paper, we propose a systematic solution to preserve location privacy with differential privacy guarantee. As shown in Figure \ref{fig_overview}, we consider a moving user with sensitive location stream who needs to share her locations to an untrusted location-based application host or other parties.
A user's true locations are only known by the user.
The ``sanitized'' locations released by the privacy mechanisms are observable to the service providers, as well
as adversaries.
To enable private location sharing, we address (and take advantage of) the temporal correlations, which can not be concealed from adversaries and hence are assumed to be public. Our contributions are summarized as follows.

\begin{figure}
\vspace{-0.65cm}
    \includegraphics[width=0.5\textwidth]{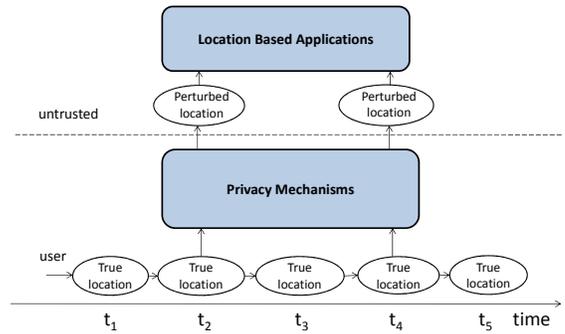}
  \vspace{-2.2cm}
    \caption{{\small Problem setting}}
    \label{fig_overview}
\end{figure}

First, we propose {``$\delta$-location set'' based differential privacy} to protect the true location at every timestamp.
The ``neighboring databases'' in standard differential privacy are any two databases under one operation: adding or removing a record (or a user).
However, this is not applicable in a variety of settings \cite{kifer2011no,Brodening-PET13}, which leads to new and extended notions such as  $\delta$-neighborhood \cite{Fang-neighborhood-CCS14} or event-level \cite{Dwork-continual-STOC10} differential privacy. In our problem, location changes between two consecutive timestamps are determined by temporal correlations modeled through a Markov chain \cite{Quantifying-location-privacy-SP2011, MaskIt-SIGMOD12}. Accordingly we propose a ``$\delta$-location set'' to include all probable locations (where the user might appear).
%
%
%
Intuitively, to protect the true location, we only need to ``hide'' it in the $\delta$-location set in which any pairs of locations are not distinguishable.

Second,
we show that the well known $\ell_1$-norm sensitivity in standard differential privacy fails to capture the geometric sensitivity in multidimensional space. Thus we propose a new notion, sensitivity hull, to capture the geometric meaning of sensitivity.  We also prove that the lower bound of error is determined by the sensitivity hull.

Third, we present an efficient location perturbation mechanism, called planar isotropic  mechanism (PIM), to achieve $\delta$-location set based differential privacy.
\begin{enumerate}[I]
\item
To our knowledge, PIM is the first optimal mechanism that can achieve the lower bound of differential privacy\footnote{The state-of-art differentially private mechanisms \cite{Geometry-Hardt-STOC10,Bhaskara-bound-STOC12} for linear queries can be $O(log(d))$ approximately optimal where $d$ is the number of dimensions. }. The novelty is that in two-dimensional space we efficiently transform the sensitivity hull to its isotropic position such that the optimality is guaranteed.
%
\item
We also implement PIM on real-world datasets, showing that it
preserves location utility for location based queries and
significantly outperforms the baseline Laplace mechanism (LM).
\end{enumerate}


\section{Preliminaries}
We denote scalar variables by normal letters, vectors by bold lowercase letters, and matrices by bold capital letters.
We use $||\cdot||_p$ to denote the $\ell_p$ norm,
$\textbf{x}[i]$ to denote the $i$th element of $\textbf{x}$,
$\mathbb{E}()$ to denote the expectation,
$\textbf{x}^T$ to denote the transpose of vector $\textbf{x}$.
Table \ref{tbl-symbols} summarizes some important symbols for convenience.
\begin{table}
\centering
\begin{tabular}{|c|c|}
\hline
$\textbf{s}_i$& a cell in a partitioned map, {\small  $ i=1,2,\cdots,m$ }\\\hline
$\textbf{u}$, $\textbf{x}$& location in state and map coordinates\\\hline
$\textbf{u}^*,\ \textbf{x}^*$ & true location of the user\\\hline
$\textbf{z}$&the released location in map coordinate\\\hline
$\textbf{p}_t^-$& prior probability (vector) at timestamp $t$\\\hline
$\textbf{p}_t^+$& posterior probability (vector) at timestamp $t$\\\hline
$\Delta\textbf{X}$ & $\delta$-location set\\\hline
$K$ & sensitivity hull \\\hline
\end{tabular}
\caption{Denotation}
\label{tbl-symbols}
\end{table}
\subsection{Two Coordinate Systems}
We use two coordinate systems, state coordinate and map coordinate, to represent a location for the Markov model and map model respectively. Denote $\mathcal{S}$ the domain of space.
If we partition $\mathcal{S}$ into the finest granularity, denoted by ``cell'',
then $\mathcal{S}=\{\textbf{s}_1,\textbf{s}_2,\cdots, \textbf{s}_m\}$ where each $\textbf{s}_i$ is a unit vector with the $i$th element being $1$ and other $m-1$ elements being $0$.
Each cell can represent a state (location) of a user.
On the other hand,
If we view the space
as a map with longitude and latitude, then a $2\times 1$ vector can be used to represent a  user's location $\textbf{x}$ with two components $\textbf{x}[1]$ and $\textbf{x}[2]$.
Figure \ref{Figure-map} shows an example
using these two coordinate systems.
If a user is in $\textbf{s}_7$, the state coordinate and map coordinate are shown as follows. Note that the two coordinate systems can be transformed to each other. We skip how to transform them and treat $\textbf{u}$ and $\textbf{x}$ interchangeable.
\begin{align*}
\textbf{u}=\textbf{s}_7=\left[
\begin{array}{cccccccccc}
0&0&0&0&0&0&1&0&\cdots&0\\
\end{array}
\right]\\
\textbf{x}=[2,4]^T\ with\ \textbf{x}[1]=2\ and\ \textbf{x}[2]=4
\end{align*}
As time evolves, the trace of a user can be represented by a series of locations, $\textbf{x}_1,\textbf{x}_2,\cdots,\textbf{x}_t$ in map coordinate or $\textbf{u}_1,\textbf{u}_2,\cdots,\textbf{u}_t$ in state coordinate.

\begin{figure}
\centering
\includegraphics[width=5cm]{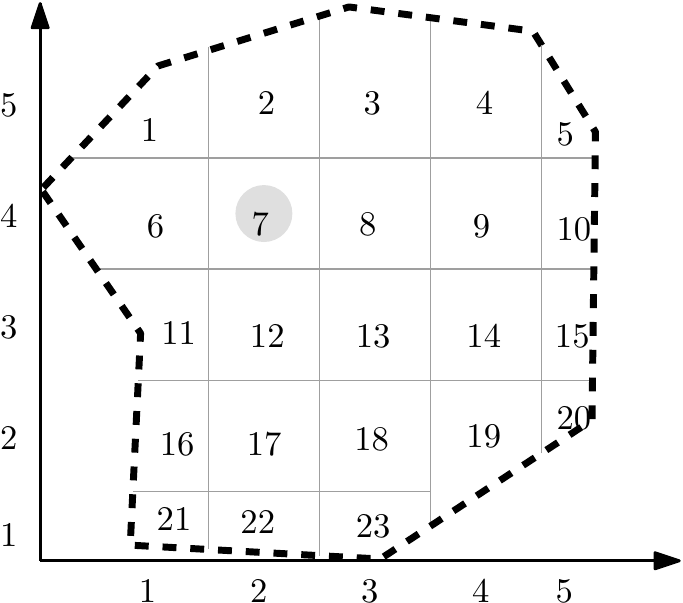}
\caption{{\small Two coordinate systems}}
\label{Figure-map}
\end{figure}


\subsection{Mobility and Inference Model}
Our approach uses Markov chain \cite{Quantifying-location-privacy-SP2011,MaskIt-SIGMOD12, Liao-learning-pattern} to model the temporal correlations between user's locations. Other constraints, such as road network, can also be captured by it. However, we note that Markov model, as well as any mobility models, may have limits in terms of predicability \cite{song2010limits}.  And we will discuss our solution to address these limits later.

In our problem setting, a user's true locations are unobservable, i.e. only known by the user.
The ``sanitized'' locations released by the perturbation mechanism are observable to the service provider, as well as adversaries.
Thus from an adversarial point of view, this process is a Hidden Markov Model (HMM).


At timestamp $t$, we use a vector $\textbf{p}_t$ to denote the probability distribution of a user's location (in each cell). Formally, \begin{align*}
\textbf{p}_t[i]=Pr(\textbf{u}^*_t=\textbf{s}_i)=Pr(\textbf{x}_t^*=the\ coordinate\ of\ \textbf{s}_i)
\end{align*}
where $\textbf{p}_t[i]$ is the $i$th element in $\textbf{p}_t$ and $\textbf{s}_i\in\mathcal{S}$.  In the example of Figure \ref{Figure-map}, if the user is located in cells $\{\textbf{s}_2,\textbf{s}_3,\textbf{s}_7,\textbf{s}_8\}$ with a uniform distribution, the probability vector can be expressed as follows.
\begin{align*}
\textbf{p}=\left[
\begin{array}{ccccccccccc}
0&0.25&0.25&0&0&0&0.25&0.25&0&\cdots&0\\
\end{array}
\right]
\end{align*}


\vspace{2mm}\noindent{\bf Transition Probability.}
We use a matrix $\textbf{M}$ to denote the probabilities that a user moves from one location to another.
Let $m_{ij}$ be the element in $\textbf{M}$ at $i$th row and $j$th column. Then $m_{ij}$ represents the probability that a user moves from cell $i$ to cell $j$. Given probability vector $\textbf{p}_{t-1}$, the probability at timestamp $t$ becomes
$\textbf{p}_{t}=\textbf{p}_{t-1}\textbf{M}$.
We assume the transition matrix $\textbf{M}$ is given in our framework.

\vspace{2mm}\noindent{\bf Emission Probability.}
If given a true location $\textbf{u}_t^*$, a mechanism releases a perturbed location $\textbf{z}_t$, then
the probability $Pr(\textbf{z}_{t}|\textbf{u}^*_{t}=\textbf{s}_i)$ is called ``emission probability'' in HMM.
This probability is determined by the release mechanism and should be transparent to adversaries.

\vspace{2mm}\noindent{\bf Inference and Evolution.}
At timestamp $t$, we use $\textbf{p}_t^-$ and $\textbf{p}_t^+$  to denote the prior and posterior probabilities of a user's location before and after observing the released $\textbf{z}_t$ respectively.
The prior probability can be derived by the posterior probability at  previous timestamp $t-1$
 and the Markov transition matrix as
$
\textbf{p}_{t}^-=\textbf{p}_{t-1}^+\textbf{M}
$.
Given $\textbf{z}_t$,
the posterior probability can be computed using Bayesian inference as follows. For each cell $\textbf{s}_i$:
\begin{align}
\textbf{p}_{t}^+[i]=Pr(\textbf{u}^*_{t}=\textbf{s}_i|\textbf{z}_{t})=
\frac{Pr(\textbf{z}_{t}|\textbf{u}^*_{t}=\textbf{s}_i)\textbf{p}_{t}^-[i]}{\mathop\sum\limits_{j}Pr(\textbf{z}_{t}|\textbf{u}^*_{t}=\textbf{s}_j)\textbf{p}_{t}^-[j]}
\label{eqn-posterior}
\end{align}

%

The inference at each timestamp can be efficiently computed by forward-backward algorithm in HMM, which will be incorporated in our framework.
%

\subsection{Differential Privacy and Laplace Mechanism}
\begin{definition}[Differential Privacy]
\label{def-stadard-dp}
A mechanism $\mathcal{A}$ satisfies $\epsilon$-differential privacy if for any output $\textbf{z}$ and any
\underline{neighboring databases} $\textbf{x}_1$ and $\textbf{x}_2$ where $\textbf{x}_2$ can be obtained from $\textbf{x}_1$ \underline{by either adding or removing one record}\footnote{This is the definition of unbounded differential privacy \cite{kifer2011no}. Bounded neighboring databases can be obtained by changing the value of exactly one record.}, the following holds
\begin{align*}
\frac{Pr(\mathcal{A}(\textbf{x}_1)=\textbf{z})}{Pr(\mathcal{A}(\textbf{x}_2)=\textbf{z})}\leq e^{\epsilon}
\end{align*}
\end{definition}

Laplace mechanism \cite{Dwork-calibrating} is commonly used in the literature to achieve differential privacy. It is built on the  $\ell_1$-norm sensitivity, defined as follows.
\begin{definition}[$\ell_1$-norm Sensitivity]
\label{def-standard-sensitivity}
For any query $f(\textbf{x})$: $\textbf{x}\rightarrow \mathbb{R}^d$, $\ell_1$-norm sensitivity is the \underline{maximum $\ell_1$ norm} of $f(\textbf{x}_1)-f(\textbf{x}_2)$ where $\textbf{x}_1$ and $\textbf{x}_2$ are any two instances in \underline{neighboring databases}.
\begin{align*}
S_f=\mathop{max}\limits_{\textbf{x}_1,\textbf{x}_2 \in  \textrm{ neighboring databases}}||f(\textbf{x}_1)-f(\textbf{x}_2)||_1
\end{align*}
where $||\cdot||_1$ denotes $\ell_1$ norm.
\end{definition}
A query can be answered by $f(\textbf{x})+Lap(S_f /\epsilon)$ to achieve $\epsilon$-differential privacy, where $Lap()\in \mathbb{R}^d$ are i.i.d. random noises drawn from Laplace distribution.

\subsection{Utility Metrics}
To measure the utility of the perturbed locations, we follow the analysis of metrics in \cite{Quantifying-location-privacy-SP2011} and adopt the expected distance (called ``correctness'' in \cite{Quantifying-location-privacy-SP2011}) between the true location $\textbf{x}^*$ and the released location $\textbf{z}$ as our utility metric. 
\begin{align}
\label{equation-util}
\textsc{Error}=\sqrt{\mathbb{E}||\textbf{z}-\textbf{x}^*||_2^2}
\end{align}

In addition, we also study the utility of released locations in the context of location based queries such as finding nearest $k$ Points of Interest (POI).  We will use precision and recall as our utility metrics in this context which we will explain later in the experiment section.
%

\subsection{Convex Hull}
Our proposed sensitivity hull is based on the well studied notion of convex hull in computational geometry.  We briefly provide the definition here.
\begin{definition}[Convex Hull]
Given a set of points $\textbf{X}=\{\textbf{x}_1,\textbf{x}_2,\cdots,\textbf{x}_n\}$, the convex hull of $\textbf{X}$ is the smallest convex set that contains $\textbf{X}$.
\end{definition}
Note that a convex hull in two-dimensional space is a polygon (also called ``convex polygon'' or ``bounding polygon'').
%
Because it is well-studied and implementations are also available \cite{O'Rourke:1998:CGC:521378}, we skip the details and only use $Conv(\textbf{X})$ to denote the function of finding the convex hull of $\textbf{X}$.

\section{Privacy Definition}
To apply differential privacy in the new  setting of continual location sharing,
we conduct a rigorous privacy analysis and
  propose $\delta$-location set based differential privacy in this section.


%

\subsection{$\boldsymbol\delta$-Location Set}
The nature of differential privacy is to ``hide'' a true database in ``neighboring databases'' when releasing a noisy answer from the database. In standard differential privacy, neighboring databases are obtained by either adding or removing a record (or a user) in a database.
 However, this is not applicable in our problem.
  Thus we propose a new notion, $\delta$-location set, to hide the true location at every timestamp.


\vspace{2mm}\noindent{\bf Motivations.}
We first discuss the intuitions that motivates our definition.

First, because the Markov model is assumed to be public, adversaries can make inference using previously released locations.
Thus we, as data custodians in a privacy mechanism, can also track the temporal inference at every timestamp.
At any timestamp, say $t$, a prior probability of the user's current location can be derived, denoted by $\textbf{p}_{t}^-$ as follows.
\begin{align*}
\textbf{p}_t^-[i]=Pr(\textbf{u}_t^*=\textbf{s}_i|\textbf{z}_{t-1},\cdots,\textbf{z}_1)
\end{align*}
Similar to hiding a database in its neighboring databases, we can
 hide the user's true location in possible locations (where $\textbf{p}_t^-[i]>0$). On the other hand, hiding the true location in any impossible locations (where $\textbf{p}_t^-[i]=0$) is a lost cause because the adversary already knows the user cannot be there.

Second, a potential shortcoming of Markov model is that the probability distribution may converge to a stationary distribution after a long time (e.g. an ergodic Markov chain). Intuitively, a user's possible locations can eventually cover the entire map given enough time. Hiding a location in a large area may yield a significantly perturbed location that is not useful at all.

According to \cite{nature-gonzalez08}, moving patterns of human have a ``high degree'' of temporal and spatial regularity. Hence if people tend to go to a number of highly frequented locations, our privacy notion should also emphasize protecting the more probable locations in Markov model.

\vspace{2mm}\noindent{\bf $\boldsymbol\delta$-Location Set.}
%
With above motivations, we define $\delta$-location set at any timestamp $t$, denoted as $\Delta\textbf{X}_t$.
Essentially, $\delta$-location set reflects
a set of probable locations the user might appear
(by leaving out the locations of small probabilities).

\begin{definition}[$\delta$-Location Set]
\label{def-indist-set}
Let $\textbf{p}_{t}^{-}$ be the prior probability of a user's location at timestamp $t$.
$\delta$-location set is a set containing minimum number of locations that have prior probability sum no less than $1-\delta$.
\begin{align*}
\Delta\textbf{X}_t = min \{\textbf{s}_i| \mathop\sum\limits_{\textbf{s}_i}\textbf{p}_t^-[i] \geq 1-\delta\}
\end{align*}
\end{definition}
For example, if $\textbf{p}_t^-=[0.3,0.4,0.05,0.2,0.03,0.02]$ corresponding to $[\textbf{s}_1,\textbf{s}_2,\textbf{s}_3,\textbf{s}_4,\textbf{s}_5,\textbf{s}_6]$, then $\Delta\textbf{X}=\{\textbf{s}_2,\textbf{s}_1,\textbf{s}_4\}$ when $\delta=0.1$; $\Delta\textbf{X}=\{\textbf{s}_2,\textbf{s}_1,\textbf{s}_4,\textbf{s}_3\}$ when $\delta=0.05$.

Note that if $\delta=0$ the location set contains all possible locations. Thus $0$-location set preserves the strongest privacy.

\vspace{2mm}\noindent{\bf Drift.}
Because $\delta$-location set represents the most probable locations,
a drawback is that the true location may be filtered out with a small probability (technically, $Pr(\textbf{x}^*\notin \Delta\textbf{X})=\delta$). Same situation may also occur if the Markov model is not accurate enough in practice due to its limit in predicability, as we mentioned earlier. Therefore, we denote this phenomenon as ``drift'' and handle it with the following surrogate approach.

\vspace{2mm}\noindent{\bf Surrogate.}
When a drift happens, we use a surrogate location in $\Delta\textbf{X}$ as if it is the ``true'' location in the release mechanism.

\begin{definition}[Surrogate]
A surrogate $\tilde{\textbf{x}}$ is the \mbox{cell} in $\Delta\textbf{X}$ with the shortest distance to the true location $\textbf{x}^*$.
\begin{align*}
\tilde{\textbf{x}}=\mathop{argmin}\limits_{\textbf{s}\in\Delta\textbf{X}} dist(\textbf{s},\textbf{x}^*)
\end{align*}
where function $dist()$ denotes the distance between two cells.
\end{definition}
Note that the surrogate approach does not leak any information of the true location, explained as follows. If $\textbf{x}^*\in \Delta\textbf{X}$, then $\textbf{x}^*$ is protected in $\Delta\textbf{X}$; if not, $\tilde{\textbf{x}}$ is protected in $\Delta\textbf{X}$. Using surrogate does not reveal whether $\textbf{x}^*$ is in $\Delta\textbf{X}$ or not. Because in any location release mechanisms $\textbf{x}^*$ is treated as a black box (oblivious to adversaries), replacing $\textbf{x}^*$ with $\tilde{\textbf{x}}$ is also a black box. We formally prove the privacy guarantee in Theorem \ref{theo-alg-frame}.



In some cases, a surrogate may be far from the true location. Then the released location may not be useful. Therefore,
we also measure the distance between released location and true location in our experiment to reflect the long-term effect of surrogate.

%

\subsection{Differential Privacy on $\boldsymbol\delta$-Location Set}
We define  differential privacy based on $\delta$-location set,
with the intuition that the released location $\textbf{z}_t$ will not help an adversary to differentiate any instances in the $\delta$-location set.
\begin{definition}[Differential Privacy]
\label{def-danamic-dp}
At any timestamp $t$,
a randomized mechanism $\mathcal{A}$ satisfies
$\epsilon$-differential privacy on $\delta$-location set $\Delta\textbf{X}_t$ if, for any output $\textbf{z}_t$
 and any two locations $\textbf{x}_1$ and $\textbf{x}_2$ in $\Delta\textbf{X}_t$,
the following holds:
\begin{equation}
\label{eqn-locpriv}
\frac{Pr(\mathcal{A}(\textbf{x}_1)=\textbf{z}_t)}{Pr(\mathcal{A}(\textbf{x}_2)=\textbf{z}_t)}\leq e^{\epsilon}
\end{equation}
\end{definition}

Above definition  guarantees the true location is always protected in $\delta$-location set at every timestamp.
In another word, the released location $\textbf{z}_t$ is differentially private at timestamp $t$ for continual location sharing under temporal correlations. For other application settings, like protecting the trace or trajectory of a user, we defer the investigation to future works.

%

\subsection{Adversarial Knowledge}
In reality, there may be a variety of adversaries with \mbox{all} kinds of prior knowledge.
Accordingly, we prove that for the problem of continual location sharing
differential privacy is equivalent to adversarial privacy, first studied in \cite{Rastogi-adversarial-privacy}.
\begin{definition}[Adversarial Privacy]
\label{def-advpriv}
A mechanism is $\epsilon$-adversarially private if for any location $\textbf{s}_i\in\mathcal{S}$, any output $\textbf{z}$ and any adversaries knowing the true location is in $\Delta\textbf{X}$, the following holds:
\begin{align}
\label{eqn-adv-priv}
\frac{Pr(\textbf{u}^*_t=\textbf{s}_i|\textbf{z}_t)}{Pr(\textbf{u}^*_t=\textbf{s}_i)}\leq e^{\epsilon}
\end{align}
where $Pr(\textbf{u}^*_t=\textbf{s}_i)$ and $Pr(\textbf{u}^*_t=\textbf{s}_i|\textbf{z}_t)$ are the prior and posterior probabilities of any adversaries.
\end{definition}
We can show Definition \ref{def-danamic-dp} is equivalent to adversarial privacy for continual location sharing, which can be derived from the PTLM property \cite{Rastogi-adversarial-privacy}.
\begin{lemma}
For the problem of continual location sharing, the following properties hold:
\begin{enumerate}[I]
\item
There exists only one true location at a timestamp:
\begin{align*}
Pr(\textbf{u}^*=\textbf{s}\cap \textbf{u}^*=\textbf{s}')=0, \textrm{ for any locations } \textbf{s},\textbf{s}'\in\mathcal{S}
\end{align*}
\item
Let $\textbf{S}\subseteq\mathcal{S}$ be an area and $Pr(\textbf{S})$ be the probability that the user is in $\textbf{S}$. For any two areas $\textbf{S}$ and $\textbf{S}'$,
\begin{align*}
Pr(\textbf{S})Pr(\textbf{S}')\geq Pr(\textbf{S}\cap \textbf{S}') Pr(\textbf{S}\cup \textbf{S}')
\end{align*}
\end{enumerate}
\end{lemma}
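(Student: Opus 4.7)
The two properties to prove are essentially structural facts about the location random variable $\textbf{u}^*$ over the discrete cell domain $\mathcal{S}$, and the proof should be short; the main task is to spell out why they hold in our continual-location-sharing setup (as opposed to an arbitrary database setting where they would fail).

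For property I, the plan is to appeal directly to the modeling assumption: at any fixed timestamp $t$, the user occupies exactly one cell $\textbf{s}\in\mathcal{S}$. Hence for distinct $\textbf{s}\neq \textbf{s}'$ the events $\{\textbf{u}^*=\textbf{s}\}$ and $\{\textbf{u}^*=\textbf{s}'\}$ are mutually exclusive, so their joint probability is zero. This is in contrast to the standard differential-privacy setting, where the same tuple can in principle appear in two different databases with correlated presence, which is exactly why the PTLM-style property typically fails there.

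For property II, the plan is to reduce it to a one-line inequality by using property I to split each probability into a sum of disjoint cell probabilities. Write $\textbf{S}\cup\textbf{S}'$ as the disjoint union $(\textbf{S}\setminus\textbf{S}')\sqcup(\textbf{S}\cap\textbf{S}')\sqcup(\textbf{S}'\setminus\textbf{S})$, and set
\begin{align*}
A=\sum_{\textbf{s}\in\textbf{S}\setminus\textbf{S}'}Pr(\textbf{u}^*=\textbf{s}),\quad B=\sum_{\textbf{s}\in\textbf{S}\cap\textbf{S}'}Pr(\textbf{u}^*=\textbf{s}),\quad C=\sum_{\textbf{s}\in\textbf{S}'\setminus\textbf{S}}Pr(\textbf{u}^*=\textbf{s}).
\end{align*}
Property I makes these sums well-defined representations of $Pr(\textbf{S})=A+B$, $Pr(\textbf{S}')=B+C$, $Pr(\textbf{S}\cap\textbf{S}')=B$, and $Pr(\textbf{S}\cup\textbf{S}')=A+B+C$. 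Substituting into the claimed inequality and expanding yields $(A+B)(B+C)-B(A+B+C)=AC$, and $AC\geq 0$ because $A$ and $C$ are sums of non-negative probabilities.

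I do not expect a real obstacle here; the only subtle point is making sure the decomposition in step two genuinely uses property I (for countable additivity across disjoint singleton events) rather than smuggling in independence. I will therefore be careful to state the disjoint decomposition explicitly and invoke finite additivity of $Pr$, and to flag that both properties are consequences of the single-occupancy assumption of our Markov-chain mobility model, which is what ultimately lets the subsequent theorem lift $\delta$-location-set differential privacy to adversarial privacy via the PTLM framework of \cite{Rastogi-adversarial-privacy}.
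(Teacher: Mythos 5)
Your proof is correct. Note that the paper itself gives no proof of this lemma at all --- it is stated bare, as a consequence of the single-occupancy modeling assumption, and used only to invoke the PTLM framework of \cite{Rastogi-adversarial-privacy} for Theorem \ref{theo-equav} --- so your argument is not a different route but a filling-in of exactly the reasoning the paper leaves implicit: mutual exclusivity of the singleton events for property I, and the disjoint decomposition $Pr(\textbf{S})=A+B$, $Pr(\textbf{S}')=B+C$, $Pr(\textbf{S}\cap\textbf{S}')=B$, $Pr(\textbf{S}\cup\textbf{S}')=A+B+C$ with $(A+B)(B+C)-B(A+B+C)=AC\geq 0$ for property II. The one small point worth flagging is that property I as literally stated quantifies over ``any locations $\textbf{s},\textbf{s}'$'' without requiring $\textbf{s}\neq\textbf{s}'$; your restriction to distinct cells is the intended reading, since for $\textbf{s}=\textbf{s}'$ the claim would assert $Pr(\textbf{u}^*=\textbf{s})=0$.
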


\begin{theorem}
\label{theo-equav}
For the problem of continual location sharing,
Definition \ref{def-danamic-dp} is equivalent to Definition \ref{def-advpriv}.
\end{theorem}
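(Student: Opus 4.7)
The plan is to prove the two implications separately, with the preceding lemma supplying the two structural properties (mutual exclusivity of the atomic events $\{\textbf{u}^*_t=\textbf{s}_j\}$ and the submodular-type inequality on areas) that together put the class of admissible adversarial priors into the PTLM framework of Rastogi et al. Once that is in hand, both directions reduce to a Bayes-rule calculation.

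For Definition \ref{def-danamic-dp} $\Rightarrow$ Definition \ref{def-advpriv}, I would fix an arbitrary adversary whose prior is supported on $\Delta\textbf{X}_t$ (the adversary knows the user lies there). Property I of the lemma says the events $\{\textbf{u}^*_t=\textbf{s}_j\}_{\textbf{s}_j\in\Delta\textbf{X}_t}$ partition the prior mass, so by Bayes' rule
\[
\frac{Pr(\textbf{u}^*_t=\textbf{s}_i\mid \textbf{z}_t)}{Pr(\textbf{u}^*_t=\textbf{s}_i)}
= \frac{Pr(\textbf{z}_t\mid \textbf{u}^*_t=\textbf{s}_i)}{\sum_{j} Pr(\textbf{z}_t\mid \textbf{u}^*_t=\textbf{s}_j)\,Pr(\textbf{u}^*_t=\textbf{s}_j)}.
\]
Applying Definition \ref{def-danamic-dp} pairwise gives $Pr(\textbf{z}_t\mid \textbf{u}^*_t=\textbf{s}_j) \geq e^{-\epsilon} Pr(\textbf{z}_t\mid \textbf{u}^*_t=\textbf{s}_i)$ for every $\textbf{s}_j\in\Delta\textbf{X}_t$, so the denominator is at least $e^{-\epsilon} Pr(\textbf{z}_t\mid \textbf{u}^*_t=\textbf{s}_i)$, and the ratio is bounded by $e^{\epsilon}$, which is exactly Definition \ref{def-advpriv}.

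For Definition \ref{def-advpriv} $\Rightarrow$ Definition \ref{def-danamic-dp}, I would use a two-point prior trick. Given any $\textbf{x}_1,\textbf{x}_2\in\Delta\textbf{X}_t$ and any $\alpha\in(0,1)$, consider the adversary with prior $Pr(\textbf{u}^*_t=\textbf{x}_1)=1-\alpha$ and $Pr(\textbf{u}^*_t=\textbf{x}_2)=\alpha$, supported on $\Delta\textbf{X}_t$. Property II of the lemma guarantees that this (and indeed any) distribution on $\Delta\textbf{X}_t$ is a legitimate member of the PTLM class, so Definition \ref{def-advpriv} applies. Writing the adversarial bound at $\textbf{s}_i=\textbf{x}_2$ and simplifying via Bayes yields
\[
\frac{Pr(\mathcal{A}(\textbf{x}_2)=\textbf{z}_t)}{(1-\alpha)\,Pr(\mathcal{A}(\textbf{x}_1)=\textbf{z}_t) + \alpha\,Pr(\mathcal{A}(\textbf{x}_2)=\textbf{z}_t)} \;\leq\; e^{\epsilon}.
\]
Letting $\alpha\to 0^{+}$ gives $Pr(\mathcal{A}(\textbf{x}_2)=\textbf{z}_t) \leq e^{\epsilon}\,Pr(\mathcal{A}(\textbf{x}_1)=\textbf{z}_t)$, and swapping the roles of $\textbf{x}_1$ and $\textbf{x}_2$ handles the reverse ratio.

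The main obstacle I anticipate is justifying the reverse direction rigorously: Definition \ref{def-advpriv} quantifies over \emph{any} adversary consistent with the Markov-based location process, and one needs to be sure that the highly concentrated two-point prior above is actually admissible. This is precisely the role of the lemma's second property, which certifies the PTLM condition on the admissible prior class and thereby legitimizes both the two-point construction and the $\alpha\to 0^{+}$ limit. Once this closure is established, the remaining algebra is routine Bayes-rule manipulation.
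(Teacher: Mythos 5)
Your proof is correct, but it is considerably more explicit than what the paper actually does: the paper offers no derivation at all for this theorem, instead stating the two structural properties as a lemma and asserting that the equivalence ``can be derived from the PTLM property'' of Rastogi et al. Your argument fills in exactly the content that citation is standing in for. The forward direction (Definition~\ref{def-danamic-dp} $\Rightarrow$ Definition~\ref{def-advpriv}) via Bayes' rule and the pairwise bound $Pr(\textbf{z}_t\mid\textbf{u}^*_t=\textbf{s}_j)\geq e^{-\epsilon}Pr(\textbf{z}_t\mid\textbf{u}^*_t=\textbf{s}_i)$ on the denominator is the standard and correct calculation. The reverse direction via the two-point prior and the limit $\alpha\to 0^+$ is also sound, and you are right to flag admissibility of that prior as the crux; it is worth noting that Property~II is automatically satisfied by \emph{every} distribution over single cells (writing $a=Pr(\textbf{S})$, $b=Pr(\textbf{S}')$, $c=Pr(\textbf{S}\cap\textbf{S}')$, mutual exclusivity of the atoms gives $ab-c(a+b-c)=(a-c)(b-c)\geq 0$), so the lemma really does certify the whole prior class, including your concentrated two-point adversary. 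The trade-off between the two routes: the paper's citation-based argument is shorter and inherits generality from the PTLM framework, while your version is self-contained, elementary, and makes visible where each of the lemma's two properties is actually used --- Property~I for the partition in the Bayes denominator, Property~II for closure of the adversary class. One cosmetic point: Definition~\ref{def-advpriv} quantifies over all $\textbf{s}_i\in\mathcal{S}$, and for $\textbf{s}_i\notin\Delta\textbf{X}_t$ the prior (and hence posterior) is zero under your adversaries, so the ratio should be treated as vacuously bounded; this is worth a sentence but is not a gap.
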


Definition \ref{def-advpriv} limits the information gain for adversaries knowing the condition $\textbf{x}_t^*\in\Delta\textbf{X}$. If $\textbf{x}_t^*\notin\Delta\textbf{X}$, our framework reveals no extra information due to the surrogate approach.
Thus adversarial knowledge can be bounded, discussed as follows.

\vspace{2mm}
\noindent{\bf Standard Adversary.} For adversaries who have exactly the same Markov model and keep tracking all the released locations, their knowledge is also the same as our model (with location inference in Section \ref{sec-loc-infr}). In this case,
differential privacy and adversarial privacy are guaranteed, and we know exactly the adversarial knowledge, which in fact can be controlled by adjusting $\epsilon$.

\vspace{2mm}
\noindent{\bf Weak Adversary.} For adversaries who have little knowledge about the user, the released locations may help them obtain more information. With enough time to evolve, they may converge to standard adversaries eventually. But their adversarial knowledge will not exceed standard adversaries.

\vspace{2mm}
\noindent{\bf Strong Adversary.} For adversaries who have additional information, the released location from  differential privacy may not be very helpful. Specifically, a strong adversary with auxiliary information may have more accurate prior knowledge. However, if the adversary cannot identify the true location so that $Pr(\textbf{u}^*=\textbf{s}_i)=1$ for any $\textbf{s}_i\in\mathcal{S}$, Definition \ref{def-advpriv} is always satisfied.
%
On the other hand, if an ``omnipotent'' adversary already knows the true location, then
no mechanism can actually protect location privacy.
%

\subsection{Comparison with Other Definitions}
\label{sec-comparison}
\noindent{\bf Differential Privacy.}
Since the concept of neighboring databases is not generally applicable (as discussed earlier), induced neighborhood \cite{kifer2011no}, metric based neighborhood \cite{Brodening-PET13} and $\delta$-neighborhood \cite{Fang-neighborhood-CCS14} were proposed. The general idea is that the neighborhood can be formulated by some constraints of data or distance (metric) functions instead of adding or removing a record. However, applying these neighborhood based differential privacy is not feasible in our model because there is only one sole tuple (location) at each timestamp without any ``neighbors''. Hence we define $\delta$-location set to extend the notion of ``neighborhood''.

\vspace{2mm}\noindent{\bf Geo-indistinguishability.}
Another closely related definition is the Geo-indistinguishability \cite{geo-indistinguishability-CCS13}, which protects a user's location within a radius (circle) with a ``generalized differential privacy'' guarantee.
In other words, the neighborhood is defined with Euclidian distance.
Nevertheless, such spatial perturbation technique may not be reasonable in reality. For example, as shown in Figure \ref{Figure-map1}, the ``generalized differential privacy'' can still be breached given the road network constraint or user's moving pattern (which is represented by Markov model).
Thus location privacy must be protected under temporal correlations.

\vspace{2mm}\noindent{\bf Blowfish privacy.}
Our privacy definition shares the same insight as the unconstrained Blowfish privacy framework \cite{Blowfish-SIGMOD14} in statistical
data release context, which uses secret pairs and privacy policy to build a subset of possible
database instances as ``neighbors''. We show that $\delta$-location set based differential privacy can be instantiated as a special case of unconstrained Blowfish privacy at each timestamp.
\begin{theorem}
Let $\mathcal{S}$ be the domain of all possible locations. Let $G$ be a complete graph where each node denotes a location in $\mathcal{S}$. Let $\Delta\textbf{X}$ be a condition such that $\textbf{x}^*\in\Delta\textbf{X}$. At each timestamp, Definition \ref{def-danamic-dp} is equivalent to $\{\epsilon,\{\mathcal{S},G,\Delta\textbf{X}\}\}$-Blowfish privacy.
\end{theorem}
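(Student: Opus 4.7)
The plan is to unfold both definitions and observe that under the stated policy they impose exactly the same indistinguishability constraints on the mechanism. Since the statistical-release formulation of Blowfish privacy quantifies over pairs of database instances that are ``policy neighbors'' with respect to $P=\{\mathcal{S},G,\Delta\textbf{X}\}$, whereas Definition~\ref{def-danamic-dp} quantifies over pairs of locations in $\Delta\textbf{X}$, the theorem reduces to showing that these two classes of pairs coincide in our single-timestamp setting.

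First, I would fix the setting: at a single timestamp, the ``database'' in the Blowfish sense is just one location $\textbf{x}^*\in\mathcal{S}$, so an instance is a single node of $G$. Next, I would recall the definition of unconstrained Blowfish privacy under policy $P$: a mechanism $\mathcal{A}$ is $\{\epsilon,P\}$-Blowfish private iff for every pair of instances $\textbf{x}_1,\textbf{x}_2$ that (i) form an edge in the discriminative-secret graph $G$ and (ii) both satisfy the constraint set (here $\Delta\textbf{X}$), and for every output $\textbf{z}$, one has $Pr(\mathcal{A}(\textbf{x}_1){=}\textbf{z})\le e^{\epsilon}Pr(\mathcal{A}(\textbf{x}_2){=}\textbf{z})$.

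Then I would carry out the two inclusions. For the forward direction, suppose $\mathcal{A}$ satisfies Definition~\ref{def-danamic-dp} on $\Delta\textbf{X}$. Take any Blowfish-neighbor pair $(\textbf{x}_1,\textbf{x}_2)$. By (ii) both lie in $\Delta\textbf{X}$, so Definition~\ref{def-danamic-dp} immediately gives the required ratio bound; the edge condition (i) is only a filter and adds no restriction. For the converse, suppose $\mathcal{A}$ is $\{\epsilon,P\}$-Blowfish private. Take any $\textbf{x}_1,\textbf{x}_2\in\Delta\textbf{X}$. Since $G$ is the complete graph on $\mathcal{S}$, the pair $(\textbf{x}_1,\textbf{x}_2)$ is an edge of $G$, so both conditions (i) and (ii) hold, and Blowfish privacy yields exactly the inequality of Definition~\ref{def-danamic-dp}. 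The case $\textbf{x}_1=\textbf{x}_2$ is trivial.

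The main obstacle, which is more conceptual than technical, is lining up the ``database'' viewpoint of Blowfish (instances of a multi-record table satisfying a set of constraints) with the per-timestamp single-location viewpoint used here; once the constraint $\Delta\textbf{X}$ is interpreted as restricting the single ``record'' $\textbf{x}^*$ to that set, and the completeness of $G$ is used to eliminate the graph filter, the two definitions become literally the same quantified statement and no probabilistic calculation is required.
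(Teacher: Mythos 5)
Your proposal is correct: unfolding the Blowfish neighbor relation for a single-tuple ``database'' under the complete graph $G$ and the constraint $\textbf{x}^*\in\Delta\textbf{X}$ shows the neighbor pairs are exactly the pairs of locations in $\Delta\textbf{X}$, which is precisely the quantification in Definition~\ref{def-danamic-dp}. The paper states this theorem without any proof, treating the equivalence as essentially definitional, so your argument is the natural one and fills that gap rather than diverging from an existing derivation.
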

%

\subsection{Discussion}
\noindent{\bf Learning Markov Model.}
Existing methods such as the knowledge construction module in \cite{Quantifying-location-privacy-SP2011} or EM method in HMM can be used to acquire the transition matrix $\textbf{M}$, which will not be discussed in this paper. However, depending on the power of adversaries, two typical $\textbf{M}$ can be learned.
\begin{enumerate}[I]
\item
Popular $\textbf{M}$ can be learned from public transit data.
\item
Personal $\textbf{M}$ can be derived
with personal transit data\footnote{For example, mobile apps, like Google Now, may have a user's location history to derive the user's moving pattern.}.
\end{enumerate}
No matter which $\textbf{M}$ is adopted in our framework,
the adversarial knowledge is always bounded, as discussed before.
 However, the usefulness of released locations may vary for different adversaries. We also compare the two models in our experiments.

%

\vspace{2mm}
\noindent{\bf When Markov Model is not Accurate.}
When the location data does not exactly follow a Markov model or the learned Markov model is not accurate enough, our framework still guarantees differential privacy, but may generate more error. For example, when a user starts to go to a new place for the first time, which has not been reflected in the learned Markov model, a drift happens. In this case, our mechanism still works because we can handle the drift case. Nevertheless, the utility may be downgraded, especially when the new place is far from the ``probable'' locations.

%
\vspace{2mm}\noindent{\bf Composibility.}
Since we only need to release one perturbed location at a timestamp, the sequential composition \cite{McSherry-PINQ} is not applicable. {Otherwise, for multiple releases at a timestamp the composition of $\epsilon$ holds.}
%
On the other hand, given a series of perturbed locations $\{\textbf{z}_1,\textbf{z}_2,\cdots,\textbf{z}_t\}$ released from timestamp $1$ to $t$, a new problem is how to protect and measure the overall privacy guarantee of the entire trace.  We defer this to future work.

\section{Sensitivity Hull}
\label{sec-sensitivityhull}
The notion of sensitivity indicates the differences between any two query answers from two instances in neighboring databases. However, in multidimensional space, we show that $\ell_1$-norm sensitivity (in Definition \ref{def-standard-sensitivity}) fails to capture the exact sensitivity. Thus we propose a new notion, sensitivity hull. Note that sensitivity hull is an independent notion from the context of location privacy and can be plugged in any data-independent perturbation mechanisms.
\subsection{Sensitivity Hull}
\label{sec-shull}
To derive the meaning of sensitivity, let us consider the following example in traditional setting of differential privacy.
\begin{example}
\label{example-DPCHull3}
Assume we have an employee table $T$ with attributes gender and income. Then we answer the following query workload $f$:
\begin{flalign*}
&f_1:\ Select\ count(*)\ from\ T\ where\ gender=``female"\\
&f_2:\ Select\ count(*)\ from\ T\ where\ income>50000
\end{flalign*}
\end{example}
Let $\textbf{x}_1$ and $\textbf{x}_2$ be neighboring databases so that  $\textbf{x}_1$ is equal to $\textbf{x}_2$ adding or removing {\em a random user}.  Suppose $f(\textbf{x}_2)=[10,20]^T$. Then the possible answers for $f(\textbf{x}_1)$ could be
one of the following columns, from which $\Delta f$ can be derived.
\begin{flalign*}
\hspace{.5cm}
&f(\textbf{x}_1)=
\small
\left[
\begin{array}{ccccccc}
11 & 10 & 10 & 11 & 9 & 9 & 10\\
21 & 21 & 20 & 20 & 20& 19 & 19
\end{array}
\right]&
\\
&\Delta f
=f(\textbf{x}_1)-f(\textbf{x}_2)
\small
=\left[
\begin{array}{ccccccc}
1 & 0 & 0 & 1 & -1 & -1 & 0\\
1 & 1 & 0 & 0 & 0 & -1 & -1
\end{array}
\right]&
\\
&S_f=max||\Delta f||_1=2 \ \ (\ell_1 \textrm{-norm sensitivity})&
\end{flalign*}

\begin{figure}
\begin{subfigure}{0.48\textwidth}
\centering
\includegraphics[width=5cm]{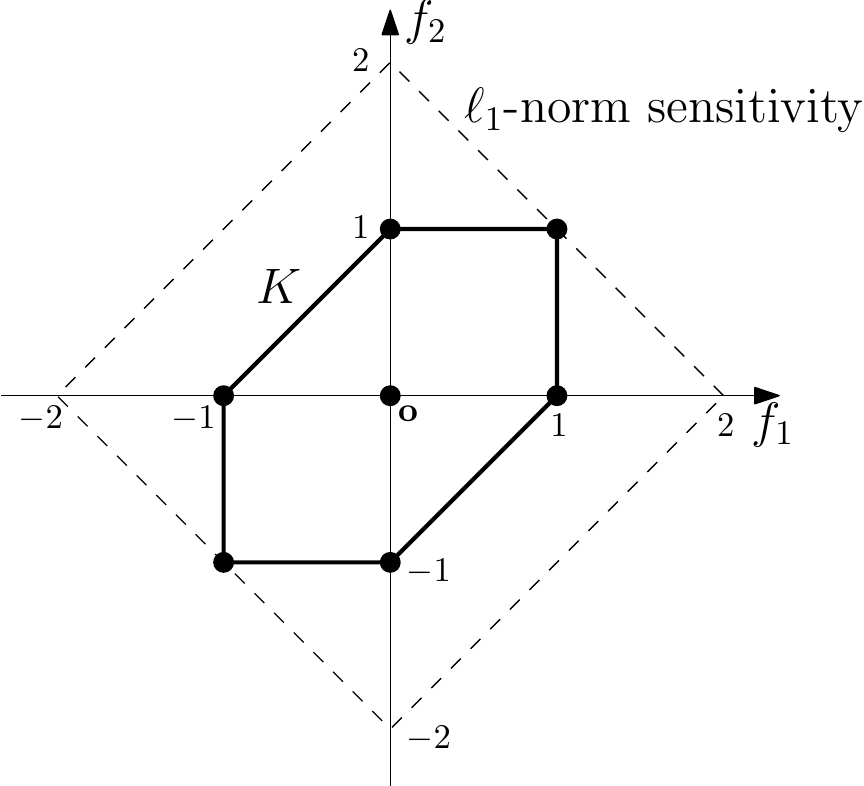}
\label{Figure-DPCHull}
\end{subfigure}
\caption{{\small Sensitivity hull of Example \ref{example-DPCHull3}. Solid lines denote the sensitivity hull $K$; dashed lines are the $\ell_1$-norm sensitivity. }}
\label{Figure-DPCHull3}
\end{figure}


In Figure \ref{Figure-DPCHull3}, the dashed lines form the set of $||\Delta f||_1=2$
because the $\ell_1$-norm sensitivity is $2$. However, $\Delta f$ only consists of all the ``$\bullet$'' points. It is obvious that the $\ell_1$-norm sensitivity exaggerates the ``real sensitivity''.
To capture the geometric representation of $\Delta f$ in multidimensional space, we define sensitivity hull (the solid lines in Figure \ref{Figure-DPCHull3}) as follows.
\begin{definition}[Sensitivity Hull]
\label{def-shull1}
The sensitivity \\hull of a query $f$ is the convex hull of $\Delta f$ where $\Delta f$ is the set of $f(\textbf{x}_1)-f(\textbf{x}_2)$ for any pair $\textbf{x}_1$ and $\textbf{x}_2$ in $\delta$-location set $\Delta\textbf{X}$.
\begin{flalign*}
\hspace{2.4cm}
K&=Conv\left(\Delta f\right)&
\\
\Delta f&=\mathop\cup\limits_{\textbf{x}_1,\textbf{x}_2\in \Delta\textbf{X}}
\left( f({\textbf{x}_1})-f({\textbf{x}_2}) \right)&
\end{flalign*}
\end{definition}
\begin{theorem}
A sensitivity hull $K$ is centrally symmetric: if $\textbf{v}\in K$ then $-\textbf{v}\in K$.
\end{theorem}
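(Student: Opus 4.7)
The plan is to prove central symmetry in two steps: first establish that the raw difference set $\Delta f$ is already centrally symmetric, and then argue that taking the convex hull preserves this symmetry.

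For step one, I would take an arbitrary $\textbf{v}\in\Delta f$ and unpack the definition: by construction $\textbf{v}=f(\textbf{x}_1)-f(\textbf{x}_2)$ for some pair $\textbf{x}_1,\textbf{x}_2\in\Delta\textbf{X}$. Swapping the roles of $\textbf{x}_1$ and $\textbf{x}_2$ — which is legal because the union in Definition~\ref{def-shull1} ranges over all ordered pairs in $\Delta\textbf{X}\times\Delta\textbf{X}$ — immediately gives $-\textbf{v}=f(\textbf{x}_2)-f(\textbf{x}_1)\in\Delta f$. So $\Delta f=-\Delta f$.

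For step two, I would observe that central symmetry is preserved under convex hulls. Concretely, for any $\textbf{v}\in K=\textrm{Conv}(\Delta f)$, Carathéodory's theorem (or just the definition of convex hull) lets me write $\textbf{v}=\sum_i \lambda_i \textbf{v}_i$ with $\textbf{v}_i\in\Delta f$, $\lambda_i\geq 0$, and $\sum_i\lambda_i=1$. Then
\begin{align*}
-\textbf{v}=\sum_i \lambda_i (-\textbf{v}_i),
\end{align*}
and by step one each $-\textbf{v}_i$ lies in $\Delta f$, so this is again a convex combination of points of $\Delta f$. Hence $-\textbf{v}\in K$, which is exactly central symmetry.

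There is essentially no difficult step here — the entire argument is a one-line observation about the symmetry of differences, followed by a standard fact that convex hulls commute with the map $\textbf{v}\mapsto -\textbf{v}$. The only subtlety worth flagging explicitly is that the union defining $\Delta f$ is symmetric in its two arguments, which is what makes the first step go through; I would make sure this is stated rather than left implicit.
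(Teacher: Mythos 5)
Your proof is correct: the difference set $\Delta f$ is symmetric because the union in Definition~\ref{def-shull1} ranges over all ordered pairs, and taking the convex hull commutes with negation. The paper states this theorem without proof, and your two-step argument is exactly the standard one it implicitly relies on, so there is nothing to add.
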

\begin{theorem}
If data $\textbf{x}$ is in discrete domain, then for any $f: \textbf{x}\rightarrow \mathbb{R}^d$, the sensitivity hull of $f$ is a polytope in $\mathbb{R}^d$.
\end{theorem}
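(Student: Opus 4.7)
The claim reduces to two observations: the set $\Delta f$ is finite under the discreteness assumption, and the convex hull of a finite subset of $\mathbb{R}^d$ is by definition a polytope (a V-polytope). So the plan is to verify finiteness of $\Delta f$ carefully and then invoke the standard characterization.

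First, I would argue that the $\delta$-location set $\Delta\textbf{X}$ is itself finite. Since the underlying space is partitioned into the cells $\{\textbf{s}_1,\ldots,\textbf{s}_m\}$, the discrete domain in which $\textbf{x}$ lives consists of at most $m$ distinct values; in particular $\Delta\textbf{X}\subseteq\{\textbf{s}_1,\ldots,\textbf{s}_m\}$ is finite by its very definition as the minimum-cardinality set of cells whose prior mass is at least $1-\delta$. Hence the collection of ordered pairs $(\textbf{x}_1,\textbf{x}_2)\in\Delta\textbf{X}\times\Delta\textbf{X}$ is finite, with at most $|\Delta\textbf{X}|^2$ elements.

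Next, since $f:\textbf{x}\to\mathbb{R}^d$ is a function, each pair contributes at most one difference vector $f(\textbf{x}_1)-f(\textbf{x}_2)\in\mathbb{R}^d$. Therefore
\[
\Delta f \;=\; \bigcup_{\textbf{x}_1,\textbf{x}_2\in\Delta\textbf{X}}\bigl(f(\textbf{x}_1)-f(\textbf{x}_2)\bigr)
\]
is a finite subset of $\mathbb{R}^d$ of cardinality at most $|\Delta\textbf{X}|^2$.

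Finally, I would invoke the standard fact from convex geometry: the convex hull of any finite set of points in $\mathbb{R}^d$ is a polytope (this is the V-representation, equivalent via the Minkowski--Weyl theorem to a bounded intersection of finitely many half-spaces). Applied to $K=\mathrm{Conv}(\Delta f)$, this yields that $K$ is a polytope in $\mathbb{R}^d$, completing the argument. There is no real obstacle here; the only subtlety worth flagging explicitly is that the discreteness of the domain is what guarantees finiteness of $\Delta\textbf{X}$ and hence of $\Delta f$ --- without it, $K$ would still be convex and centrally symmetric, but not necessarily a polytope.
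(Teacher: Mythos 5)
Your argument is correct: finiteness of $\Delta\textbf{X}$ (a subset of the $m$ cells) gives finiteness of $\Delta f$, and the convex hull of a finite point set is a polytope. The paper states this theorem without an explicit proof, but the reasoning it relies on elsewhere (e.g.\ in the proof of the lower bound, where it asserts that $K$ is a polygon ``because $\textbf{x}$ is in discrete domain'') is exactly the argument you have spelled out, so your proposal matches the paper's intended approach and simply makes it explicit.
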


\subsection{Error Bound of Differential Privacy}
We extend the error bound of differential privacy in database context \cite{Geometry-Hardt-STOC10} to our location setting using sensitivity hull.

\begin{lemma}
\label{lemma-query-to-K}
Suppose $\textbf{F}:\mathbb{R}^N\rightarrow\mathbb{R}^d$ is an aggregate function.
When neighboring databases are obtained by changing an attribute, the sensitivity hull $K$ of $\textbf{F}$ is a polytope $\textbf{F}\textbf{B}_1^N$ where $\textbf{B}_1^N$ is the $N$-dimensional unit $\ell_1$ ball.
\end{lemma}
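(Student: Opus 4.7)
The plan is to exploit the linearity of aggregate functions. Since $\mathbf{F}$ is an aggregate, it can be written as multiplication by a $d\times N$ matrix, so $\mathbf{F}(\mathbf{x})=\mathbf{F}\mathbf{x}$ and $\mathbf{F}(\mathbf{x}_1)-\mathbf{F}(\mathbf{x}_2)=\mathbf{F}(\mathbf{x}_1-\mathbf{x}_2)$. Thus the set $\Delta\mathbf{F}$ in Definition \ref{def-shull1} becomes $\mathbf{F}\cdot D$, where $D=\{\mathbf{x}_1-\mathbf{x}_2:\mathbf{x}_1,\mathbf{x}_2\text{ are neighbors}\}$.

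Next I would characterize $D$ explicitly. Under the neighbor relation obtained by changing a single attribute (whose value is normalized to lie in a unit interval), any two neighbors differ in at most one coordinate and the difference in that coordinate has magnitude at most $1$. Hence
\begin{equation*}
D \;=\; \bigcup_{i=1}^{N}\bigl\{\alpha\,\mathbf{e}_i:\alpha\in[-1,1]\bigr\},
\end{equation*}
the union of the $N$ coordinate segments of length $2$ centered at the origin. Since $\pm\mathbf{e}_i\in D$ for every $i$, and every point of $D$ clearly lies in $\mathbf{B}_1^N$, taking the convex hull gives $\mathrm{Conv}(D)=\mathrm{Conv}(\{\pm\mathbf{e}_i\})=\mathbf{B}_1^N$, the standard vertex description of the cross-polytope.

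Finally I would combine these two observations using the fact that linear maps commute with convex hull formation: for any set $S\subseteq\mathbb{R}^N$ we have $\mathbf{F}\,\mathrm{Conv}(S)=\mathrm{Conv}(\mathbf{F}S)$, since every convex combination $\sum_i\lambda_i\mathbf{v}_i$ maps to $\sum_i\lambda_i(\mathbf{F}\mathbf{v}_i)$ and vice versa. Therefore
\begin{equation*}
K \;=\; \mathrm{Conv}(\Delta\mathbf{F}) \;=\; \mathrm{Conv}(\mathbf{F}\cdot D) \;=\; \mathbf{F}\cdot\mathrm{Conv}(D) \;=\; \mathbf{F}\,\mathbf{B}_1^N,
\end{equation*}
which is the image of a polytope under a linear map and therefore itself a polytope in $\mathbb{R}^d$.

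The main obstacle is not a computational one but a definitional one: pinning down precisely what is meant by ``changing an attribute'' so that the difference set $D$ really equals the union of unit coordinate segments. Once the attribute domain is normalized so that a single-coordinate change contributes at most $1$ in magnitude, the rest of the argument is a clean application of linearity and the commutation of convex hull with linear maps; any looser normalization just scales $\mathbf{B}_1^N$ by a constant and does not affect the polytopal structure.
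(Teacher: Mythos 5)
Your proposal is correct and follows essentially the same route as the paper's own proof: identify the difference set of neighboring databases as (segments spanning) $\pm\textbf{e}_i$, take its convex hull to get $\textbf{B}_1^N$, and push it through the linear map $\textbf{F}$ using the fact that linear maps commute with convex hulls. Your write-up is simply a more explicit version, including the normalization caveat about what ``changing an attribute'' means, which the paper glosses over.
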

\begin{proof}
Because the true data $\textbf{x}\in \mathbb{R}^N$, any change (by changing an attribute) of $\textbf{x}^N$ will result in a $\pm 1$ operation in one of the $N$ dimensions. Thus the convex hull for \textbf{x} changes is $\textbf{B}_1^N$. The changes in $\textbf{x}$ will cause  variations in output domain by $\textbf{F}\textbf{B}_1^N$. Because linear transformations keep the convexity, then $K=\textbf{F}\textbf{B}_1^N$.
\end{proof}
\begin{lemma}[\cite{Geometry-Hardt-STOC10}]\footnote{To check the correctness, we refer readers to the latest version of \cite{Geometry-Hardt-STOC10} at \url{http://mrtz.org/papers/HT10geometry.pdf}.}
\label{lemma-opt-error}
To answer a linear function $\textbf{F}:\mathbb{R}^N\rightarrow\mathbb{R}^d$ under Definition \ref{def-stadard-dp}, every $\epsilon$-differentially private mechanism must have
\begin{align*}
\textsc{Error}\geq \Omega \left( \frac{d}{\epsilon}\left(\frac{\textsc{Vol}(K)}{\textsc{Vol}(B_2^d)}\right)^{1/d} \right)
\end{align*}
where $\textsc{Vol}(\cdot)$ is the Volume and $B_2^d$ is the unit $\ell_2$ ball.
\end{lemma}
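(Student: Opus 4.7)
The plan is to follow the classical packing argument of Hardt--Talwar, combining a volume lower bound on the packing number of the sensitivity polytope $K$ with group privacy. The intuition: if a mechanism achieves small expected error, then on many well-separated inputs its output distributions must concentrate on disjoint regions, but group privacy forces these distributions to be close in ratio; together this bounds how many well-separated inputs can exist, which in turn bounds how small the packing radius (hence the error) can be.

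Concretely, I would proceed as follows. Fix a scaling parameter $T$ (a positive integer, to be chosen later). By Lemma \ref{lemma-query-to-K} the image $T\cdot K = \textbf{F}(T\textbf{B}_1^N)$ has volume $T^d\textsc{Vol}(K)$. A standard volume packing argument produces vectors $\textbf{x}_1,\dots,\textbf{x}_M\in T\textbf{B}_1^N\cap\mathbb{Z}^N$ such that the images $\textbf{y}_i=\textbf{F}\textbf{x}_i$ are pairwise $\ell_2$-separated by at least $r$, with
\begin{equation*}
M \;\ge\; c_1\,\frac{\textsc{Vol}(TK)}{\textsc{Vol}(rB_2^d)} \;=\; c_1\left(\frac{T}{r}\right)^d\frac{\textsc{Vol}(K)}{\textsc{Vol}(B_2^d)}.
\end{equation*}
Now suppose for contradiction that some $\epsilon$-DP mechanism $\mathcal{A}$ attains expected error at most $r/4$ on every $\textbf{x}_i$. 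By Markov's inequality, $\Pr\!\bigl[\|\mathcal{A}(\textbf{x}_i)-\textbf{y}_i\|_2\le r/2\bigr]\ge 1/2$ for each $i$. Because the balls $B(\textbf{y}_i,r/2)$ are disjoint (by the packing property), we can apply group privacy: any two $\textbf{x}_i,\textbf{x}_j$ are at $\ell_1$-distance at most $2T$, so $\Pr[\mathcal{A}(\textbf{x}_j)\in B(\textbf{y}_i,r/2)]\ge e^{-2T\epsilon}\Pr[\mathcal{A}(\textbf{x}_i)\in B(\textbf{y}_i,r/2)]\ge \tfrac12 e^{-2T\epsilon}$. Summing this bound over $i$ for a fixed $j$ and using that the balls are disjoint yields $1\ge \tfrac{M}{2}e^{-2T\epsilon}$, i.e.\ $M\le 2e^{2T\epsilon}$.

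Combining the two inequalities, $c_1(T/r)^d\,\textsc{Vol}(K)/\textsc{Vol}(B_2^d)\le 2e^{2T\epsilon}$. The right-hand side is minimized as a function of $T$ by picking $T\asymp d/\epsilon$, which makes $e^{2T\epsilon}=e^{O(d)}$ absorbable into the constant. Solving for $r$ then gives
\begin{equation*}
r \;\ge\; \Omega\!\left(\frac{d}{\epsilon}\left(\frac{\textsc{Vol}(K)}{\textsc{Vol}(B_2^d)}\right)^{\!1/d}\right),
\end{equation*}
which contradicts our assumption on the error and therefore yields the stated lower bound (since if expected error were smaller than $r/4$ the packing would be too large).

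The main obstacle I anticipate is the packing step under an integrality constraint: we need the $\textbf{x}_i$ to be integer database instances of bounded $\ell_1$ mass, while their images $\textbf{F}\textbf{x}_i$ form a sufficiently dense packing of $T\cdot K$. This requires either a rounding argument (map a continuous packing of $TK$ to nearby lattice points in $T\textbf{B}_1^N$ and verify the separation is preserved up to constants) or an appeal to Minkowski-type lattice point counts. A secondary delicate point is balancing $T$ so that the group-privacy blow-up $e^{2T\epsilon}$ exactly matches the packing growth $T^d$; this is the step that fixes the $d/\epsilon$ factor in the final bound. Both issues are technical rather than conceptual and are handled in the Hardt--Talwar reference we cite.
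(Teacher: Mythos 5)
The paper offers no proof of this lemma at all---it is imported directly from Hardt--Talwar \cite{Geometry-Hardt-STOC10} with only a footnote pointing to the source---so there is no internal argument to compare against; your reconstruction follows the same packing-plus-group-privacy route as the cited reference and is essentially correct. The two delicate points you flag (realizing the packed centers as images of valid bounded-$\ell_1$ inputs, and tuning $T\asymp d/\epsilon$ so that $e^{2T\epsilon}$ contributes only a constant after taking $d$-th roots) are precisely the ones resolved in that reference, and neither affects the validity of the sketch.
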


Then we can derive the lower bound of dynamic differential privacy as follows.
\begin{theorem}[Lower Bound]
\label{theorem-lowerbound}
Let $K$ be the sensitivity hull of $\delta$-location set  $\Delta\textbf{X}$.
To satisfy Definition \ref{def-danamic-dp}, every mechanism must have
\begin{align*}
\textsc{Error}\geq \Omega \left( \frac{1}{\epsilon}\sqrt{\textsc{Area}(K)} \right)
\end{align*}
where $\textsc{Area}(K)$ is the area of  $K$.
\end{theorem}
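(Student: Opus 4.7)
The claim is essentially obtained by instantiating the Hardt–Talwar lower bound (Lemma \ref{lemma-opt-error}) in the planar case $d=2$, once it has been argued that this lower bound carries over from the standard neighboring-database definition to the $\delta$-location set definition. My plan has three steps.

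First, I would lift Lemma \ref{lemma-opt-error} to our privacy notion. The inspection of the Hardt–Talwar argument shows that what really drives the bound is (i) an indistinguishability condition between the distributions $\mathcal{A}(\mathbf{x}_1)$ and $\mathcal{A}(\mathbf{x}_2)$ for any pair $\mathbf{x}_1,\mathbf{x}_2$ in some prescribed set, and (ii) the convex body of achievable output differences induced by that set. In Definition \ref{def-danamic-dp} the prescribed set is exactly $\Delta\mathbf{X}_t$, and the convex body of output differences for the identity release query $f(\mathbf{x})=\mathbf{x}$ is, by Definition \ref{def-shull1}, the sensitivity hull $K=\mathrm{Conv}(\Delta f)$. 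So one can invoke Lemma \ref{lemma-opt-error} with this $K$ directly; the only change relative to the original statement is that the admissible ``neighboring'' pairs come from $\Delta\mathbf{X}$ rather than from an $\ell_1$-ball, but the volume-packing argument underlying the proof uses only the convex hull of the differences.

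Second, I would specialize to the planar case. Since locations live in the two-dimensional map coordinate, $d=2$. Then $\mathrm{Vol}(K)=\mathrm{Area}(K)$ and $\mathrm{Vol}(B_2^2)=\pi$, so Lemma \ref{lemma-opt-error} gives
\begin{align*}
\textsc{Error} \;\geq\; \Omega\!\left(\frac{2}{\epsilon}\left(\frac{\mathrm{Area}(K)}{\pi}\right)^{1/2}\right) \;=\; \Omega\!\left(\frac{1}{\epsilon}\sqrt{\mathrm{Area}(K)}\right),
\end{align*}
where the constants $2$ and $1/\sqrt{\pi}$ are swallowed by the $\Omega(\cdot)$.

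The main obstacle is step one: strictly speaking, Lemma \ref{lemma-opt-error} as cited is stated for a linear function on $\mathbb{R}^N$ under the standard add/remove neighboring relation, while our Definition \ref{def-danamic-dp} restricts neighbors to the discrete set $\Delta\mathbf{X}$. What must be justified is that the Hardt–Talwar volume lower bound depends only on the symmetric convex hull of output differences, so replacing $\mathbf{F}\mathbf{B}_1^N$ with the sensitivity hull $K$ over $\Delta\mathbf{X}$ preserves the argument. One way to make this precise is to embed the $\delta$-location set instance into a linear-query instance whose sensitivity polytope coincides with $K$ (for example, by encoding each cell of $\Delta\mathbf{X}$ as a coordinate direction and taking $f$ to be the linear map sending these directions to the corresponding map coordinates); the lower bound for that instance, obtained via Lemma \ref{lemma-opt-error} together with Lemma \ref{lemma-query-to-K}, transfers to ours because any mechanism satisfying Definition \ref{def-danamic-dp} yields a mechanism with matching indistinguishability guarantees on the embedded instance. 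Once this reduction is in place, the planar specialization above completes the proof.
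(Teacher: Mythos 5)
Your proposal follows essentially the same route as the paper: reduce the location-release problem to a linear query $\textbf{F}$ whose sensitivity polytope is $K$, invoke the Hardt--Talwar volume lower bound (Lemma~\ref{lemma-opt-error}), and specialize to $d=2$ so that volume becomes area. One small correction to your parenthetical embedding: taking the columns of $\textbf{F}$ to be the cell coordinates yields $\mathrm{Conv}(\{\pm\textbf{v}_i\})$ rather than $K=\mathrm{Conv}(\{\textbf{v}_i-\textbf{v}_j\})$, whereas the paper exploits the central symmetry of $K$ --- it is a polygon with $2N$ vertices $\pm\textbf{w}_1,\dots,\pm\textbf{w}_N$, so choosing $\textbf{F}=[\textbf{w}_1,\dots,\textbf{w}_N]$ gives $\textbf{F}\textbf{B}_1^N=K$ exactly.
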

\begin{proof}
Because $\textbf{x}$ is in discrete domain (from Markov states), $K$ is a polygon (two-dimensional polytope).
The number of vertices must be an  even number $2N$ because $K$ is symmetric. There must be a matrix $\textbf{F}\in\mathbb{R}^{2\times N}$ that satisfies $K=\textbf{F}\textbf{B}_1^N$ where $\textbf{B}_1^N$ is the $N$-dimensional unit $\ell_1$ ball.
Therefore, answering $\textbf{F}$ is the same problem as locating a point from $K$, which is our geometric location problem.
In our setting, the output domain is two-dimensional. The volume also becomes area of $K$. Then we obtain above lower bound.
\end{proof}

\section{Location Release Algorithm}
\subsection{Framework}
%
%
The framework of our proposed location release algorithm is shown in Algorithm \ref{alg-framework}.
At each timestamp, say $t$, we compute the prior probability vector $\textbf{p}_t^-$.
If the location needs to be released,
 we construct a $\delta$-location set $\Delta\textbf{X}_t$. Then if the true location $\textbf{x}^*$ is excluded in $\Delta\textbf{X}_t$ (a drift), we use surrogate to replace $\textbf{x}^*$. Next a differentially private mechanism (like Algorithm \ref{alg-KNorm} which will be presented next) can be adopted to release a perturbed location $\textbf{z}_t$. In the meantime, the released $\textbf{z}_t$ will also be used to update the posterior probability $\textbf{p}_t^+$ (in the equation below) by Equation (\ref{eqn-posterior}), which subsequently will be used to compute the prior probability for the next timestamp $t+1$. Then at timestamp $t+1$, the above process is repeated.
\begin{align*}
\textbf{p}_t^+[i]=Pr(\textbf{u}_t^*=\textbf{s}_i|\textbf{z}_t,\textbf{z}_{t-1},\cdots,\textbf{z}_1)
\end{align*}

%

\begin{algorithm}[htb]
\caption{Framework}
\begin{algorithmic}[1]
\Require{
$\epsilon_t$, $\delta$, $\textbf{M}$, $\textbf{p}_{t-1}^+$, $\textbf{x}_t^*$
}
\State{$\textbf{p}_{t}^-\gets\textbf{p}_{t-1}^+\textbf{M}$;}
\Comment{{\tt \scriptsize Markov transition}}
\If{location needs to be released}
\State{Construct $\Delta \textbf{X}_t$;}
\Comment{{\tt \scriptsize $\delta$-location set}}
\If{$\textbf{x}_t^*\notin\Delta\textbf{X}_t$}
\Comment{{\tt \scriptsize a drift}}
\State{$\textbf{x}_t^*\gets$ surrogate;}
\EndIf

\State{$\textbf{z}_t\gets$\Call{Algorithm \ref{alg-KNorm}}{$\epsilon_t$, $\Delta\textbf{X}_t$, $\textbf{x}_t^*$};}
\label{line-releasing}
\Comment{{\tt \scriptsize release $\textbf{z}_t$}}
\State{Derive posterior probability $\textbf{p}_t^+$ by Equation (\ref{eqn-posterior});}
\label{line-bayesian}
\EndIf
\\\Return{\Call{Algorithm \ref{alg-framework}}{$\epsilon_{t+1}$, $\delta$, $\textbf{M}$, $\textbf{p}_{t}^+$, $\textbf{x}_{t+1}^*$}};\LineComment{{\tt \scriptsize go to next timestamp}}

\end{algorithmic}
\label{alg-framework}
\end{algorithm}
\begin{theorem}
\label{theo-alg-frame}
At any timestamp $t$, Algorithm \ref{alg-framework} is $\epsilon_t$-differentially private on $0$-location set.
\end{theorem}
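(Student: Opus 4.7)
The plan is to decompose Algorithm \ref{alg-framework}'s dependence on the true location $\textbf{x}_t^*$ into a deterministic surrogate stage followed by the randomized release mechanism of Algorithm \ref{alg-KNorm}. I would first formalize the surrogate as a deterministic map $\sigma_t: \mathcal{S} \to \Delta\textbf{X}_t$ that acts as the identity on $\Delta\textbf{X}_t$ and sends every other cell to $\mathop{argmin}\limits_{\textbf{s} \in \Delta\textbf{X}_t} dist(\textbf{s}, \textbf{x})$. Lines 4--6 of the algorithm implement this map, after which line 7 feeds $\sigma_t(\textbf{x}_t^*)$ into Algorithm \ref{alg-KNorm}, which one takes as given to satisfy $\epsilon_t$-differential privacy on $\Delta\textbf{X}_t$ in the sense of Definition \ref{def-danamic-dp}. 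The key structural fact is that $\sigma_t$ carries every cell of $\mathcal{S}$ into $\Delta\textbf{X}_t$, so in particular every cell in the $0$-location set.

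From here the argument reduces to a short pre-processing observation. I would fix any pair $\textbf{x}_1, \textbf{x}_2$ in the $0$-location set and any output $\textbf{z}_t$. Since $\sigma_t$ is deterministic and all randomness of $\mathcal{A}$ resides in the inner call to Algorithm \ref{alg-KNorm} (call it $\mathcal{A}_2$), the output distribution factors as $Pr(\mathcal{A}(\textbf{x}_i) = \textbf{z}_t) = Pr(\mathcal{A}_2(\sigma_t(\textbf{x}_i)) = \textbf{z}_t)$ for $i=1,2$. Because $\sigma_t(\textbf{x}_1), \sigma_t(\textbf{x}_2) \in \Delta\textbf{X}_t$, the pairwise DP guarantee of $\mathcal{A}_2$ on $\Delta\textbf{X}_t$ applied to this pair immediately delivers the ratio bound $e^{\epsilon_t}$, which is exactly Definition \ref{def-danamic-dp} with the $0$-location set in place of $\Delta\textbf{X}_t$.

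The main conceptual step --- and the reason the theorem is not immediate from the DP guarantee of Algorithm \ref{alg-KNorm} alone --- is justifying that the surrogate substitution does not open a side channel distinguishing whether a drift occurred. My argument would mirror the informal remark the paper makes after defining the surrogate: since $\sigma_t$ is a deterministic function of the true input and the publicly derivable set $\Delta\textbf{X}_t$, and the released $\textbf{z}_t$ depends on $\textbf{x}_t^*$ only through $\sigma_t(\textbf{x}_t^*)$, no observer of $\textbf{z}_t$ can distinguish the case $\textbf{x}_t^* \in \Delta\textbf{X}_t$ from the case $\textbf{x}_t^* \notin \Delta\textbf{X}_t$. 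This factorization is precisely what lifts the guarantee from the working set $\Delta\textbf{X}_t$ up to the strictly larger $0$-location set, and it is the only place in the proof where one has to reason beyond a direct invocation of the inner mechanism's privacy property.
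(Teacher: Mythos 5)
Your argument is correct, but it reaches the conclusion by a different route than the paper. The paper's proof works through adversarial privacy: it invokes Theorem~\ref{theo-equav} to reduce the claim to bounding the posterior-to-prior ratio $Pr(\textbf{u}^*_t=\textbf{s}_i\mid\textbf{z}_t)/Pr(\textbf{u}^*_t=\textbf{s}_i)$, handles the in-set case directly from Theorem~\ref{theo-priv-KNorm}, and in the drift case expands the posterior as a mixture over the surrogate's value, $\sum_k Pr(\textbf{u}^*_t=\textbf{s}_i\mid\tilde{\textbf{x}}_t=\textbf{s}_k)Pr(\tilde{\textbf{x}}_t=\textbf{s}_k\mid\textbf{z}_t)$, before asserting the $e^{\epsilon}$ bound. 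You instead verify Definition~\ref{def-danamic-dp} directly by factoring the whole algorithm as $\mathcal{A}=\mathcal{A}_2\circ\sigma_t$, where $\sigma_t$ is a deterministic pre-processing map into $\Delta\textbf{X}_t$ that does not depend on $\textbf{x}_t^*$ (only on the publicly derivable $\textbf{p}_t^-$), so the pairwise ratio for any $\textbf{x}_1,\textbf{x}_2$ in the $0$-location set reduces to the ratio for $\sigma_t(\textbf{x}_1),\sigma_t(\textbf{x}_2)\in\Delta\textbf{X}_t$, which Theorem~\ref{theo-priv-KNorm} bounds. Your version buys a genuinely shorter and more self-contained proof: it needs no Bayesian machinery, no conditional-independence assumption between $\textbf{u}^*_t$ and $\textbf{z}_t$ given $\tilde{\textbf{x}}_t$, and no appeal to the equivalence theorem, and it makes explicit the one fact the paper's surrogate-case inequality leaves implicit (that $\sigma_t$ is a fixed function of public information, so no drift side channel exists). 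What the paper's route buys in exchange is a statement phrased directly in terms of what an adversary learns, which connects more naturally to its Definition~\ref{def-advpriv} and the surrounding discussion of adversarial knowledge. Two small points of care for your write-up: the uniqueness of $\mathop{argmin}$ in the surrogate definition should be fixed by an arbitrary tie-breaking rule so that $\sigma_t$ is genuinely a function, and you should note explicitly (as you do implicitly) that line 3's construction of $\Delta\textbf{X}_t$ reads only $\textbf{p}_t^-$, never $\textbf{x}_t^*$, since the entire factorization collapses without that.
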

\begin{proof}
It is equivalent to prove adversarial privacy on $0$-location set, which includes all possible locations.
If $\textbf{x}_t^*\in\Delta\textbf{X}_t$, then $\textbf{z}_t$ is generated by $\textbf{x}_t^*$. By Theorem \ref{theo-priv-KNorm}, $\textbf{z}_t$ is $\epsilon_t$-differentially private. So
$
\frac{Pr(\textbf{u}^*_t=\textbf{s}_i|\textbf{z}_t)}{Pr(\textbf{u}^*_t=\textbf{s}_i)}\leq e^\epsilon
$.
When $\textbf{x}_t^*\notin\Delta\textbf{X}_t$, then a surrogate $\tilde{\textbf{x}}_t$ replaces $\textbf{x}_t^*$. Then
\begin{align*}
\frac{Pr(\textbf{u}^*_t=\textbf{s}_i|\textbf{z}_t)}{Pr(\textbf{u}^*_t=\textbf{s}_i)}
=\frac{ \mathop\sum_k Pr(\textbf{u}^*_t=\textbf{s}_i | \tilde{\textbf{x}}_t=\textbf{s}_k)Pr(\tilde{\textbf{x}}_t=\textbf{s}_k|\textbf{z}_t)  }{ \mathop\sum_k Pr(\textbf{u}^*_t=\textbf{s}_i |\tilde{\textbf{x}}_t=\textbf{s}_k)Pr(\tilde{\textbf{x}}_t=\textbf{s}_k)}
\leq e^{\epsilon}
\end{align*}
Therefore, by equivalence (Theorem \ref{theo-equav}) Algorithm \ref{alg-framework} is $\epsilon_t$-differentially private on $0$-location set.
\end{proof}
%

\noindent{\bf Laplace Mechanism.}
With the $\ell_1$-norm sensitivity in Definition \ref{def-standard-sensitivity}, Laplace mechanism (LM) can be adopted in Line \ref{line-releasing} of Algorithm \ref{alg-framework}. The problem of this approach is that it will over-perturb a location because $\ell_1$-norm sensitivity could be much larger than the sensitivity hull, as discussed in Section \ref{sec-sensitivityhull}. We use LM with $\delta$-location set as a baseline in our experiment.

\subsection{Planar Isotropic Mechanism}
Because we showed (in Lemma \ref{lemma-query-to-K}) that the sensitivity hull of a query matrix is a polytope (polygon in our two-dimensional location setting),
the state-of-art $K$-norm based mechanism \cite{Geometry-Hardt-STOC10,Bhaskara-bound-STOC12,Nikolov-geometry-STOC13} can be used.
\begin{definition}[K-norm Mechanism \cite{Geometry-Hardt-STOC10}]
Given a linear function $\textbf{F}:\mathbb{R}^N\rightarrow\mathbb{R}^d$ and its sensitivity hull $K$, a mechanism is $K$-norm mechanism if for any output $\textbf{z}$, the following holds:
\begin{align}
\label{eqn-pdf-K-Norm}
Pr(\textbf{z})=\frac{1}{\Gamma(d+1)\textsc{Vol}(K/\epsilon)}exp \left( -\epsilon||\textbf{z}-\textbf{Fx}^*||_K \right)
\end{align}
where $\textbf{Fx}^*$ is the true answer, $||\cdot||_K$ is the (Minkowski) norm of $K$, $\Gamma()$ is Gamma function and $\textsc{Vol}()$ indicates volume.
\end{definition}

However, standard $K$-norm mechanism was designed for high-dimensional structure of sensitivity hull, whereas in our problem a location is only two-dimensional. Thus we can further optimize $K$-norm mechanism to achieve the lower bound of differential privacy. We propose a Planar Isotropic Mechanism (PIM) based on $K$-norm mechanism as follows.

\vspace{2mm}
\noindent{\bf Rationale.}
The rationale of PIM is that in two-dimensional space we efficiently transform the sensitivity hull to its isotropic position\footnote{We refer readers to \cite{Giannopoulos-isotropic-03,Milman-isotropic-89} for a detailed study of isotropic position.}
so that the optimality is guaranteed.

%

\begin{theorem}\cite{Geometry-Hardt-STOC10}
\label{error-K-norm}
If the sensitivity hull $K$ is in $C$-appro-ximately isotropic position, then $K$-norm mechanism has error $O(C)\textsc{LB}(K)$ where $\textsc{LB}(K)$ is the lower bound of differential privacy.
\end{theorem}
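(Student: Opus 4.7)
The plan is to bound the expected $\ell_2$ error of the $K$-norm mechanism and show that, under $C$-approximate isotropy, it matches the lower bound from Theorem \ref{theorem-lowerbound} up to a factor $C$. By translation invariance of the Lebesgue measure, $\mathbb{E}[\|Y - \textbf{F}\textbf{x}^*\|_2]$ equals $\mathbb{E}[\|W\|_2]$ where $W$ has density proportional to $\exp(-\epsilon \|w\|_K)$. Rescaling $V = \epsilon W$ gives $\textsc{Error} = \frac{1}{\epsilon}\,\mathbb{E}[\|V\|_2]$, with $V$ drawn from density $\propto \exp(-\|v\|_K)$ on $\mathbb{R}^d$.

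The next step is a ``polar decomposition adapted to $K$'': writing $V = R \cdot U$ with $R = \|V\|_K$ and $U = V/R \in \partial K$, the change of variables $dv = r^{d-1}\,dr\,d\sigma_K(u)$ (where $\sigma_K$ is the cone measure on $\partial K$) factors the density as $r^{d-1}\exp(-r)\cdot d\sigma_K(u)$. Thus $R$ and $U$ are independent, $R \sim \mathrm{Gamma}(d,1)$ with $\mathbb{E}[R]=d$, and $U$ is distributed on $\partial K$ according to the cone measure. Hence
\begin{align*}
\textsc{Error} = \frac{d}{\epsilon}\,\mathbb{E}[\|U\|_2],
\end{align*}
so everything reduces to bounding the expected Euclidean norm of a point on $\partial K$.

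Here is where the isotropy hypothesis enters. In the planar setting $d=2$, $K$ being $C$-approximately isotropic means there exist Euclidean balls $B_r \subseteq K \subseteq B_R$ with $R/r \leq C$. From $\pi r^2 \leq \textsc{Area}(K) \leq \pi R^2$ one gets $r = \Theta(\sqrt{\textsc{Area}(K)})$ and $R = O(C\sqrt{\textsc{Area}(K)})$, so every $u \in \partial K$ satisfies $\|u\|_2 \leq R = O(C\sqrt{\textsc{Area}(K)})$, and therefore $\mathbb{E}[\|U\|_2] = O(C\sqrt{\textsc{Area}(K)})$. Plugging back,
\begin{align*}
\textsc{Error} = O(C)\cdot \frac{1}{\epsilon}\sqrt{\textsc{Area}(K)} = O(C)\cdot \textsc{LB}(K)
\end{align*}
by Theorem \ref{theorem-lowerbound}.

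The main obstacle I expect is justifying the radial/angular factorization cleanly: one must verify that for the $K$-norm density the cone-measure parametrization really does decouple $R$ from $U$ with the correct Gamma shape parameter, and that no constants depending on $K$ sneak into $\mathbb{E}[R]$. A secondary subtlety is reconciling the working definition of ``$C$-approximately isotropic position'' with the John-type sandwiching $B_r \subseteq K \subseteq B_R$; if the paper uses a covariance-based definition instead, one would need to invoke the standard equivalence (for symmetric convex bodies) between the covariance and inradius/circumradius formulations, which in $\mathbb{R}^2$ is elementary but must be stated.
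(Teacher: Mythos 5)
This theorem is not proved in the paper at all --- it is imported verbatim from Hardt and Talwar \cite{Geometry-Hardt-STOC10} --- so the closest in-paper analogue is the proof of Theorem \ref{theo-alg-error}, which uses exactly your decomposition: a Gamma-distributed radial variable (the independence and the shape parameter $d$ do come out correctly from $dv = d\,\textsc{Vol}(K)\,r^{d-1}\,dr\,d\sigma_K(u)$, so that worry is unfounded) times an angular variable on $K$, with isotropy controlling the angular moment. Your reconstruction is correct and is essentially that standard argument; the cosmetic difference that you bound $\mathbb{E}\|V\|_2$ while the paper uses $\sqrt{\mathbb{E}\|V\|_2^2}$ is harmless since the Gamma moments make the two comparable. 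The one substantive departure is the meaning of ``$C$-approximately isotropic'': the source (and the paper, via the isotropic constant $L_{K_I}$) uses the covariance formulation, under which the angular term is bounded directly by $\mathbb{E}_{u\sim\sigma_K}\|u\|_2^2=\tfrac{d+2}{d}\,\mathbb{E}_{x\in K}\|x\|_2^2\le (d+2)\,C^2L_K^2\,\textsc{Vol}(K)^{2/d}$, with no inradius/circumradius sandwich needed at all. Your John-type formulation also closes in the plane --- Borell's lemma converts a covariance upper bound into a circumradius bound with an absolute-constant loss, and $L_K=\Theta(1)$ for $d=2$ --- but, as you flag yourself, that equivalence must be stated explicitly; it is the only step where a $K$-dependent constant could otherwise slip in, and the covariance route avoids it entirely.
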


From Theorem \ref{error-K-norm}, we know that $K$-norm mechanism would be the optimal solution if the sensitivity hull $K$ is in isotropic position, denoted by $K_I$.
Although in high-dimensional space transforming a convex body to its isotropic position is extremely expensive,
it is feasible in two-dimensional space. To this end,  we need the following corollary
(which can be derived from \cite{Rudelson99randomvectors,Lovasz-2006-annealing}).
%
\begin{corollary}[Isotropic Transformation]
\label{theo-iso-sample}
For any convex body $K$ in $\mathbb{R}^2$, any integer $p\geq 1$, there is an absolute constant $c$ such that if $l\geq 4cp^2$, with probability at least $1-2^{-p}$, $K_I=\textbf{T}K$ is in isotropic position.
\begin{align}
\label{eqn-iso-T}
\textbf{T}=\left(
\frac{1}{l}
\sum_{i=1}^{l}\textbf{y}_i \textbf{y}_i^T
\right)^{-\frac{1}{2}}
\end{align}
where $\textbf{y}_1,\textbf{y}_2,\cdots,\textbf{y}_l$ are independent random points  uniformly distributed in $K$.
\end{corollary}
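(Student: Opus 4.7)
The plan is to reduce the claim to a concentration inequality for the empirical second-moment matrix of uniform samples from a convex body, and then to invoke the operator-norm bounds due to Rudelson. Let $\boldsymbol{\Sigma} = \mathbb{E}_{\textbf{y}\sim\mathrm{Unif}(K)}[\textbf{y}\textbf{y}^T]$ be the true second-moment matrix of the uniform distribution on $K$; by the very definition of isotropic position, the linear map $\boldsymbol{\Sigma}^{-1/2}$ places $K$ in isotropic position exactly. Thus the only thing to prove is that the data-dependent matrix $\textbf{T} = \hat{\boldsymbol{\Sigma}}^{-1/2}$, with $\hat{\boldsymbol{\Sigma}}=\frac{1}{l}\sum_i \textbf{y}_i\textbf{y}_i^T$, is close enough to $\boldsymbol{\Sigma}^{-1/2}$ that $\textbf{T}K$ is isotropic up to the constant factor required by Theorem~\ref{error-K-norm}.

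To carry this out I would first ``whiten'' the samples by setting $\textbf{v}_i = \boldsymbol{\Sigma}^{-1/2}\textbf{y}_i$, so that the $\textbf{v}_i$ are i.i.d.\ uniform on the already-isotropic body $\boldsymbol{\Sigma}^{-1/2} K$. Rudelson's inequality for random vectors in isotropic position, combined with a standard tail argument (matrix Bernstein applied to the bounded rank-one operators $\textbf{v}_i\textbf{v}_i^T$, which are bounded because $K$ is a convex body), furnishes an absolute constant $c$ such that
\begin{align*}
\Pr\!\Bigl(\, \bigl\|\tfrac{1}{l}\textstyle\sum_{i=1}^{l}\textbf{v}_i\textbf{v}_i^T - \textbf{I}\bigr\|_{\mathrm{op}} > \tfrac{1}{2} \,\Bigr) \;\leq\; 2^{-p} \qquad \text{whenever } l \geq 4 c p^2 .
\end{align*}
In dimension $d=2$ the $\log d$ factor present in Rudelson's general bound collapses into the constant $c$, which is why the quantitative form $l\geq 4cp^2$ is achievable here but not in high dimension. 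Standard perturbation bounds for matrix square roots then convert the spectral closeness of $\frac{1}{l}\sum \textbf{v}_i\textbf{v}_i^T$ to $\textbf{I}$ into spectral closeness of $\textbf{T}\boldsymbol{\Sigma}\textbf{T}^T$ to $\textbf{I}$, which is exactly the statement that $\textbf{T}K$ is in (approximately) isotropic position.

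The main obstacle is the concentration step: Rudelson's original bound is in expectation and decays only polynomially in $l$, whereas the corollary asserts a genuinely exponential failure probability $2^{-p}$. Closing this gap is where one must invoke either a non-commutative Khintchine inequality with a Talagrand-style tail-refinement, or (more cleanly) the matrix Chernoff bound, using boundedness of $\|\textbf{v}_i\|$ on the isotropic convex body. Once the exponential tail is in place, the remainder of the argument is bookkeeping: fix $\epsilon$ so that $\|\hat{\boldsymbol{\Sigma}}-\boldsymbol{\Sigma}\|_{\mathrm{op}}\leq \epsilon\|\boldsymbol{\Sigma}\|_{\mathrm{op}}$ implies the desired constant-factor isotropy of $\textbf{T}K$, choose the absolute constant $c$ to absorb the dimensional factor $2$, and read off the sample complexity $l\geq 4cp^2$.
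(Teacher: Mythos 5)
Your argument is correct and is essentially the derivation the paper intends: the paper offers no proof of this corollary, deferring entirely to the cited works of Rudelson and Lov\'asz--Vempala, and those references establish precisely the empirical second-moment concentration you reconstruct (whitening, operator-norm closeness of $\frac{1}{l}\sum_i \textbf{v}_i\textbf{v}_i^T$ to $\textbf{I}$, with the exponential tail obtained from boundedness, since an isotropic planar convex body sits inside a ball of absolute-constant radius). You are also right to interpret the conclusion as $C$-approximate isotropy rather than exact isotropy --- finitely many samples cannot achieve the latter, and the former is all that Theorem~\ref{error-K-norm} requires --- so this is a looseness in the corollary's statement, not a gap in your proof.
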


Therefore, the isotropic transformation of any sensitivity hull $K$ can be fulfilled by sampling, which is a trivial task in two-dimensional space.
For instance, a hit-and-run algorithm \cite{Lovasz-2004-hit-and-run} only takes $O(log^3(1/\delta))$ time where $\delta$ is an error parameter. We skip the sampling details and refer readers to the survey paper of Santosh Vempala \cite{Vempala05geometricrandom} for a complete study.

\begin{figure*}[!ht]
\begin{subfigure}{0.33\textwidth}
\centering
\vspace{0.5cm}
\includegraphics[width=3.7cm]{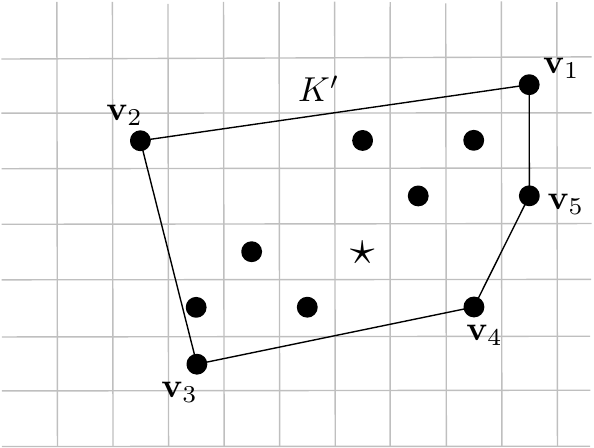}
\vspace{0.53cm}
\caption{}
\label{Figure-KNorm}
\end{subfigure}
\begin{subfigure}{0.33\textwidth}
\centering
\includegraphics[width=5.5cm]{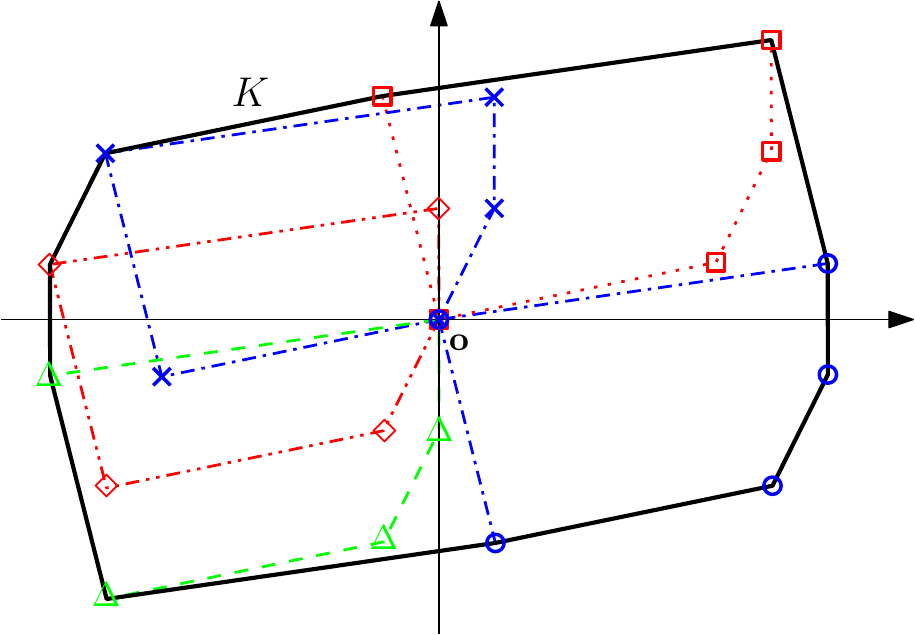}
\caption{}
\label{Figure-KNorm5}
\end{subfigure}
\begin{subfigure}{0.33\textwidth}
\centering
\includegraphics[width=4.5cm]{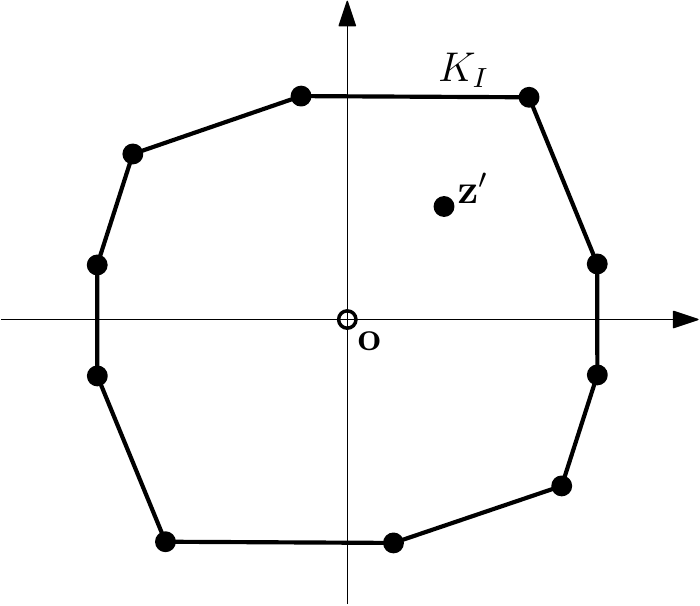}
\caption{}
\label{Figure-KNorm9}
\end{subfigure}
\caption{
{\small (a) Convex hull of $\Delta\textbf{X}$.}
{\small (b) Finding the sensitivity hull $K$.}
{\small (c) Transform $K$ to isotropic position $K_I$. Sample a point $\textbf{z}'$. }
}
\label{Figure-KNorms}
\end{figure*}
\vspace{2mm}\noindent{\bf Algorithm.}
As an overview,
PIM involves the following steps:\\
\indent (1) Compute sensitivity hull $K$ from $\Delta\textbf{X}$;\\
\indent (2) Transform $K$ to isotropic position $K_I$;\\
\indent (3) generating a noise in the space of $K_I$ by $K$-norm mechanism;\\
\indent (4) Transform to the original space.




%
We first describe how to compute sensitivity hull $K$.
Suppose we have a $\delta$-location set $\Delta\textbf{X}$ at a timestamp. We can first derive the convex hull of $\Delta\textbf{X}$, denoted by $K'=Conv(\Delta\textbf{X})$.
For example, in Figure \ref{Figure-KNorm}, the convex hull $K'$ is shown by the black lines given $\delta$-location set as ``$\bullet$'' and ``$\star$'' where ``$\star$'' is the true location.
Denote $\textbf{v}_1,\textbf{v}_2,\cdots,\textbf{v}_h$ the vertices of $K'$.
Then we use a set $\Delta\textbf{V}$ to store $\textbf{v}_i-\textbf{v}_j$ for any $\textbf{v}_i$ and $\textbf{v}_j$ from the vertices  of $K'$ as the equation below.
%
In Figure \ref{Figure-KNorm5}, for instance, the polygon ``${\small \boldsymbol{\bigtriangleup}}\cdots{\small \boldsymbol{\bigtriangleup}}$'' denotes $\textbf{v}_i-\textbf{v}_1$ for all $\textbf{v}_i$.
Then $Conv(\Delta\textbf{V})$ will be the sensitivity hull $K$ of the $\delta$-location set, as shown by the polygon with solid lines in Figure \ref{Figure-KNorm5}.
%
\begin{flalign*}
\hspace{2.4cm}
K&=Conv(\Delta\textbf{V})&
\\
\Delta\textbf{V}&=\mathop\cup\limits_{\textbf{v}_1,\textbf{v}_2\in \textrm{ vertices of }K'}(\textbf{v}_1-\textbf{v}_2)&
\end{flalign*}
%
%


Next we transform $K$ to its isotropic position $K_I$.
We sample $\textbf{y}_1,\textbf{y}_2,\cdots,\textbf{y}_l$ uniformly from $K$. Then a matrix $\textbf{T}$ can be derived by Equation (\ref{eqn-iso-T}). To verify if $\textbf{T}$ is stable, we can derive another $\textbf{T}'$. If the Frobenius norm $||\textbf{T}'-\textbf{T}||_F$ is small enough (e.g. $<10^{-3}$), then we accept $\textbf{T}$. Otherwise we repeat above process with larger $l$.
In the end, $K_I=\textbf{T}K$ is the isotropic position of $K$, as shown in
Figure \ref{Figure-KNorm9}.

Next a point $\textbf{z}'$ can be uniformly sampled from $K_I$. We generate a random variable $r$ from Gamma distribution $\Gamma(3,\epsilon^{-1})$. Let $\textbf{z}'=r\textbf{z}'$.
Then we transform the point $\textbf{z}'$ to the original space by $\textbf{z}'=\textbf{T}^{-1}\textbf{z}'$. The released  location is $\textbf{z}=\textbf{x}^*+\textbf{z}'$.
%
%




Algorithm \ref{alg-KNorm} summarizes the process of PIM. Lines \ref{line-sample-start}$\sim$\ref{line-sample-end} can be iterated until $\textbf{T}$ is stable, whereas
the computational complexity is not affected by the iterations because the number of samples is bounded by a constant (by Corollary \ref{theo-iso-sample}).
\begin{algorithm}[htb]
\caption{Planar Isotropic Mechanism}
\begin{algorithmic}[1]
\Require{
$\epsilon$, $\Delta\textbf{X}$, $\textbf{x}^*$
}
\State{$K'\leftarrow Conv(\Delta\textbf{X})$;}
\label{line-ConvX}
\Comment{{\tt \scriptsize convex hull of $\Delta\textbf{X}$}}
\State{$\Delta\textbf{V}\longleftarrow \mathop\cup\limits_{\textbf{v}_1,\textbf{v}_2\in\textrm{ vertices\ of\ } K'}(\textbf{v}_1-\textbf{v}_2)$};
\State{$K\leftarrow Conv(\Delta\textbf{V})$;}
\label{line-sensitivity-hull}
\Comment{{\tt \scriptsize sensitivity hull}}
\vspace{2mm}
\State{Repeat lines \ref{line-sample-start},\ref{line-sample-end} with larger $l$ if $\textbf{T}$ is not stable:}
\label{line-repeat}
\State{\hspace{\algorithmicindent}Sample $\textbf{y}_1,\textbf{y}_2,\cdots,\textbf{y}_l$ uniformly from $K$;}
\label{line-sample-start}
\State{\hspace{\algorithmicindent}$\textbf{T}\gets \left(
\frac{1}{l}
\sum_{i=1}^{l}\textbf{y}_i \textbf{y}_i^T
\right)^{-\frac{1}{2}}$};
\label{line-sample-end}
\State{$K_I=\textbf{T}K$};
\vspace{2mm}
\Comment{{\tt \scriptsize isotropic transformation}}
\State{Uniformly sample $\textbf{z}'$ from $K_I$};
\State{Sample $r\sim \Gamma(3,\epsilon^{-1})$;}
\\
\Return{$\textbf{z}=\textbf{x}^*+r\textbf{T}^{-1}\textbf{z}'$;}
\Comment{{\tt \scriptsize release $\textbf{z}$}}
\end{algorithmic}
\label{alg-KNorm}
\end{algorithm}


\noindent{\bf Privacy and Performance Analysis.}
We now present the privacy property, complexity, and the error  of  PIM.
\begin{theorem}
\label{theo-priv-KNorm}
Algorithm \ref{alg-KNorm} is $\epsilon$-differentially private on $\delta$-location set $\Delta\textbf{X}$.
\end{theorem}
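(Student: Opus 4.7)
The plan is to prove Theorem~\ref{theo-priv-KNorm} in two stages: first, I would establish that PIM is an instance of the $K$-norm mechanism, i.e., that the released location $\textbf{z}$ has density proportional to $\exp(-\epsilon \|\textbf{z} - \textbf{x}^*\|_K)$, and second, I would verify that the $K$-norm of any vector between two points of $\Delta\textbf{X}$ is at most $1$, so that the standard privacy analysis of the $K$-norm mechanism yields $\epsilon$-differential privacy on $\delta$-location set.

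For the first stage, I would compute the density of the random variable $\textbf{w} = r\textbf{T}^{-1}\textbf{z}'$ generated by Lines~7--9 of Algorithm~\ref{alg-KNorm}. Fix the ``isotropic-space'' intermediate $\textbf{v} = r\textbf{z}'$, where $\textbf{z}'$ is uniform on $K_I$ and $r \sim \Gamma(3, \epsilon^{-1})$. Conditioning on $r$, the vector $\textbf{v}$ is uniform on $rK_I$ with density $1/(r^2\textsc{Vol}(K_I))$ on its support, and the support condition is equivalent to $\|\textbf{v}\|_{K_I} \le r$. Integrating out $r$ against the Gamma density $\epsilon^3 r^2 e^{-\epsilon r}/\Gamma(3)$ gives
\begin{equation*}
p_{\textbf{v}}(\textbf{v}) = \frac{\epsilon^2}{\Gamma(3)\textsc{Vol}(K_I)} e^{-\epsilon\|\textbf{v}\|_{K_I}},
\end{equation*}
where the $r^2$ factors cancel. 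A change of variables $\textbf{w} = \textbf{T}^{-1}\textbf{v}$ then uses two elementary identities: $\|\textbf{T}\textbf{w}\|_{\textbf{T}K} = \|\textbf{w}\|_K$ (since $\textbf{T}\textbf{w} \in t\textbf{T}K \Leftrightarrow \textbf{w} \in tK$) and $|\det \textbf{T}|\,\textsc{Vol}(K) = \textsc{Vol}(K_I)$. These combine to give the clean $K$-norm density $p(\textbf{z}) = \exp(-\epsilon\|\textbf{z}-\textbf{x}^*\|_K)/(\Gamma(3)\textsc{Vol}(K/\epsilon))$ for $d=2$ as in Equation~(\ref{eqn-pdf-K-Norm}).

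For the second stage, I would show that for any two locations $\textbf{x}_1, \textbf{x}_2 \in \Delta\textbf{X}$, their difference lies inside the sensitivity hull $K$. Since $\Delta\textbf{X} \subseteq K' = \mathrm{Conv}(\Delta\textbf{X})$, we can write $\textbf{x}_1 = \sum_i \alpha_i \textbf{v}_i$ and $\textbf{x}_2 = \sum_j \beta_j \textbf{v}_j$ as convex combinations of the vertices $\textbf{v}_i$ of $K'$. Then
\begin{equation*}
\textbf{x}_1 - \textbf{x}_2 = \sum_{i,j} \alpha_i\beta_j(\textbf{v}_i - \textbf{v}_j),
\end{equation*}
which is a convex combination of elements of $\Delta\textbf{V}$, hence belongs to $K = \mathrm{Conv}(\Delta\textbf{V})$. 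This shows $\|\textbf{x}_1 - \textbf{x}_2\|_K \le 1$. Finally, the triangle inequality for the Minkowski norm of the centrally symmetric body $K$ yields $\bigl|\,\|\textbf{z}-\textbf{x}_1\|_K - \|\textbf{z}-\textbf{x}_2\|_K\bigr| \le \|\textbf{x}_1 - \textbf{x}_2\|_K \le 1$, so the ratio of densities is at most $e^\epsilon$.

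The main obstacle I expect is the density computation in stage one, because it requires matching the particular Gamma shape parameter (chosen as $3 = d+1$ for $d=2$) to the radial integral against a uniform distribution over a convex body in $\mathbb{R}^2$, and then correctly tracking both the $K$-norm transformation $\|\textbf{T}\textbf{w}\|_{K_I} = \|\textbf{w}\|_K$ and the Jacobian cancellation between $|\det\textbf{T}|$ and $\textsc{Vol}(K_I)/\textsc{Vol}(K)$. The containment argument $\textbf{x}_1 - \textbf{x}_2 \in K$ is comparatively routine, as is the final triangle-inequality step, which is standard for any $K$-norm mechanism once the density form and sensitivity bound are in hand.
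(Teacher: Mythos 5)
Your proposal is correct and takes essentially the same route as the paper's own proof: show that the output of Algorithm~\ref{alg-KNorm} has the $K$-norm density of Equation~(\ref{eqn-pdf-K-Norm}) (the paper asserts this in the isotropic space of $K_I$ and relies on the transformation being a bijection, which is equivalent to your change of variables back to $K$), then conclude $\epsilon$-differential privacy by the standard $K$-norm argument. The paper's two-sentence proof leaves both key steps implicit, and your write-up supplies them correctly --- the radial Gamma/uniform density computation with the Jacobian and norm identities, and the containment $\textbf{x}_1-\textbf{x}_2\in K$ for $\textbf{x}_1,\textbf{x}_2\in\Delta\textbf{X}$, which both bounds $||\textbf{x}_1-\textbf{x}_2||_K\leq 1$ and verifies that the vertex-difference construction in Lines 2--3 actually produces the sensitivity hull of Definition~\ref{def-shull1}.
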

\begin{proof}
The isotropic transformation is a unique $\mathbb{R}^2\rightarrow \mathbb{R}^2$ mapping. In the space of $K_I$, it is easy to  prove that the probability distribution of $\textbf{z}'$ is equal to
Equation (\ref{eqn-pdf-K-Norm})
in which $K_I$ is in place of $K$. Therefore, Algorithm \ref{alg-KNorm} is $\epsilon$-differentially private.
\end{proof}

\begin{theorem}
\label{theo-time-KNorm}
Algorithm \ref{alg-KNorm} takes $O(nlog(h)+h^2log(h))$ time where $n$ is the size of $\Delta\textbf{X}$ and $h$ is number of vertices on $Conv(\Delta\textbf{X})$.
\end{theorem}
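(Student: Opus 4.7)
The plan is to walk through Algorithm \ref{alg-KNorm} line by line and charge a cost to each step, observing that the dominant costs come from the two convex-hull computations while everything else is either constant or linear in the hull sizes.

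First I would handle Line \ref{line-ConvX}, the computation of $K'=Conv(\Delta\textbf{X})$ on $n$ input points producing a hull with $h$ vertices. Using an output-sensitive planar convex hull algorithm (e.g.\ Chan's algorithm or the Kirkpatrick--Seidel ``ultimate'' algorithm), this can be done in $O(n\log h)$ time, which accounts for the first term of the bound. Next, forming the pairwise difference set $\Delta\textbf{V}=\bigcup_{\textbf{v}_1,\textbf{v}_2\in \text{vert}(K')}(\textbf{v}_1-\textbf{v}_2)$ touches $O(h^2)$ vectors, so it costs $O(h^2)$ time and produces at most $h^2$ points. The second convex hull in Line \ref{line-sensitivity-hull} then runs on these $O(h^2)$ points; applying a standard $O(m\log m)$ planar convex hull algorithm with $m=h^2$ yields $O(h^2\log h^2)=O(h^2\log h)$, which accounts for the second term.

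For the isotropic transformation (Lines \ref{line-repeat}--\ref{line-sample-end}), I would invoke Corollary \ref{theo-iso-sample}: with $l\ge 4cp^2$ independent uniform samples in $K$, the matrix $\textbf{T}$ puts $K$ in isotropic position with probability at least $1-2^{-p}$. Since $p$ can be taken as a constant (giving any desired constant success probability), $l$ is an absolute constant, independent of $n$ and $h$. Each uniform sample from the planar convex body $K$ can be drawn in time polylogarithmic in a geometric error parameter (e.g.\ by hit-and-run in $O(\log^3(1/\delta))$ time, as already cited), and the $l\times(\text{assemble }\textbf{y}_i\textbf{y}_i^T)$ sum plus the $2\times 2$ inverse square root are all $O(1)$ work. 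The stability check $\|\textbf{T}'-\textbf{T}\|_F$ only triggers a constant number of repetitions on expectation, so the whole block remains $O(1)$ in the parameters of interest.

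Finally, the remaining operations, one uniform sample $\textbf{z}'$ from the planar polygon $K_I$, one Gamma sample $r\sim\Gamma(3,\epsilon^{-1})$, and the output assembly $\textbf{z}=\textbf{x}^*+r\textbf{T}^{-1}\textbf{z}'$, are each $O(1)$ since the ambient dimension is $2$ and $\textbf{T}^{-1}$ is a $2\times 2$ matrix. Summing all contributions gives $O(n\log h)+O(h^2)+O(h^2\log h)+O(1)=O(n\log h+h^2\log h)$, as claimed. The main delicacy in this plan is justifying the $O(n\log h)$ bound for Line \ref{line-ConvX} (rather than the more common $O(n\log n)$): it requires explicitly invoking an output-sensitive convex hull algorithm and the observation that, thereafter, only the $h$ vertices of $K'$ participate in the subsequent steps.
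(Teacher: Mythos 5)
Your proposal is correct and follows essentially the same line-by-line accounting as the paper's proof: $O(n\log h)$ for the first (output-sensitive) convex hull, $O(h^2\log h)$ for the hull of the $O(h^2)$ pairwise differences, and $O(1)$ for the sampling block since the number of samples is bounded by a constant via Corollary \ref{theo-iso-sample}. You simply supply more justification (naming an output-sensitive hull algorithm, bounding $|\Delta\textbf{V}|$) than the paper's terser version.
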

\begin{proof}
Line \ref{line-ConvX} takes time $O(nlog(h))$ where $n$ is the size of $\Delta\textbf{X}$ and $h$ is number of vertices on $Conv(\Delta\textbf{X})$. Line \ref{line-sensitivity-hull} takes time $O(h^2log(h))$. Because the number of samples is bounded by a constant, lines \ref{line-repeat}$\sim$\ref{line-sample-end} need $O(1)$ time. Thus the overall complexity is $O(nlog(h)+h^2log(h))$.
\end{proof}

\begin{theorem}
\label{theo-alg-error}
Algorithm \ref{alg-KNorm} has error $O\left(\frac{1}{\epsilon}\sqrt{\textsc{Area}(K)}\right)$ at most, which means
it achieves the lower bound in Theorem \ref{theorem-lowerbound}.
\end{theorem}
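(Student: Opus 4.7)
The plan is to chain together three results already established in the paper: the error bound of the $K$-norm mechanism in isotropic position (Theorem \ref{error-K-norm}), the isotropic sampling guarantee (Corollary \ref{theo-iso-sample}), and the lower bound (Theorem \ref{theorem-lowerbound}). The high-level picture is that Algorithm \ref{alg-KNorm} is simply an efficient implementation of the $K$-norm mechanism on the sensitivity hull $K$, with the isotropic transform $\textbf{T}$ serving only to make uniform sampling on $K$ tractable.

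First, I would verify that the output of Algorithm \ref{alg-KNorm} is distributionally identical to a $K$-norm sample on the sensitivity hull $K$ of $\Delta\textbf{X}$. Since $\textbf{z}'$ is uniform on $K_I=\textbf{T}K$, the pullback $\textbf{T}^{-1}\textbf{z}'$ is uniform on $K$. A short change-of-variables argument then shows that multiplying a point uniform on $K$ by an independent $r\sim\Gamma(d+1,\epsilon^{-1})$ with $d=2$ yields a variate with density proportional to $\exp(-\epsilon\|\cdot\|_K)$, which is precisely (\ref{eqn-pdf-K-Norm}); differentiating $\int_{\|\textbf{y}\|_K}^{\infty} f(r)/r^d\, dr$ with respect to $\|\textbf{y}\|_K$ forces $f(r)\propto r^d e^{-\epsilon r}$, i.e. $\Gamma(d+1,\epsilon^{-1})$. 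Thus the isotropic transform is purely a computational vehicle and the output, read in the original coordinates, is the $K$-norm distribution on $K$.

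Second, Corollary \ref{theo-iso-sample} guarantees that the sample-based matrix $\textbf{T}$ constructed in lines \ref{line-sample-start}--\ref{line-sample-end} puts $K_I=\textbf{T}K$ in isotropic position with probability at least $1-2^{-p}$, so that $K$ is $O(1)$-approximately isotropic via the linear map $\textbf{T}$. Applying Theorem \ref{error-K-norm} with $C=O(1)$ then yields
\[
\textsc{Error}(\text{PIM})\le O(1)\cdot \textsc{LB}(K),
\]
where $\textsc{LB}(K)$ is the differential-privacy lower bound associated with $K$. Substituting the bound $\textsc{LB}(K)=\Omega\!\left(\tfrac{1}{\epsilon}\sqrt{\textsc{Area}(K)}\right)$ from Theorem \ref{theorem-lowerbound} produces the claimed upper bound $O\!\left(\tfrac{1}{\epsilon}\sqrt{\textsc{Area}(K)}\right)$, and since this matches the lower bound up to constants, PIM is order-optimal.

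The main obstacle, as I see it, is the bookkeeping in the first step: Theorem \ref{error-K-norm} is stated for the $K$-norm mechanism run with $K$ already in (approximately) isotropic position, whereas PIM generates noise in $K_I$-coordinates and pulls it back through $\textbf{T}^{-1}$ to the original coordinates. I would handle this by pinning down the output density explicitly (as sketched above) so that the conclusion of Theorem \ref{error-K-norm} can be interpreted as a statement about the $K$-norm distribution on $K$ with respect to the original $\ell_2$ metric, using $\textbf{T}$ merely as a witness to the $O(1)$-approximately isotropic position required by the constant $C$ in the bound. Once this reconciliation is in place, the three-step chain closes and the theorem follows.
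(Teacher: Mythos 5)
Your first step is sound and is consistent with what the paper relies on: since $\textbf{T}^{-1}\textbf{z}'$ is uniform on $K$ and $r\sim\Gamma(3,\epsilon^{-1})$, the released noise $r\textbf{T}^{-1}\textbf{z}'$ has exactly the $K$-norm density of Equation (\ref{eqn-pdf-K-Norm}) with the original hull $K$ in place of $K_I$. The gap is in your second step. Corollary \ref{theo-iso-sample} guarantees that $K_I=\textbf{T}K$ is in isotropic position; it does \emph{not} make $K$ itself $O(1)$-approximately isotropic. Approximate isotropy is a property of the body in its given coordinates (its covariance matrix being within a factor $C$ of a multiple of the identity), and a generic sensitivity hull --- say a long thin polygon of diameter $a$ and width $b\ll a$ --- has $C\approx a/b$, which is unbounded. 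A linear image in isotropic position exists for \emph{every} convex body, so $\textbf{T}$ cannot serve as a ``witness'' to small $C$. Consequently Theorem \ref{error-K-norm}, applied as you propose to the $K$-norm mechanism on $K$ in the original coordinates, yields only $O(C)\,\textsc{LB}(K)$ with an uncontrolled $C$: for the thin polygon the plain $K$-norm mechanism has $\ell_2$ error $\Theta(a/\epsilon)$ while $\frac{1}{\epsilon}\sqrt{\textsc{Area}(K)}=\Theta(\sqrt{ab}/\epsilon)$, so the three-step chain does not close. The ``bookkeeping'' you flag at the end is in fact the entire difficulty, and reinterpreting $\textbf{T}$ as a witness does not discharge it.

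The paper takes a different route that avoids invoking Theorem \ref{error-K-norm} on $K$ at all: it computes $\textsc{Error}^2$ directly as the second moment of the noise, evaluates that moment in the isotropic space using the isotropic property $\mathbb{E}_{\textbf{x}\in K_I}\|\textbf{x}\|_2^2=2L_{K_I}^2\textsc{Area}(K_I)$ together with the fact that the planar isotropic constant $L_{K_I}$ is bounded by a universal constant, and then converts $det(\textbf{T}^{-1})\textsc{Area}(K_I)$ back to $\textsc{Area}(K)$ to match the form of Theorem \ref{theorem-lowerbound}. To repair your argument you would need either to reproduce that moment computation, or to apply Theorem \ref{error-K-norm} to $K_I$ (which genuinely is isotropic) and then account explicitly for how the $\ell_2$ error transforms under $\textbf{T}^{-1}$ --- precisely the step your proposal leaves open.
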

\begin{proof}
From
Equation (\ref{eqn-pdf-K-Norm}),
we can obtain the following error in the isotropic space
\begin{align*}
\textsc{Error}^2=\int_{\textbf{x}} Pr(\textbf{x})||\textbf{x}||_2^2d\textbf{x}
=\frac{12det(\textbf{T}^{-1})}{\epsilon^2 \textsc{Area}(K_I)}\mathop\mathbb{E}\limits_{\textbf{x}\in K_I}||\textbf{x}||_2^2
\end{align*}
where $det(\textbf{T}^{-1})$ is the determinant of $\textbf{T}^{-1}$. By the isotropic property,
\begin{align*}
\textsc{Error}^2=\frac{12 L_{K_I}^2}{\epsilon^2}det(\textbf{T}^{-1})\textsc{Area}(K_I)
\end{align*}
where $L_{K_I}$ is the isotropic constant of $K_I$. After transforming back to the original space, it is easy to show that $det(\textbf{T}^{-1})\textsc{Area}(K_I)$ becomes $\textsc{Area}(K)$. Therefore, Algorithm \ref{alg-KNorm} has error at most $O(\frac{1}{\epsilon}\sqrt{\textsc{Area}(K)})$, which is the lower bound in  Theorem \ref{theorem-lowerbound}. Hence Algorithm \ref{alg-KNorm} is optimal.
\end{proof}

\subsection{Location Inference}
\label{sec-loc-infr}
The inference of line \ref{line-bayesian} in Algorithm \ref{alg-framework} is a general statement because inference methods depend on specific release algorithms. To implement the inference for PIM, we need to transform the location $\textbf{s}_i$ and the released location $\textbf{z}_t$ to the isotropic space of $K_I$. Then in Equation (\ref{eqn-posterior}), the probability $Pr(\textbf{z}_t| \textbf{u}_t^*=\textbf{s}_i)$ can be computed as follows. This completes the whole algorithm.
\begin{flalign*}
\hspace{0.7cm}
Pr(\textbf{z}_t| \textbf{u}_t^*=\textbf{s}_i)
&= \frac{\epsilon^2}{2\textsc{Area}(K_I)}exp(-\epsilon||\textbf{z}_t'-\textbf{s}_i'||_{K_I})&\\
&\textbf{z}_t'=\textbf{T}\textbf{z};\ \textbf{s}_i'=\textbf{T}\textbf{s}_i&
\end{flalign*}

\section{Experimental Evaluation}
\label{sec-experiment}
In this section we present experimental evaluation of our method.
All algorithms were implemented in Matlab on a PC with 2.9 GHz Intel i7 CPU and 8 GB Memory.

\vspace{2mm}\noindent{\bf Datasets.}
We used two real-world datasets.
\begin{enumerate}[I]
\item
Geolife data. Geolife data \cite{GeoLife-Zheng-10} was collected from $182$ users in a period of over three years.
It recorded a wide range of users' outdoor movements, represented by a series of tuples containing latitude, longitude and timestamp.
The trajectories were updated in a high frequency, e.g. every $1\sim 60$ seconds.
We extracted all the trajectories within the $3$rd ring of Beijing to train the Markov model, with the map partitioned into cells of  $0.34\times 0.34\ {km}^2$.
\item
Gowalla data. Gowalla data \cite{KDD-Gowalla-2011} contains $6,442,890$ check-in locations of $196,586$ users over the period of Feb. 2009 to Oct. 2010. We extracted all the check-ins in Los Angeles to train the Markov model, with the map partitioned into cells of $0.89\times 0.89\ {km}^2$. Because check-ins were logged in a relatively low frequency, e.g. every $1\sim 50$ minutes,
we can examine the difference of the results from Gowalla and Geolife.
\end{enumerate}

\vspace{2mm}
\noindent{\bf Metrics.} We used the following metrics in our experiment, including two internal metrics: size of $\Delta\textbf{X}$, drift ratio, and two sets of utility metrics:  distance, precision and recall.
We skip the runtime report because most locations were released within $0.3$ second by PIM.
\begin{enumerate}[I]
\item
Since our privacy definition is based on $\delta$-location set $\Delta\textbf{X}$, we evaluated the size of $\Delta\textbf{X}$ to understand how $\Delta\textbf{X}$ grows or changes.
\item
The definition of $\Delta\textbf{X}$ and the potential limit of Markov model may cause the true location to fall outside $\Delta\textbf{X}$ (drift). Thus we measured the drift ratio computed as the number of timestamps the true location is excluded in $\Delta\textbf{X}$ over total number of timestamps.
\item
We measured the distance between the released location and the true location, which can be considered as a general utility metric independent of specific location based applications.
%
\item
We also run $k$ nearest neighbor ($k$NN) queries using the released locations and report its precision and recall compared to the true $k$NN set using the original location.  Suppose the true $k$NN set is $R$, the returned $k'$NN set (we set $k'\geq k$) is $R'$, precision is defined as $|R \cap R'| / k'$, and recall is defined as $|R \cap R'| /k$.
\end{enumerate}

\begin{figure}[t]
\centering
\begin{subfigure}{0.233\textwidth}
\centering
\includegraphics[width=4.2cm]{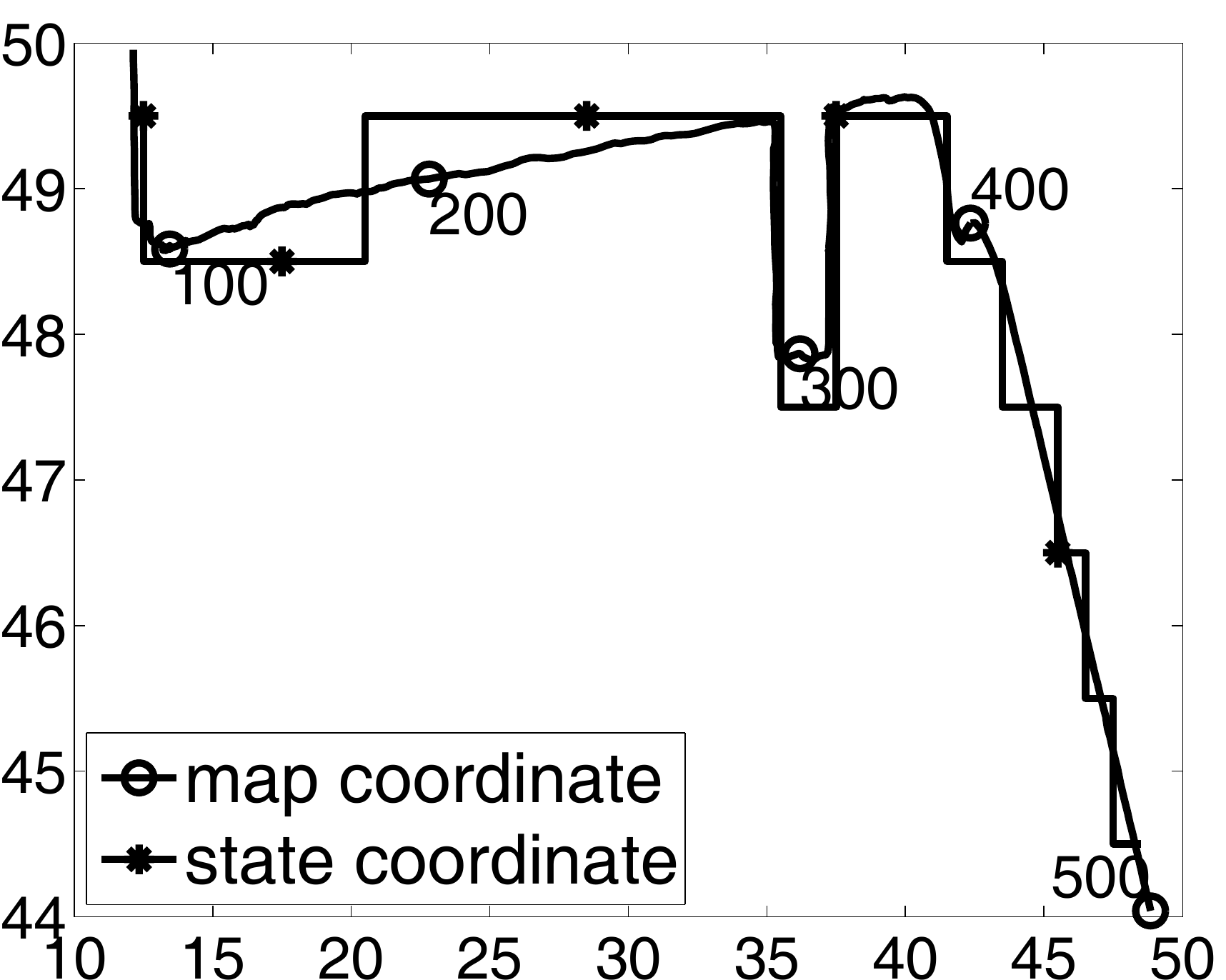}
\caption{{\small True trace }}
\label{Figure-example-trace0}
\end{subfigure}
\begin{subfigure}{0.233\textwidth}
\centering
\includegraphics[width=4.2cm]{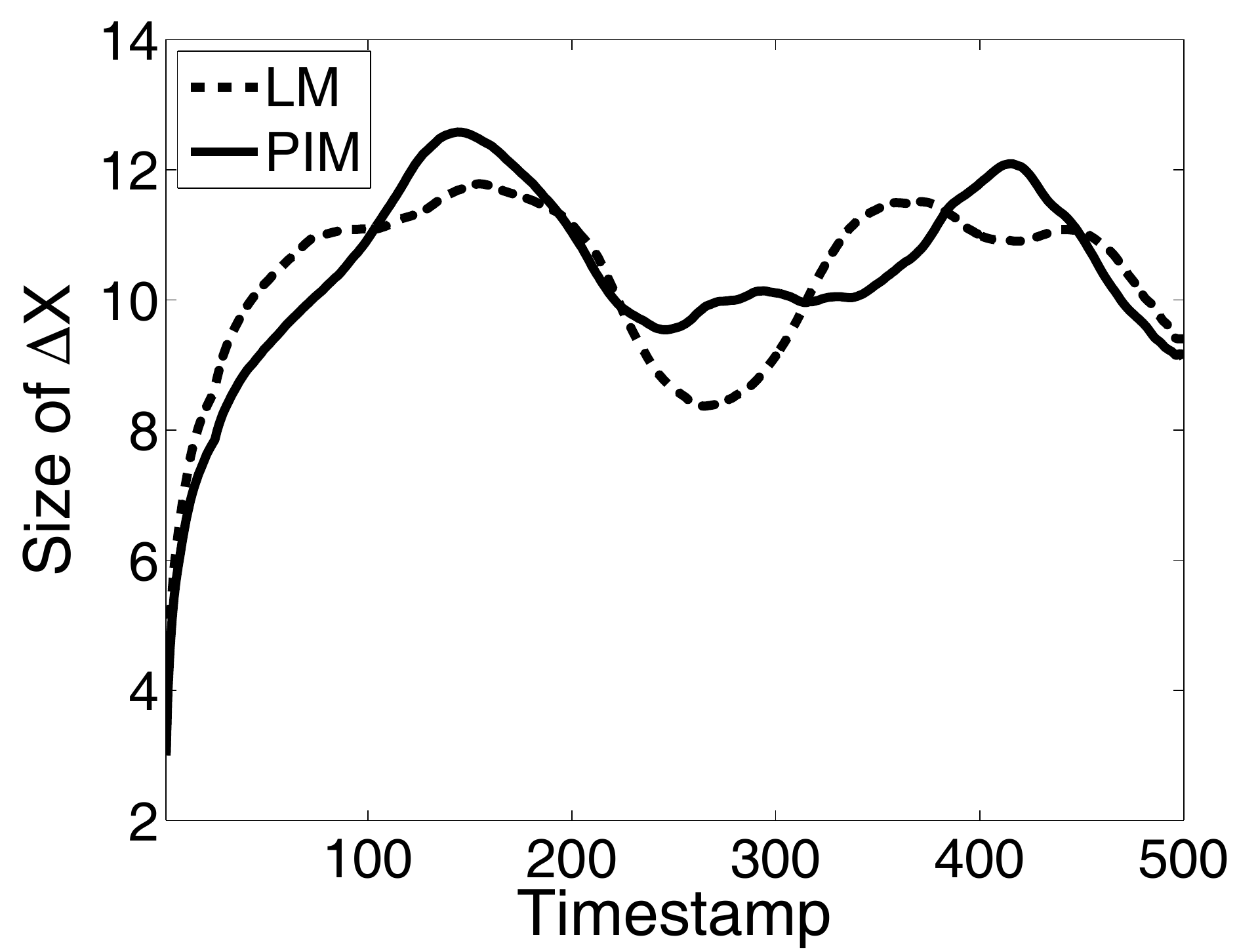}
\caption{{\small Size of $\Delta\textbf{X}$}}
\label{Figure-aTraj-size-001}
\end{subfigure}
\\
\begin{subfigure}{0.233\textwidth}
\centering
\includegraphics[width=4.2cm]{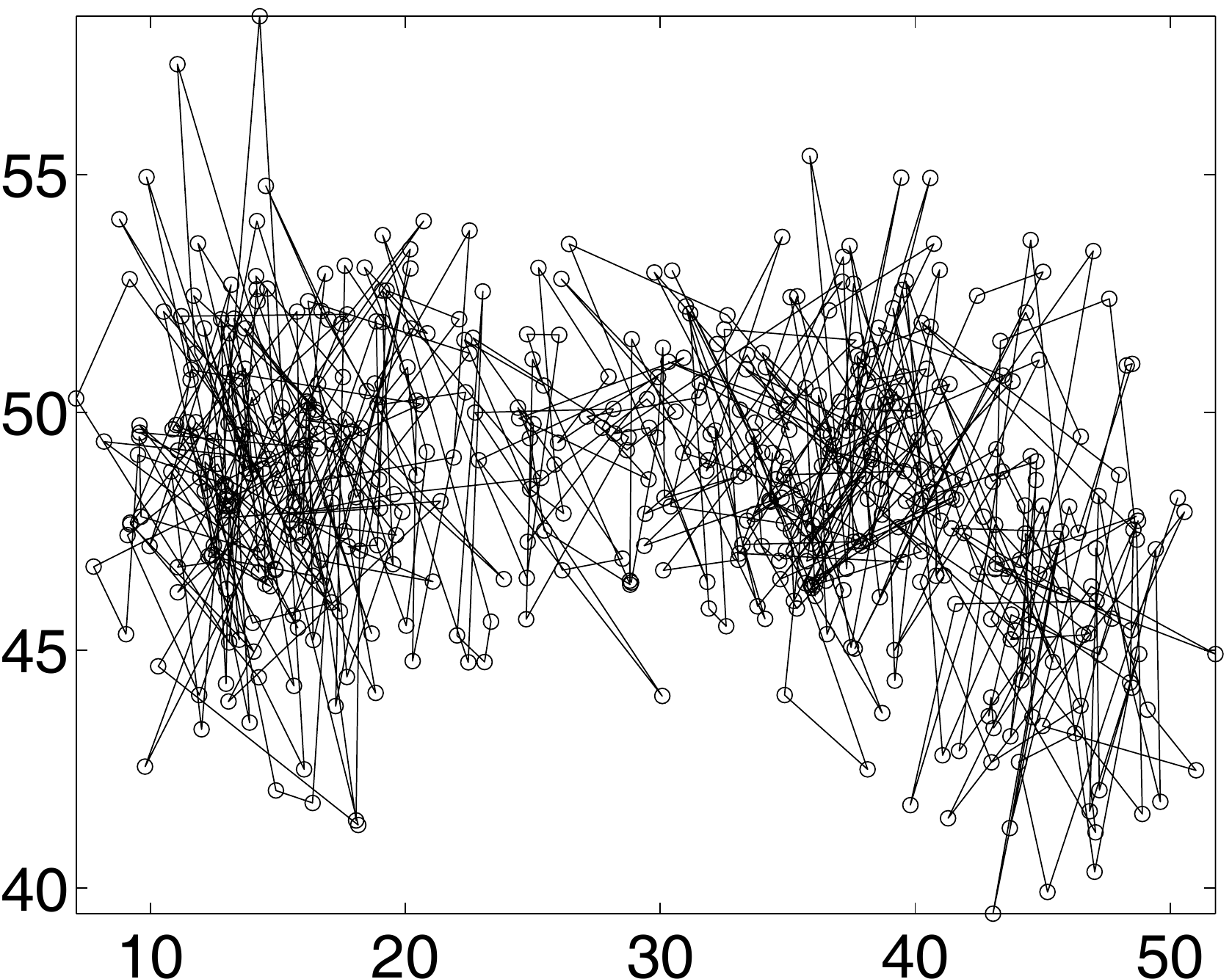}
\caption{{\small LM released trace}}
\label{Figure-aTraj-traj-LM-001}
\end{subfigure}
\begin{subfigure}{0.233\textwidth}
\centering
\includegraphics[width=4.2cm]{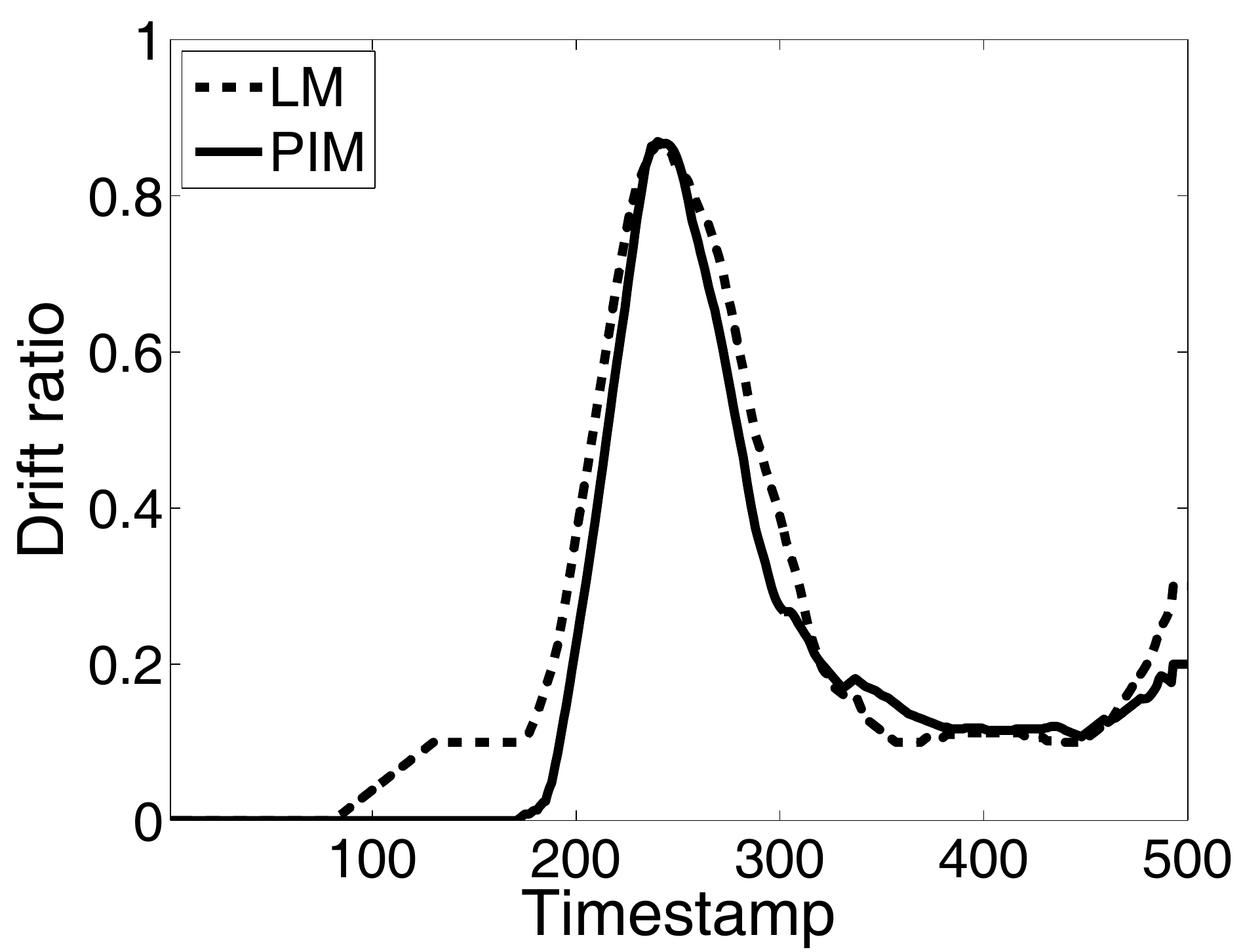}
\caption{{\small Drift ratio}}
\label{Figure-aTraj-drift-001}
\end{subfigure}
\\
\begin{subfigure}{0.233\textwidth}
\centering
\includegraphics[width=4.2cm]{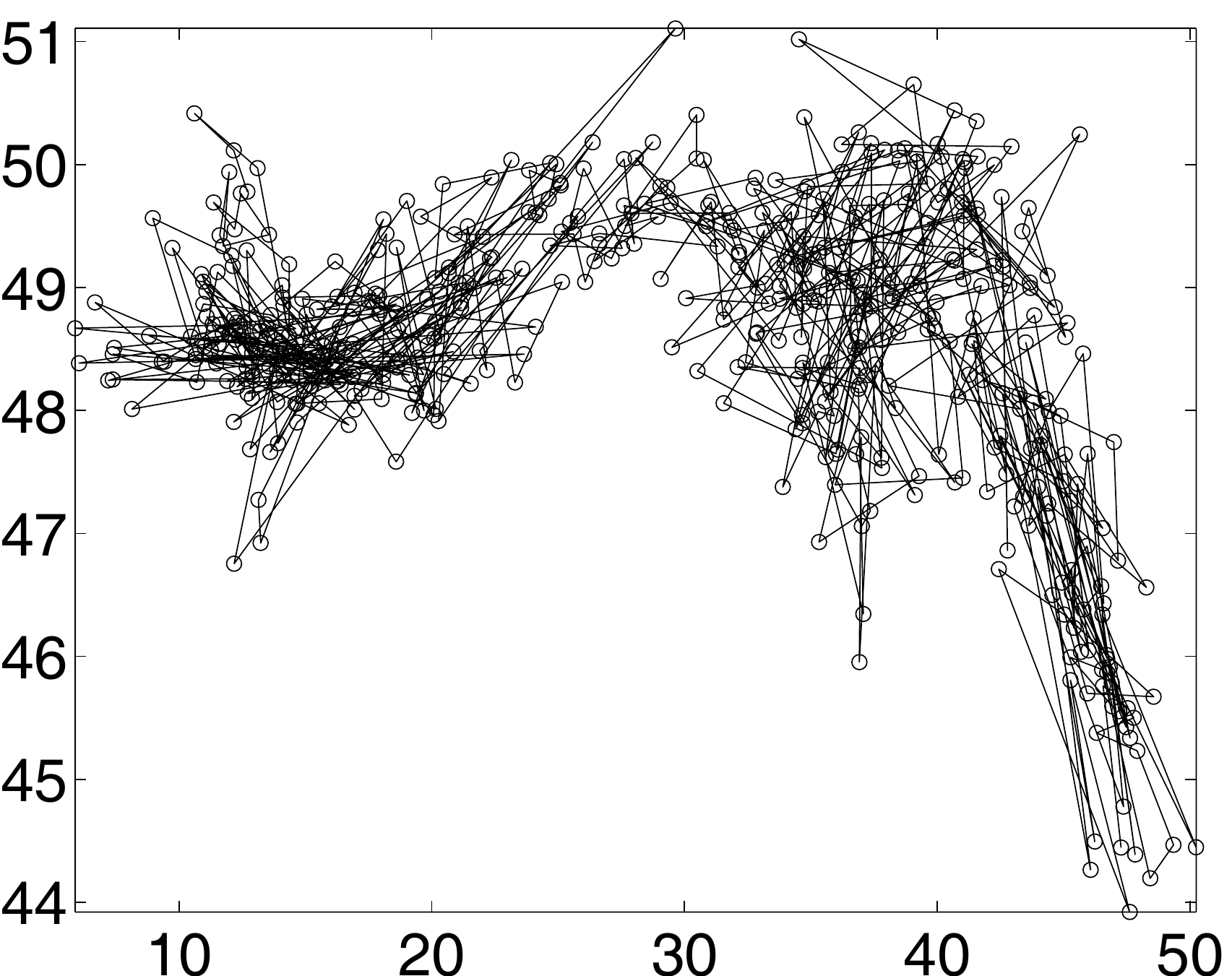}
\caption{{\small PIM released trace}}
\label{Figure-aTraj-traj-IM-001}
\end{subfigure}
\begin{subfigure}{0.233\textwidth}
\centering
\includegraphics[width=4.2cm]{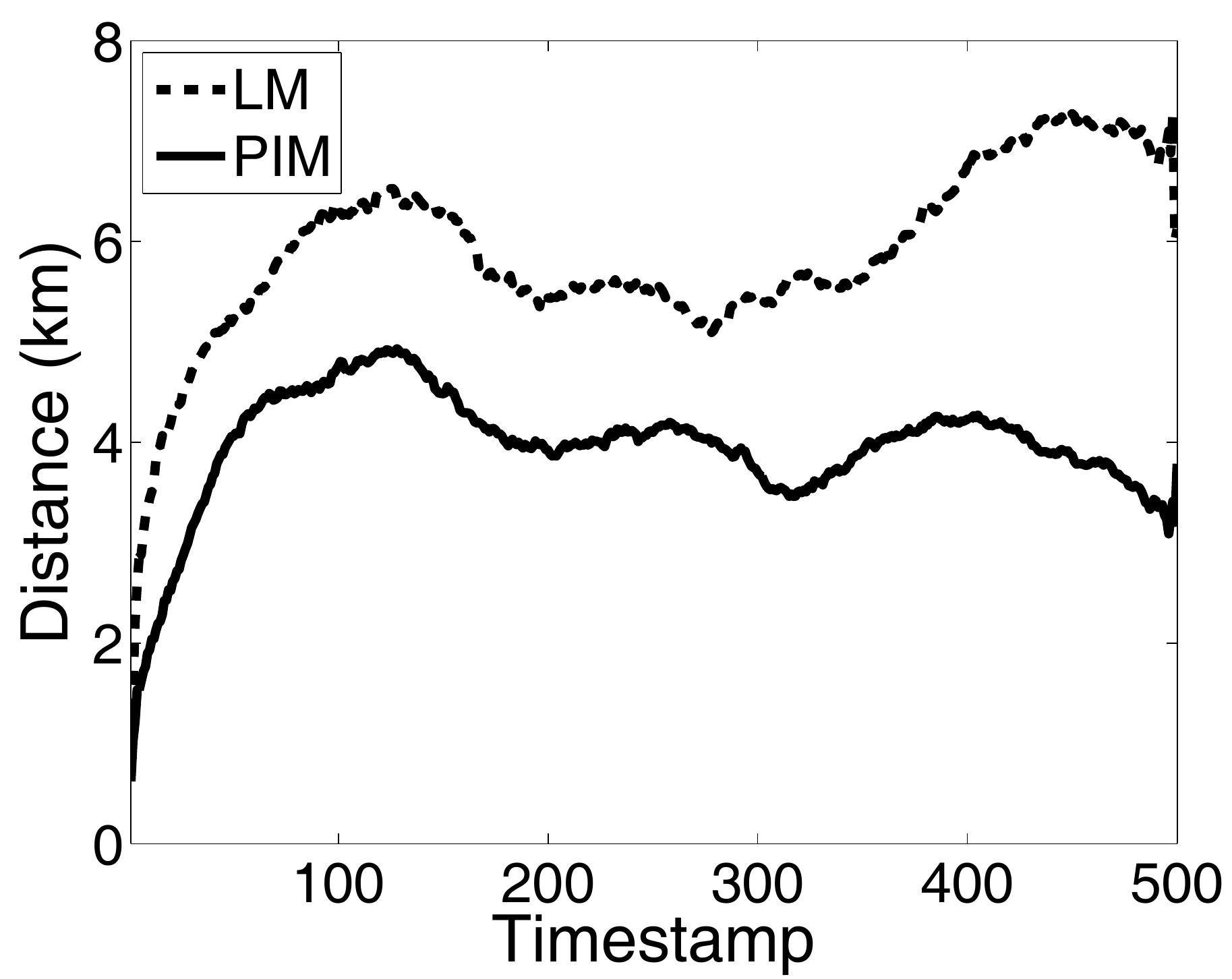}
\caption{{\small Distance}}
\label{Figure-aTraj-dist-001}
\end{subfigure}
\caption{{\small Performance over time:}
{\small (a) The true (original) trace;}
{\small (c)(e) Released traces;}
{\small (b) Size of $\Delta\textbf{X}$ over time;}
{\small (d) Drift ratio over time;}
{\small (f) Distance over time.}
}
\label{Figure-overtime}
\end{figure}

\begin{figure}[!ht]
\begin{subfigure}{0.233\textwidth}
\centering
\includegraphics[width=4.2cm]{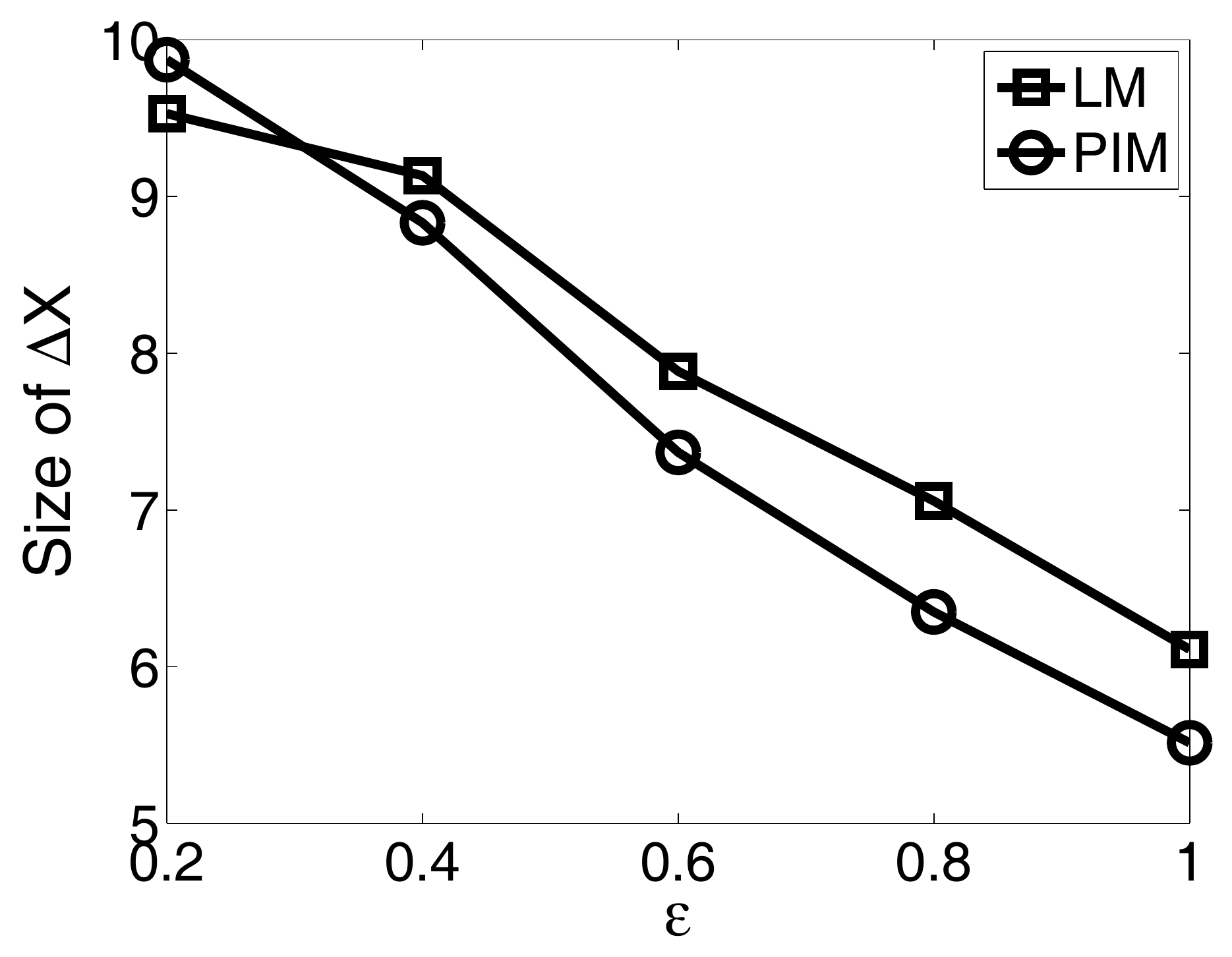}
\caption{{\small Size vs. $\epsilon$}}
\label{Figure-epsilon-size-4}
\end{subfigure}
\begin{subfigure}{0.233\textwidth}
\centering
\includegraphics[width=4.2cm]{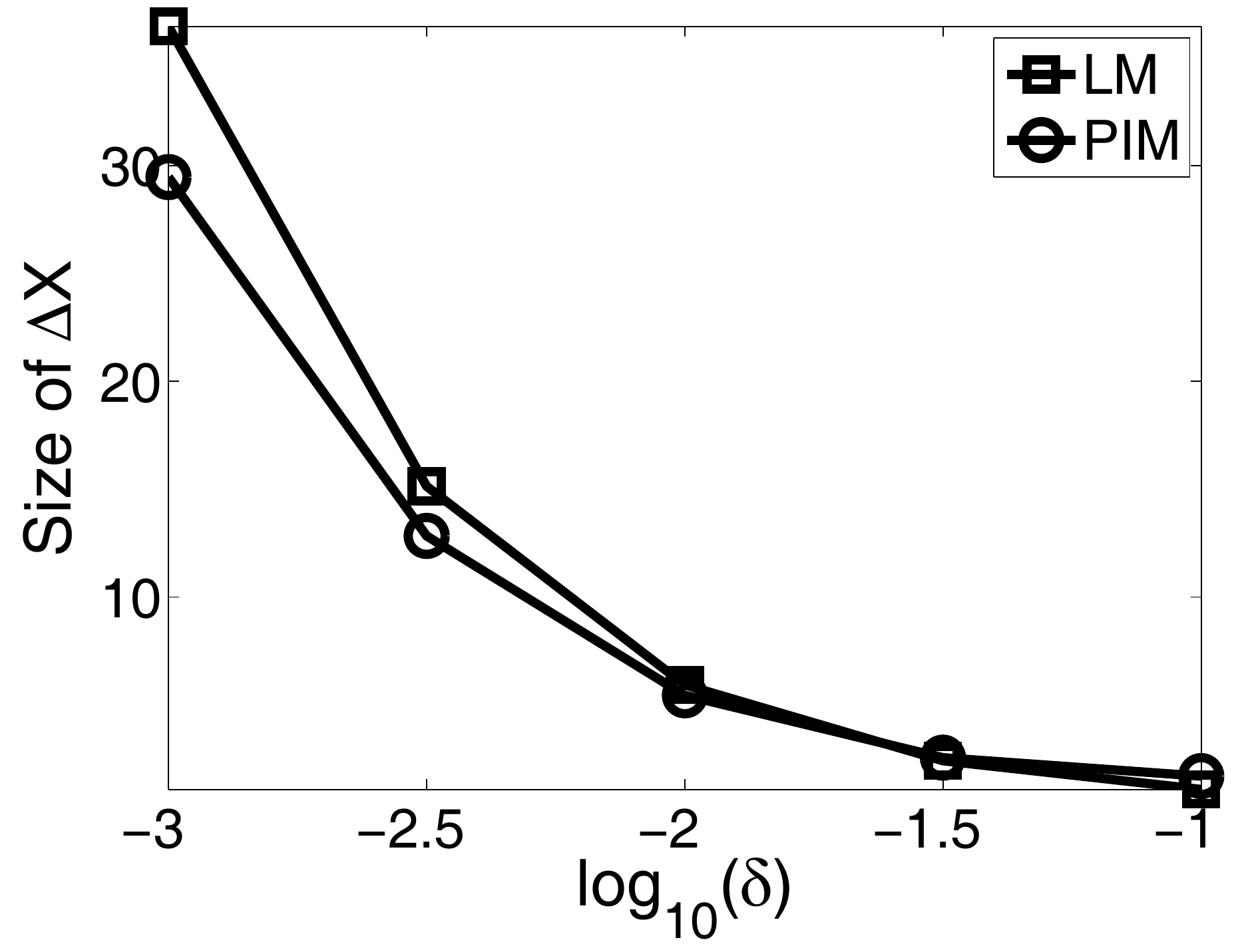}
\caption{{\small Size vs. $\delta$ }}
\label{Figure-delta-size-4}
\end{subfigure}
\begin{subfigure}{0.233\textwidth}
\centering
\includegraphics[width=4.2cm]{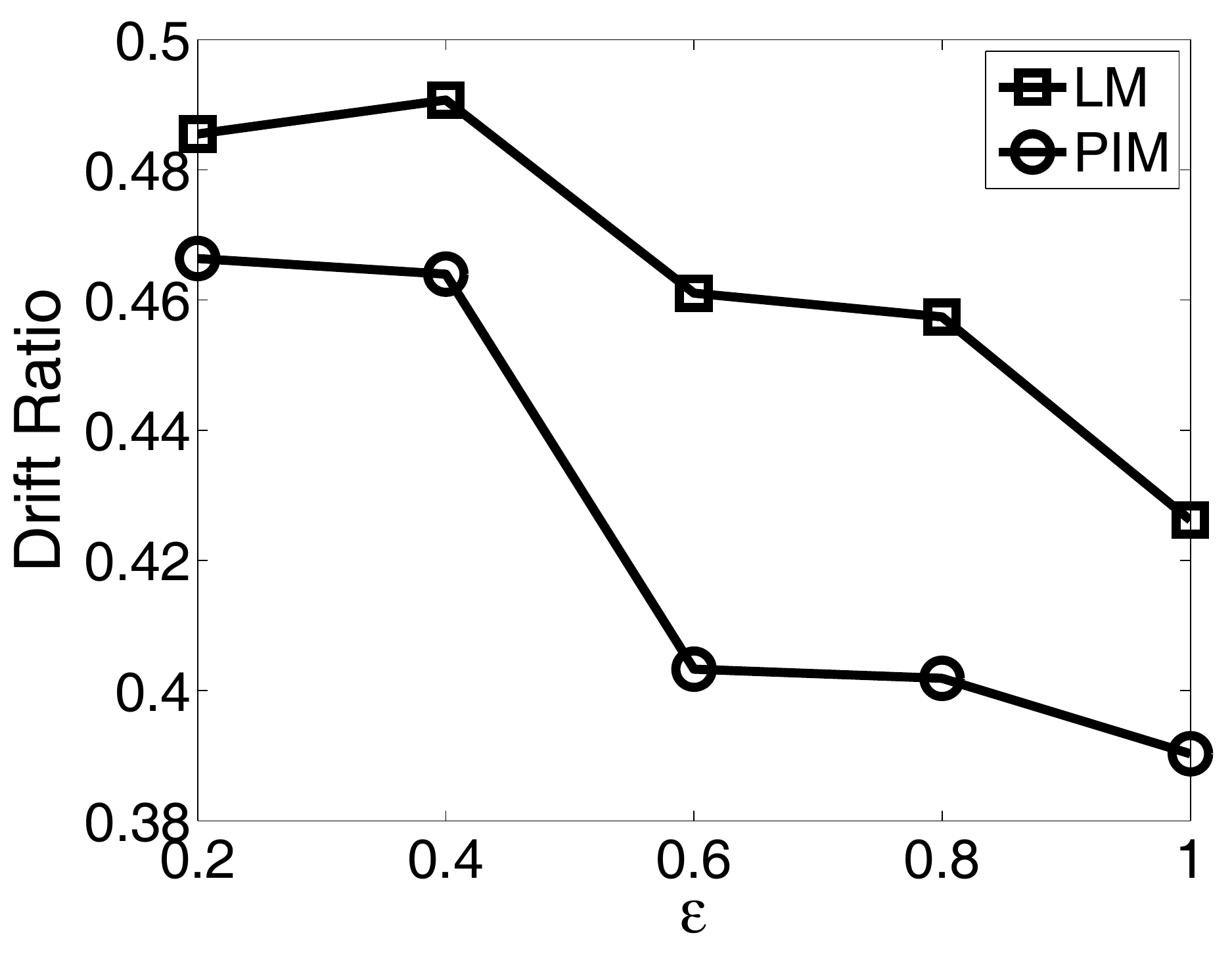}
\caption{{\small Drift Ratio vs. $\epsilon$ }}
\label{Figure-epsilon-drift-4}
\end{subfigure}
\begin{subfigure}{0.233\textwidth}
\centering
\includegraphics[width=4.2cm]{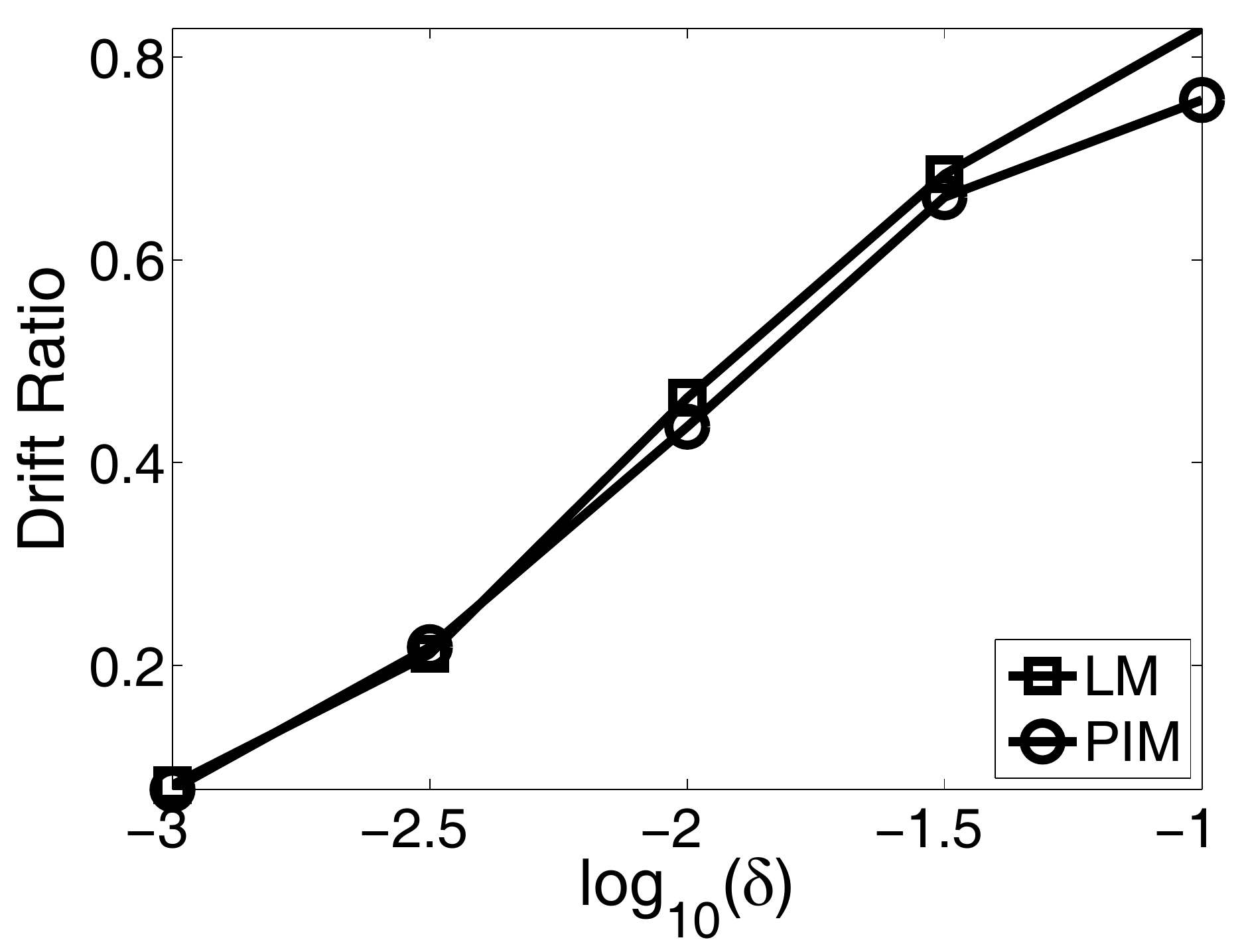}
\caption{{\small Drift Ratio vs. $\delta$ }}
\label{Figure-delta-drift-4}
\end{subfigure}

\begin{subfigure}{0.233\textwidth}
\centering
\includegraphics[width=4.2cm]{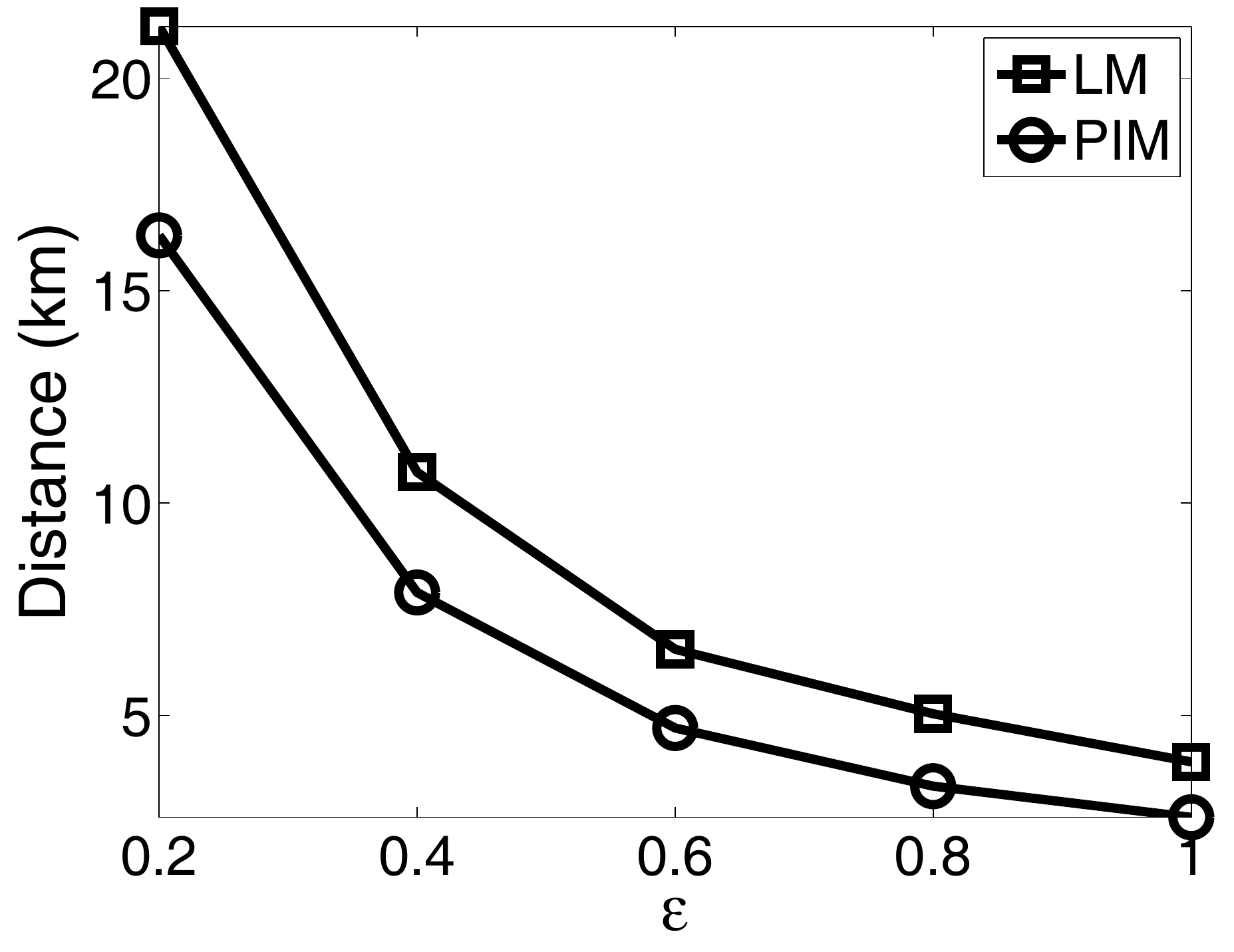}
\caption{{\small Distance vs. $\epsilon$ }}
\label{Figure-epsilon-Var-4}
\end{subfigure}
\begin{subfigure}{0.233\textwidth}
\centering
\includegraphics[width=4.2cm]{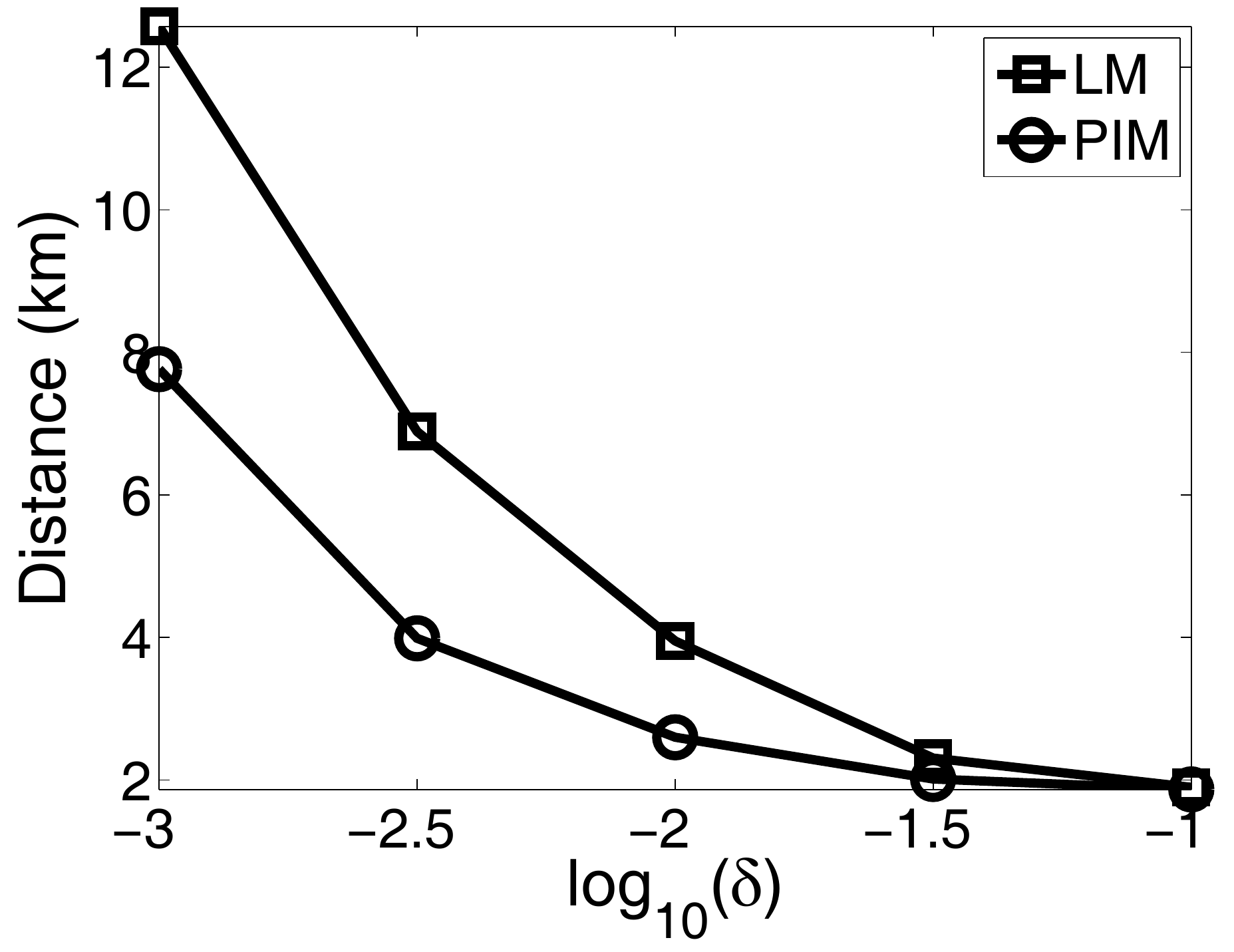}
\caption{{\small Distance vs. $\delta$ }}
\label{Figure-delta-Var-4}
\end{subfigure}
\caption{{\small Impact of parameters on GeoLife data with popular $\textbf{M}$:  }
{\small (a)(b) Impact of $\epsilon$ and $\delta$ on size of $\Delta\textbf{X}$;}
{\small (c)(d) Impact of $\epsilon$ and $\delta$ on drift ratio;}
{\small (e)(f) Impact of $\epsilon$ and $\delta$ on distance.}
}
\label{Figure-impact-GeoLife-popular}
\end{figure}

\begin{figure}[!ht]
\begin{subfigure}{0.233\textwidth}
\centering
\includegraphics[width=4.2cm]{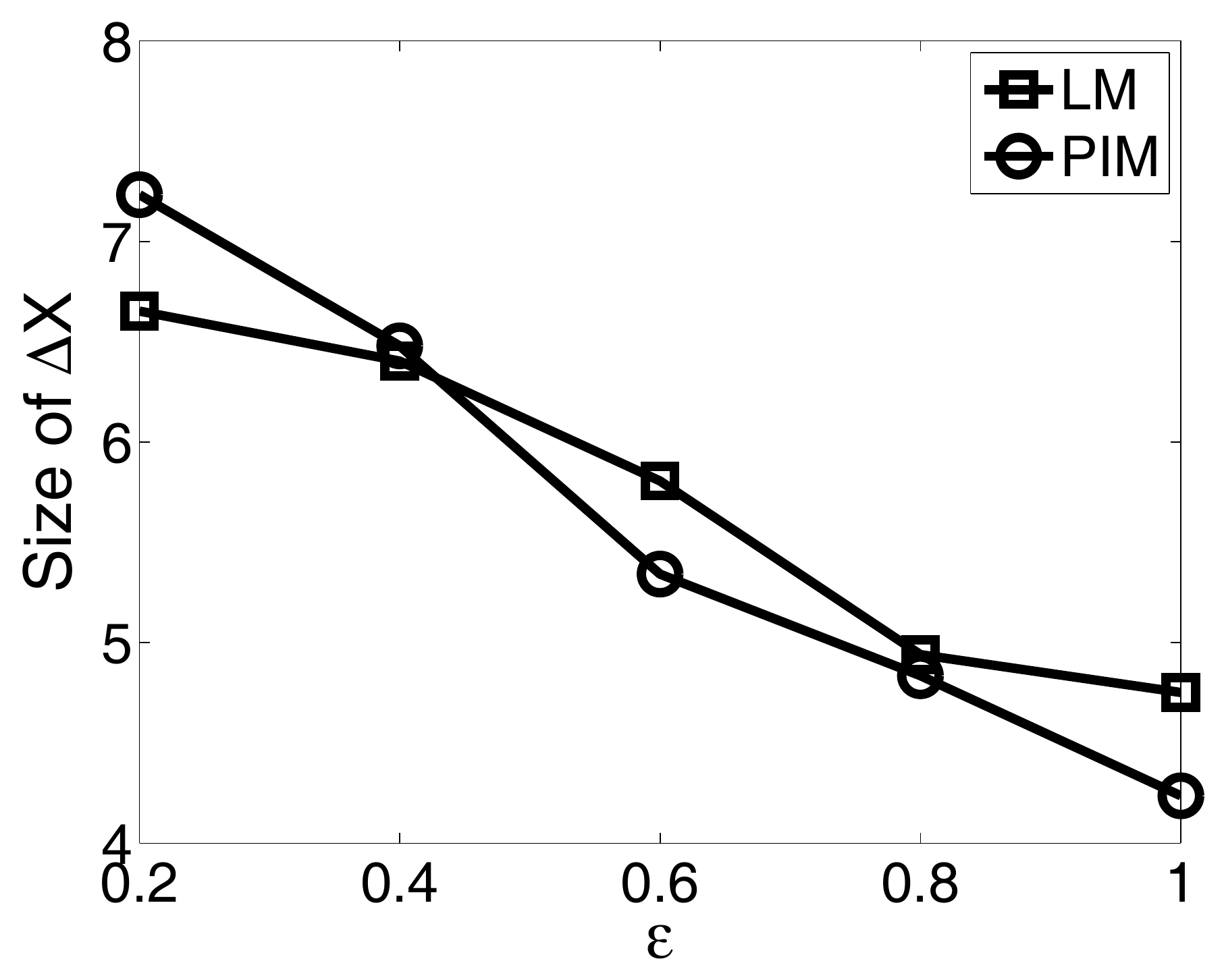}
\caption{{\small Size vs. $\epsilon$ }}
\label{Figure-epsilon-size-41}
\end{subfigure}
\begin{subfigure}{0.233\textwidth}
\centering
\includegraphics[width=4.2cm]{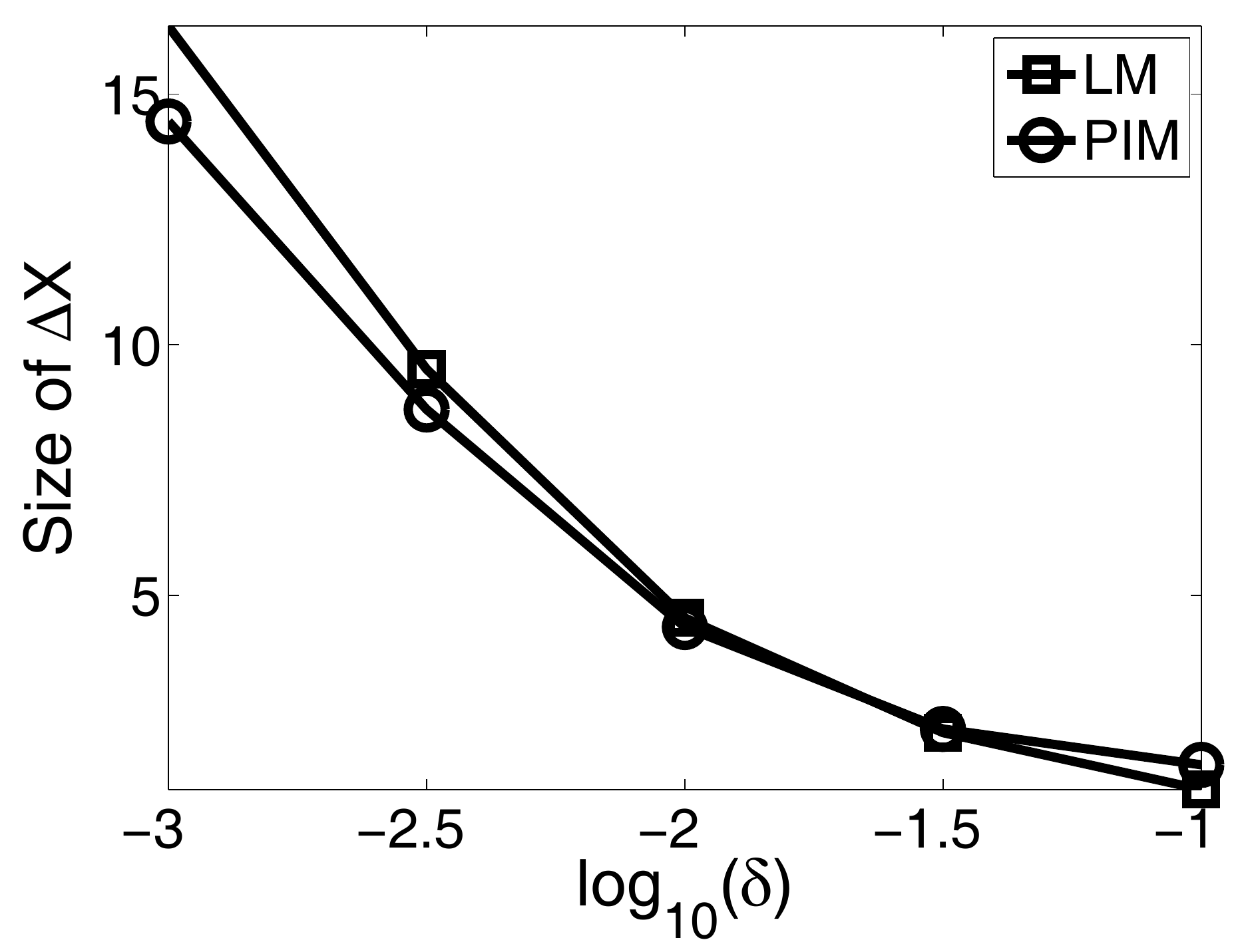}
\caption{{\small Size vs. $\delta$ }}
\label{Figure-delta-size-41}
\end{subfigure}

\begin{subfigure}{0.233\textwidth}
\centering
\includegraphics[width=4.2cm]{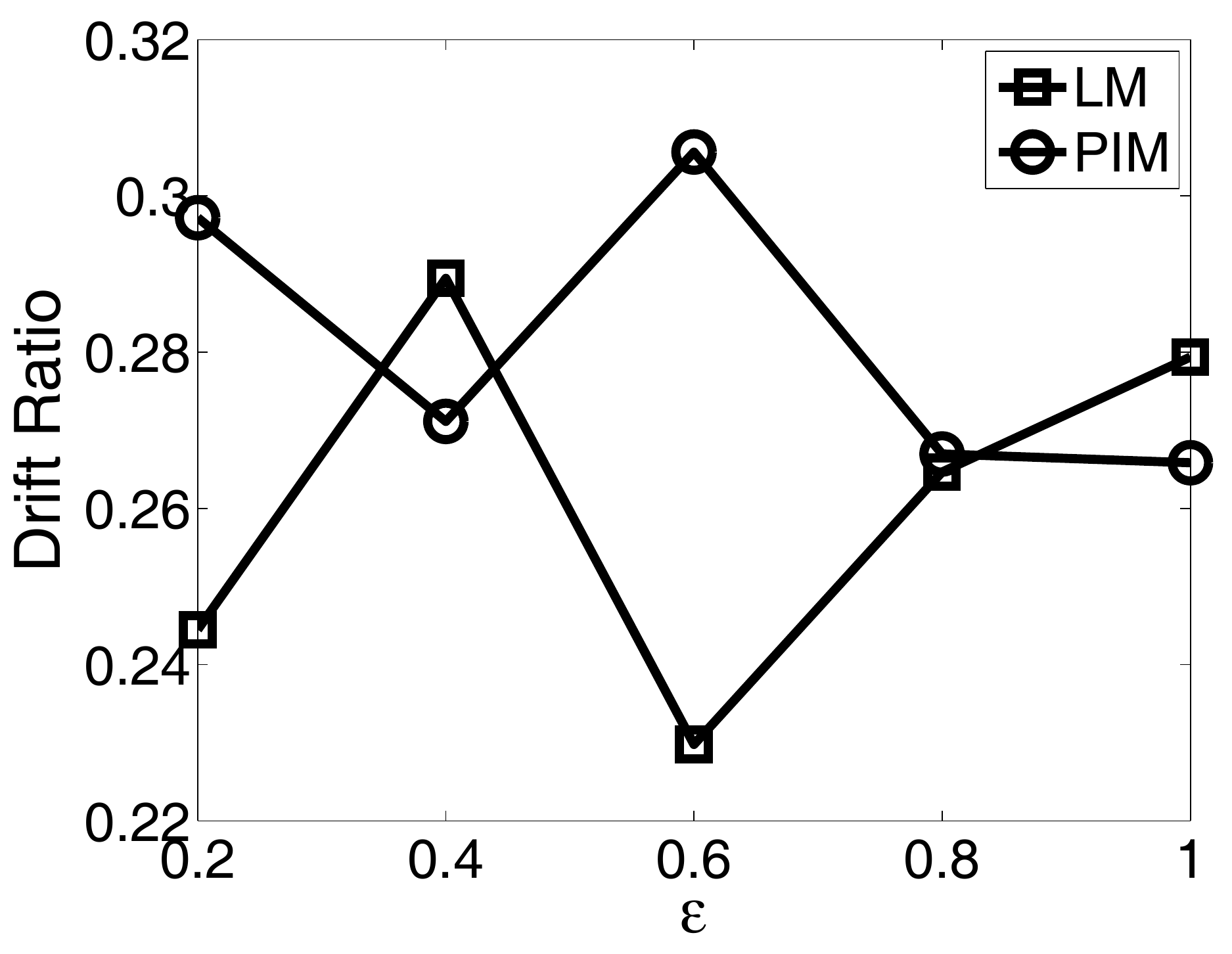}
\caption{{\small Drift Ratio vs. $\epsilon$ }}
\label{Figure-epsilon-drift-41}
\end{subfigure}
\begin{subfigure}{0.233\textwidth}
\centering
\includegraphics[width=4.2cm]{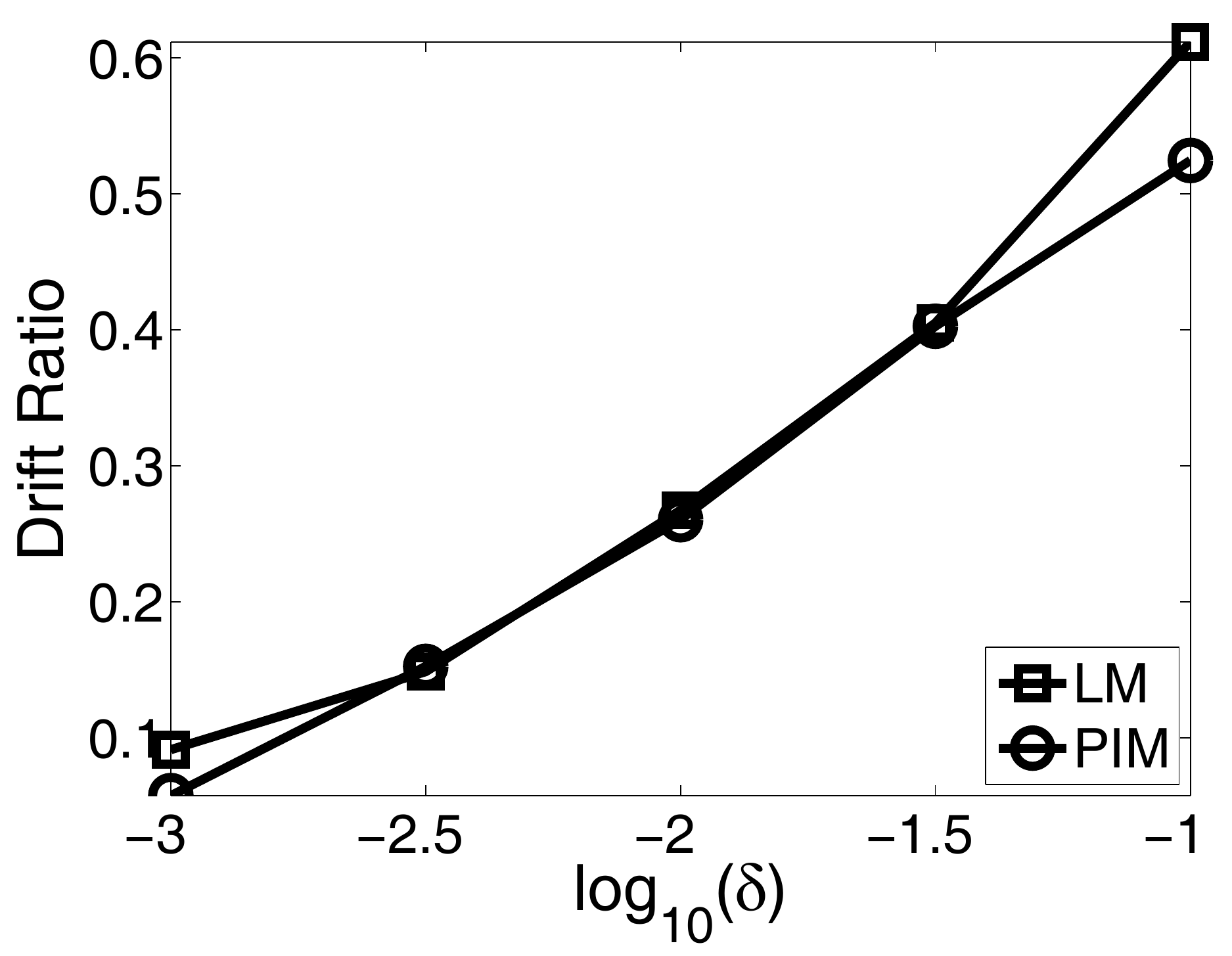}
\caption{{\small Drift Ratio vs. $\delta$ }}
\label{Figure-delta-drift-41}
\end{subfigure}

\begin{subfigure}{0.233\textwidth}
\centering
\includegraphics[width=4.2cm]{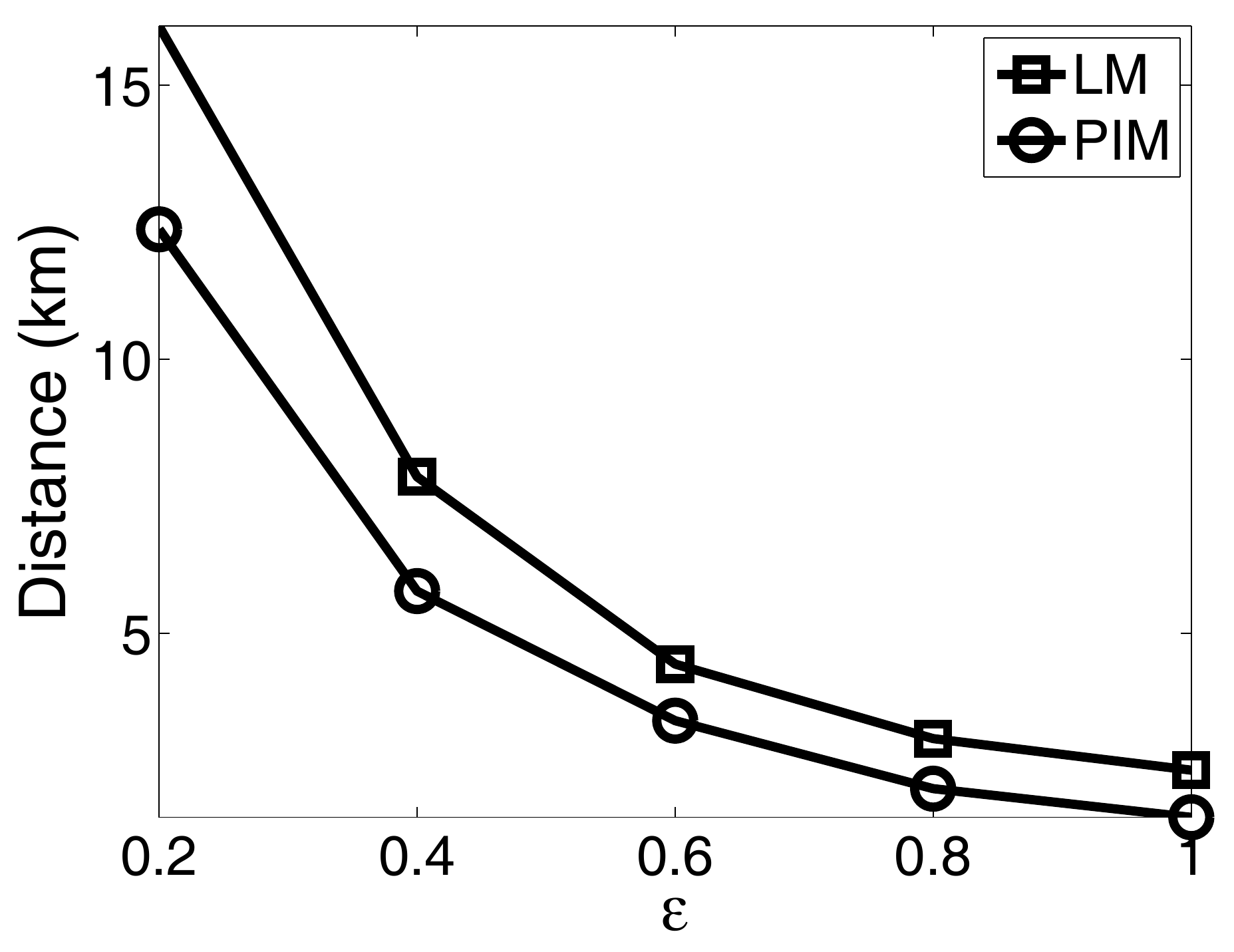}
\caption{{\small Distance vs. $\epsilon$ }}
\label{Figure-epsilon-Var-41}
\end{subfigure}
\begin{subfigure}{0.233\textwidth}
\centering
\includegraphics[width=4.2cm]{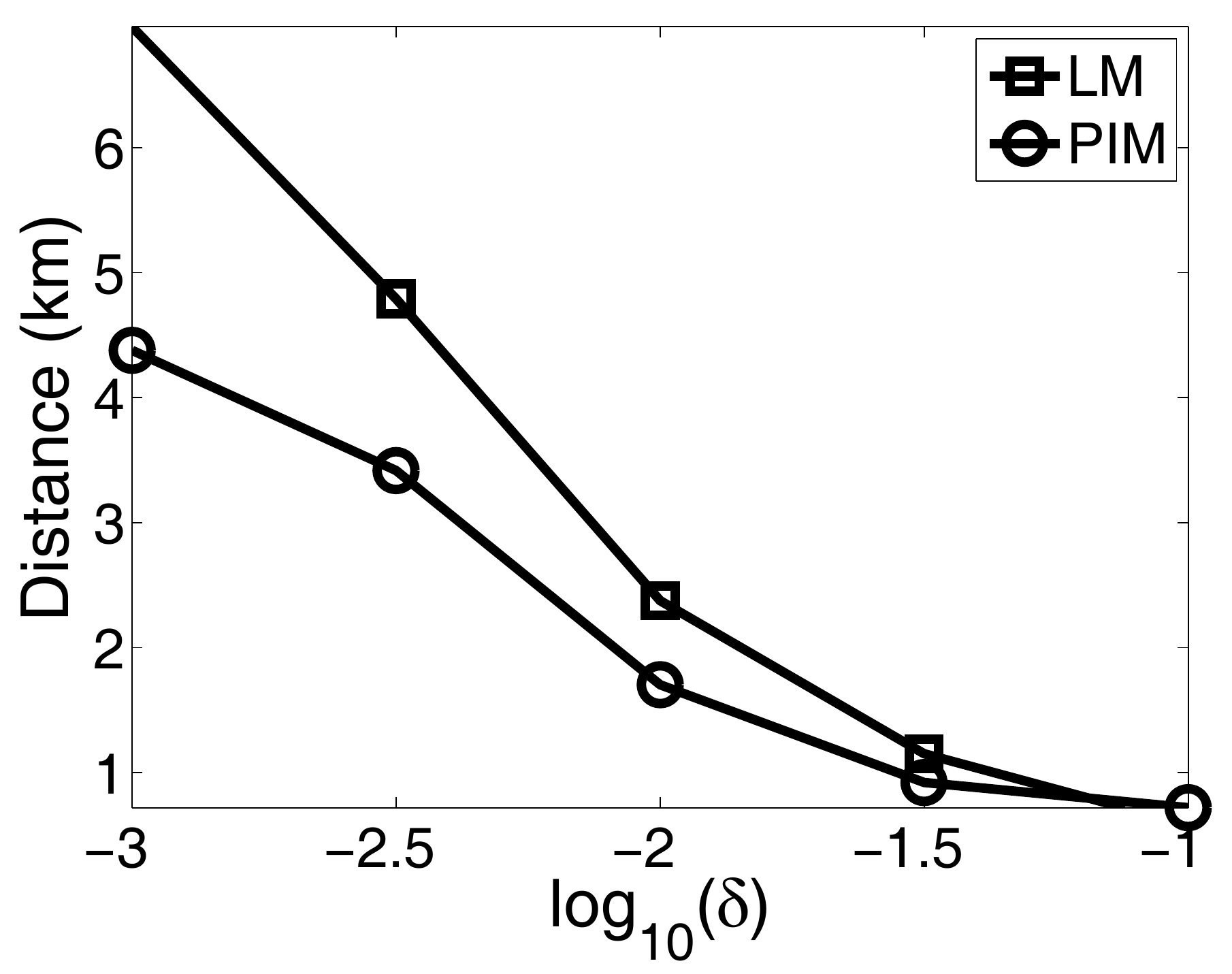}
\caption{{\small Distance vs. $\delta$ }}
\label{Figure-delta-Var-41}
\end{subfigure}
\caption{{\small Impact of parameters on GeoLife data with personal $\textbf{M}$: }
{\small (a)(b) Impact of $\epsilon$ and $\delta$ on size of $\Delta\textbf{X}$;}
{\small (c)(d) Impact of $\epsilon$ and $\delta$ on drift ratio;}
{\small (e)(f) Impact of $\epsilon$ and $\delta$ on distance.}
}
\label{Figure-impact-GeoLife-personal}
\end{figure}

\begin{figure}[!ht]
\begin{subfigure}{0.233\textwidth}
\centering
\includegraphics[width=4.2cm]{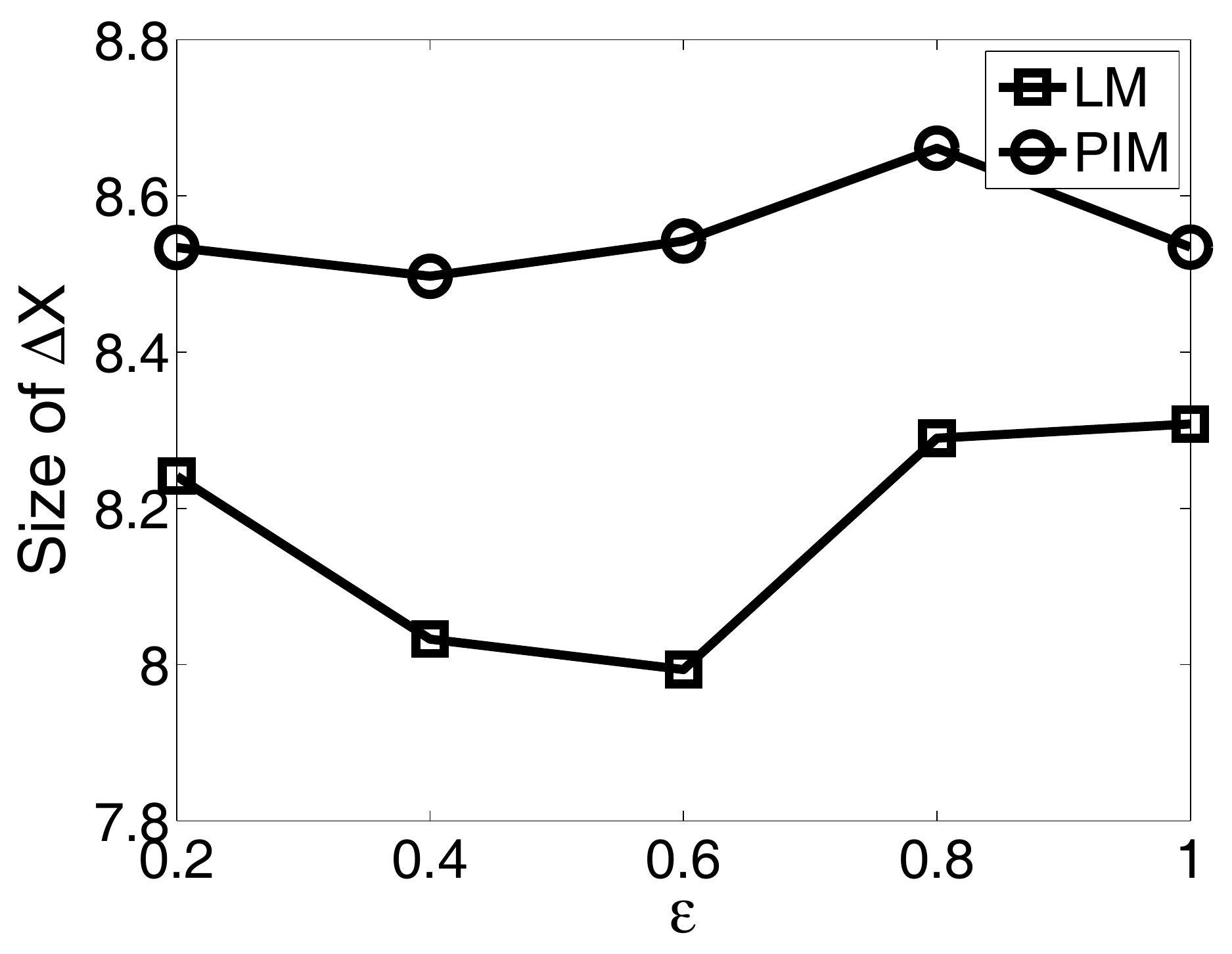}
\caption{{\small Size vs. $\epsilon$ }}
\label{Figure-epsilon-size-5}
\end{subfigure}
\begin{subfigure}{0.233\textwidth}
\centering
\includegraphics[width=4.2cm]{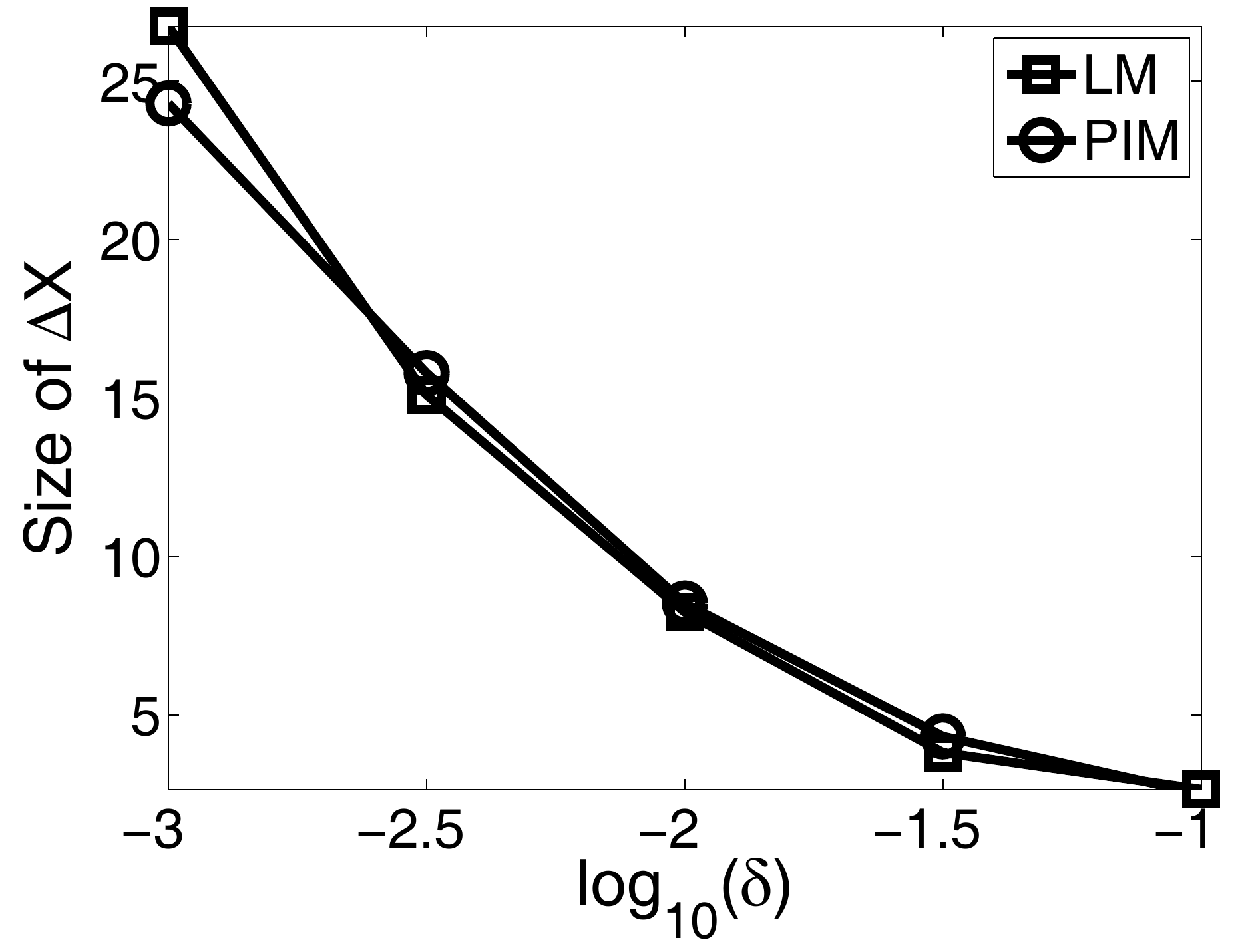}
\caption{{\small Size vs. $\delta$ }}
\label{Figure-delta-size-5}
\end{subfigure}
\begin{subfigure}{0.233\textwidth}
\centering
\includegraphics[width=4.2cm]{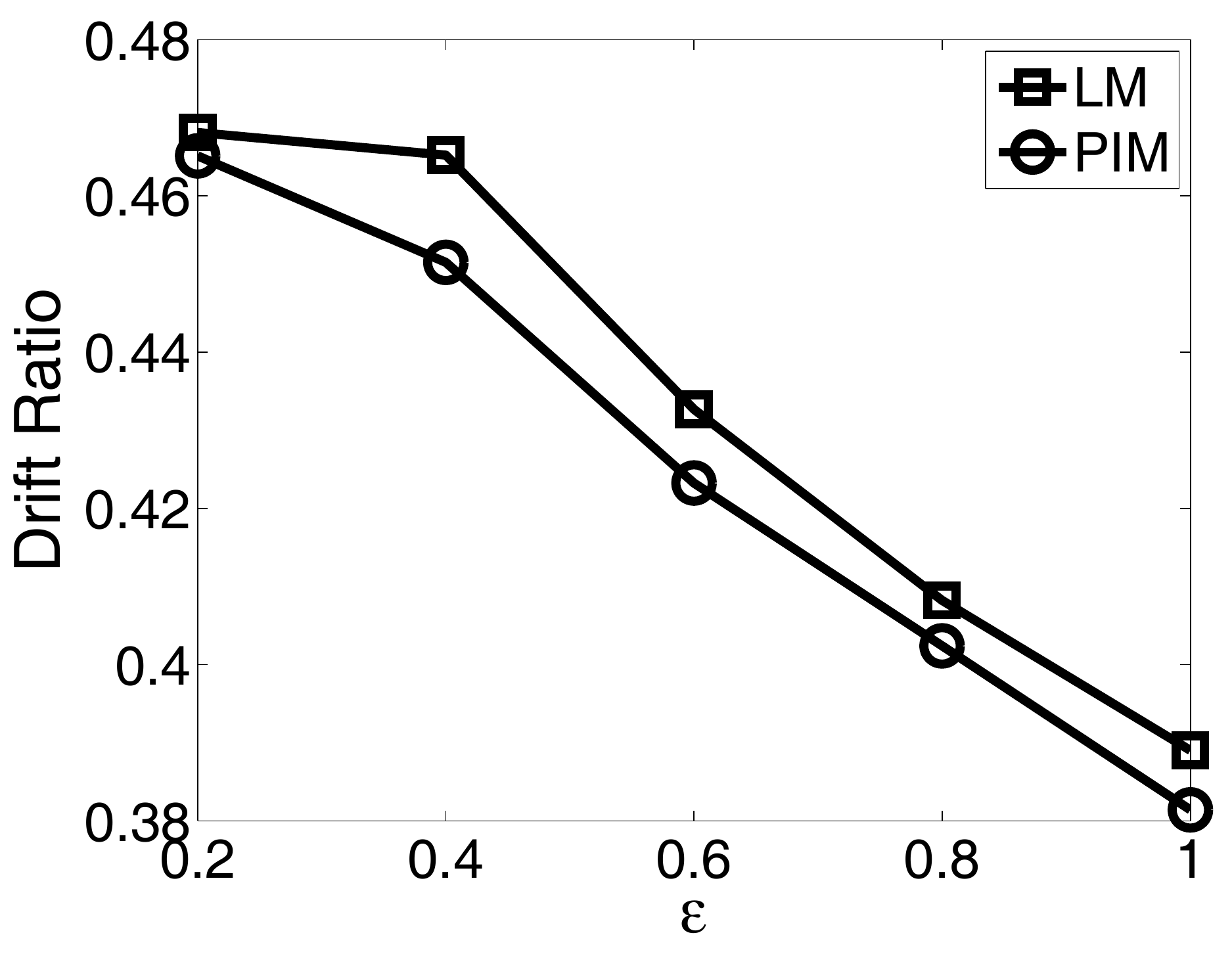}
\caption{{\small Drift Ratio vs. $\epsilon$ }}
\label{Figure-epsilon-drift-5}
\end{subfigure}
\begin{subfigure}{0.233\textwidth}
\centering
\includegraphics[width=4.2cm]{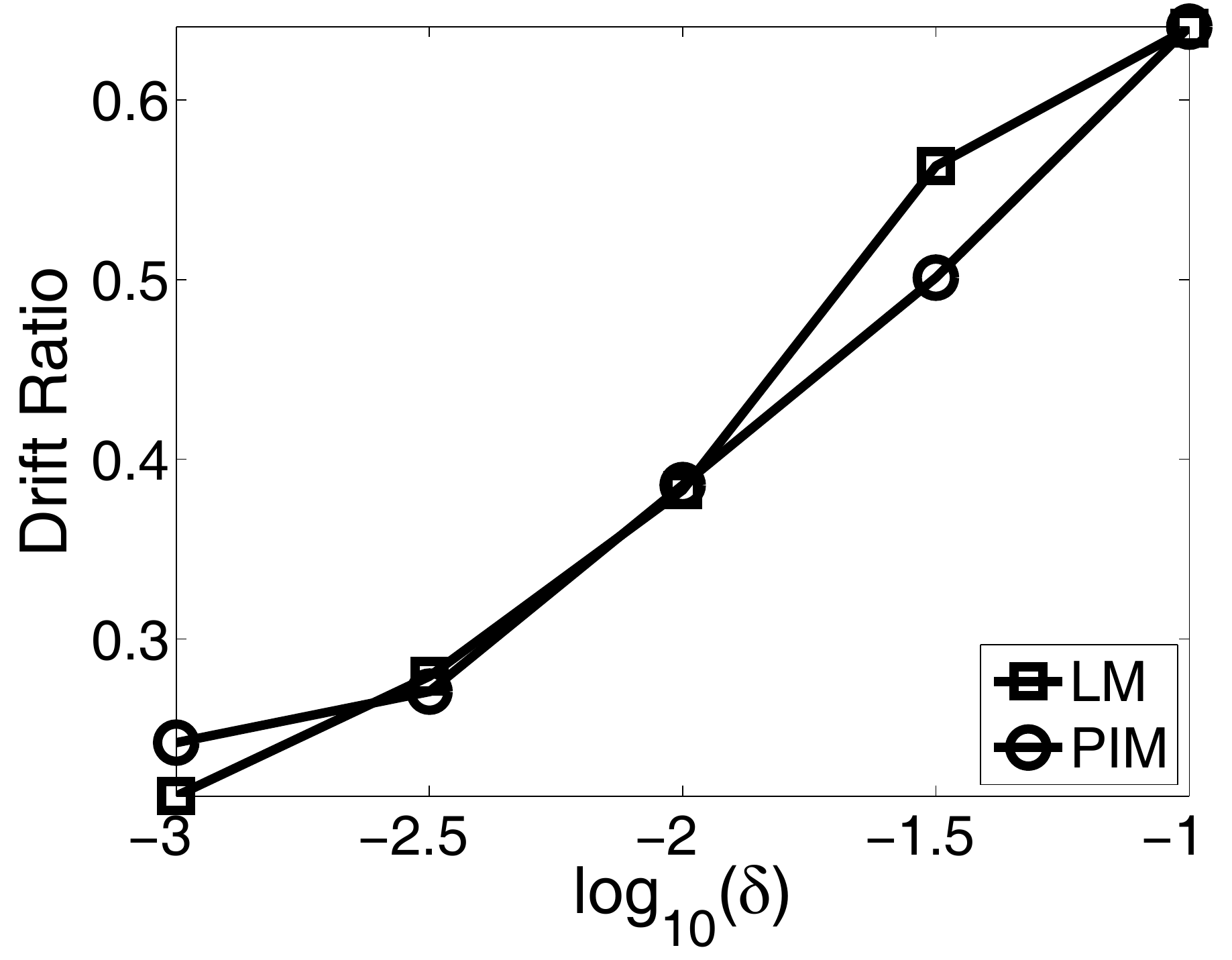}
\caption{{\small Drift Ratio vs. $\epsilon$ }}
\label{Figure-delta-drift-5}
\end{subfigure}
\begin{subfigure}{0.233\textwidth}
\centering
\includegraphics[width=4.2cm]{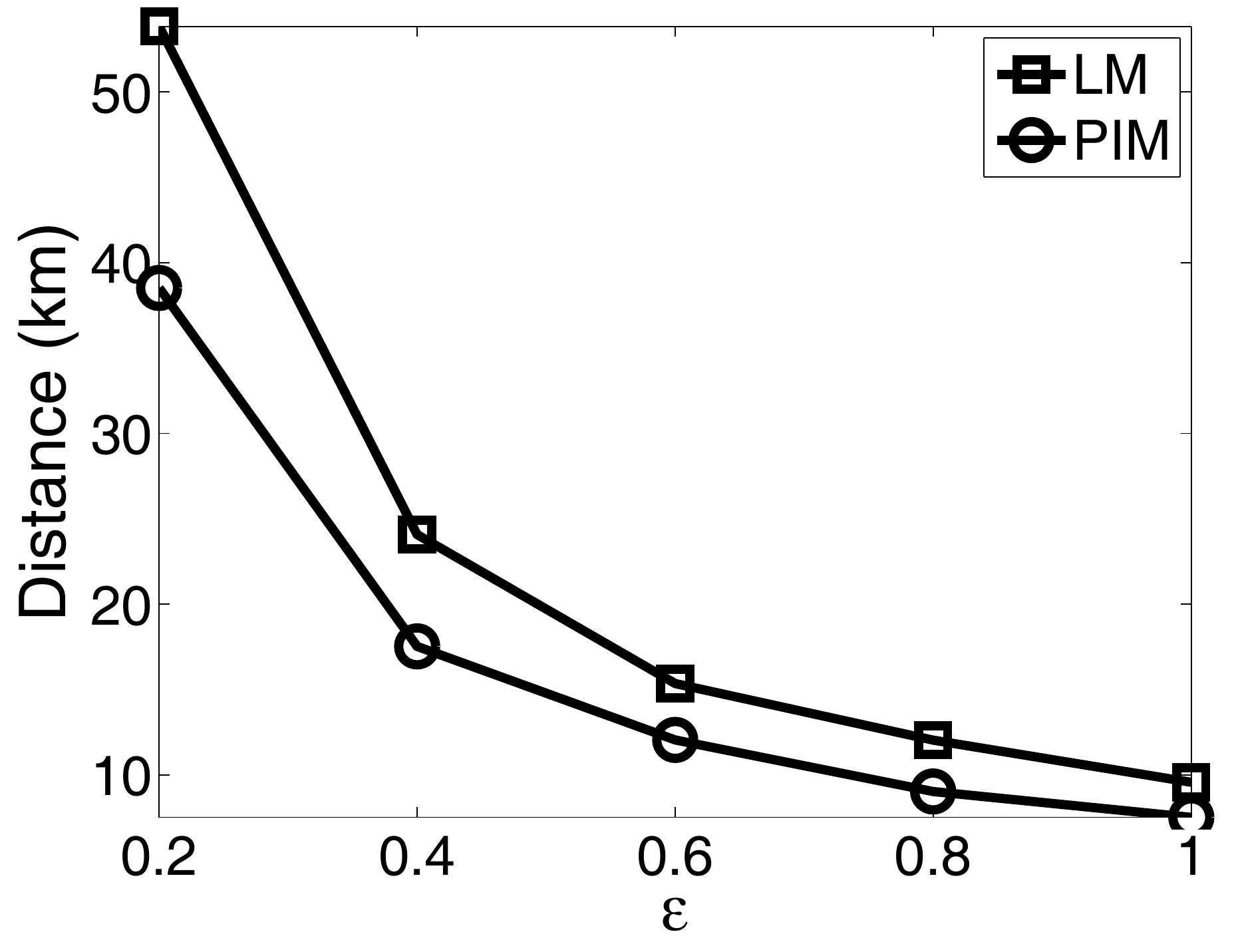}
\caption{{\small Distance vs. $\epsilon$ }}
\label{Figure-epsilon-Var-5}
\end{subfigure}
\begin{subfigure}{0.233\textwidth}
\centering
\includegraphics[width=4.2cm]{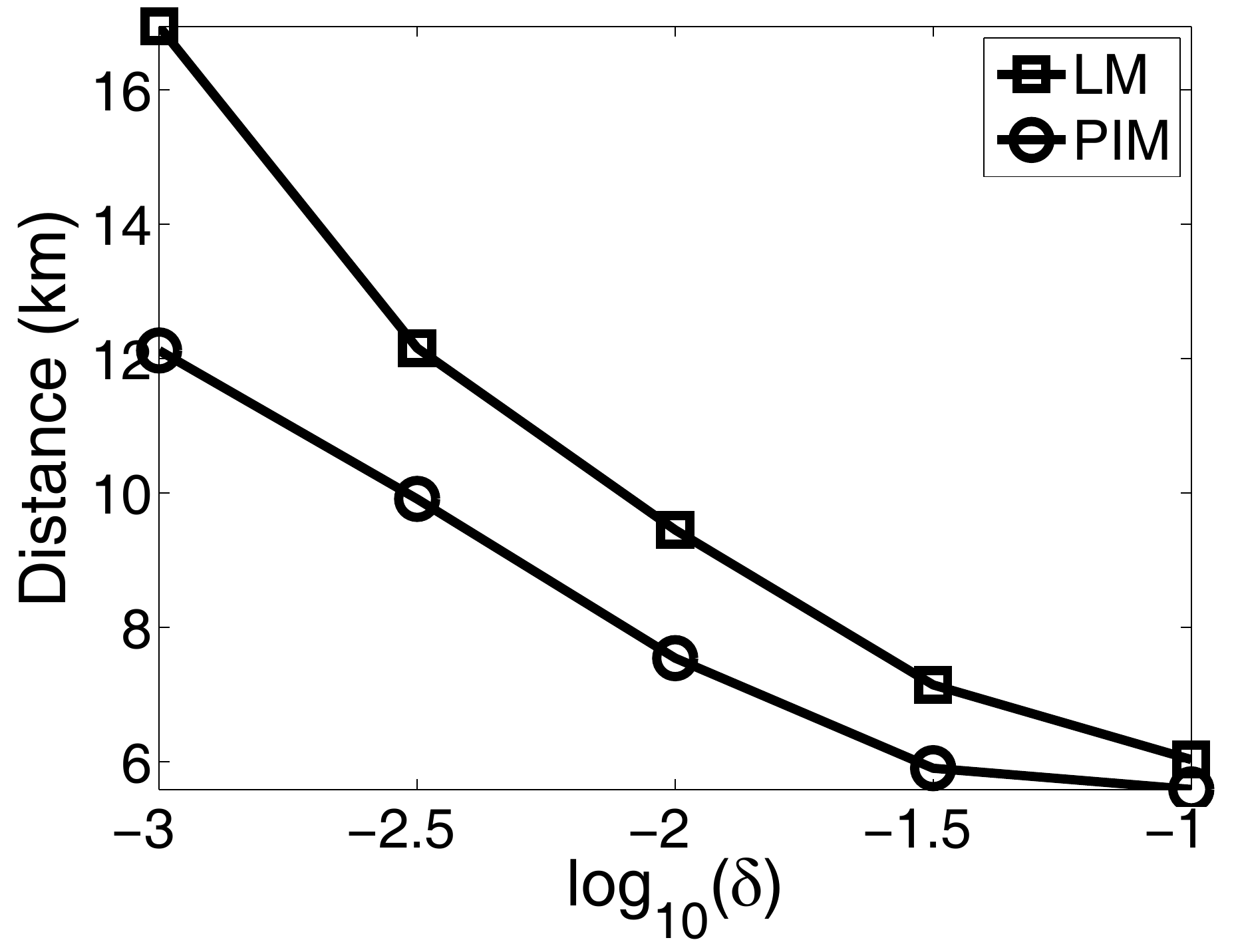}
\caption{{\small Distance vs. $\delta$ }}
\label{Figure-delta-Var-5}
\end{subfigure}
\caption{{\small Impact of parameters on Gowalla data with popular $\textbf{M}$: }
{\small (a)(b) Impact of $\epsilon$ and $\delta$ on size of $\Delta\textbf{X}$;}
{\small (c)(d) Impact of $\epsilon$ and $\delta$ on drift ratio;}
{\small (e)(f) Impact of $\epsilon$ and $\delta$ on distance.}
}
\label{Figure-impact-Gowalla-popular}
\end{figure}

\subsection{Performance Over Time}
In order to show the performance of a release mechanism as a user moves over time, including how $\Delta\textbf{X}$ changes, how often drift happens and how accurate is the perturbed location, we first run a set of experiments for a single test trajectory with popular $\textbf{M}$ learned from all users. We selected a random test trajectory from Geolife dataset consisting of $500$ timestamps.  We tested both PIM and LM at each timestamp with $\epsilon=1$ and $\delta=0.01$. Each method was run $20$ times and the average is reported.
Figure \ref{Figure-example-trace0}  shows the original trajectory in map and state (grid) coordinates; Figures \ref{Figure-aTraj-traj-LM-001} and \ref{Figure-aTraj-traj-IM-001}  show the released (perturbed) locations at each timestamp.
We can see that the released locations of PIM is closer to the true location, compared with LM.
%
%

\vspace{2mm}\noindent{\bf Size of $\Delta\textbf{X}$.}
From Figure \ref{Figure-aTraj-size-001} we see that the size of $\Delta\textbf{X}$ does not increase dramatically, instead it maintains at stable level after a few timestamps. The reason is that by selecting the $\delta$-location set the inference mechanism only boost probabilities of locations in $\Delta\textbf{X}$. Then the probabilities of other locations decay gradually. Thus a stable $\delta$-location set can be maintained.
%

\vspace{2mm}\noindent{\bf Drift Ratio.}
In Figure \ref{Figure-aTraj-drift-001}, the peak of drift ratio happened in timestamp $200\sim 300$. This can be explained by the fact that the true trajectory has a turning corner as in Figure \ref{Figure-example-trace0}, and the transition probability of making this right turn is relatively small in the Markov model.

When a drift happens, we use surrogate for release mechanisms.
Because the surrogate is the nearest cell to the true location in $\Delta\textbf{X}$ and the release mechanism is based on the surrogate, the posterior probability of the surrogate will be boosted. Consequently, in the next timestamp the probability that $\Delta\textbf{X}$ includes the previous true location rises. This ``lagged catch-up'' can be verified by Figures \ref{Figure-aTraj-dist-001}, \ref{Figure-aTraj-traj-LM-001} and \ref{Figure-aTraj-traj-IM-001}.


\vspace{2mm}\noindent{\bf Distance.}
The distance is reported in Figure \ref{Figure-aTraj-dist-001}. We can see that PIM provided more accurate locations than LM for two reasons. First, because PIM is optimal, the posterior probability distribution is more accurate than LM. Second, with such distribution a better  (Bayesian) inference can be obtained, making $\Delta\textbf{X}$ more accurate for the coming timestamp.



\subsection{Impact of Parameters}
Since the performance may vary for different trajectories, we chose $100$ trajectories from $100$ users, each of which has $500$ timestamps, to evaluate the overall performance and the impact of  parameters.
The default values are $\epsilon=1$ and $\delta=0.01$ if not mentioned.
The average performances for both datasets are reported in Figures \ref{Figure-impact-GeoLife-popular} (on GeoLife data with popular $\textbf{M}$), Figure \ref{Figure-impact-GeoLife-personal} (on GeoLife data with personal $\textbf{M}$) and Figure \ref{Figure-impact-Gowalla-popular} (on Gowalla data with popular $\textbf{M}$).

\vspace{2mm}\noindent{\bf Size of $\Delta\textbf{X}$ vs. $\boldsymbol\epsilon$.}
In Figures \ref{Figure-epsilon-size-4} and \ref{Figure-epsilon-size-41} (Geolife data), size of $\Delta\textbf{X}$ shrinks with larger $\epsilon$ because the inference result is enhanced by big $\epsilon$. On the other hand, impact of $\epsilon$ would be negligible in Gowalla data because one-step transition in Markov model has limited predictability (check-ins are not frequent), as in Figure \ref{Figure-epsilon-size-5}.

\vspace{2mm}\noindent{\bf Size of $\Delta\textbf{X}$ vs. $\boldsymbol\delta$.}
Size of $\Delta\textbf{X}$ is mainly determined by $\delta$ as shown in Figures \ref{Figure-delta-size-4}, \ref{Figure-delta-size-41} and \ref{Figure-delta-size-5}.
Note that LM and PIM have similar size of $\Delta\textbf{X}$, meaning the true location is hidden in the similar size of candidates.
When $\delta$ grows, size of $\Delta\textbf{X}$ reduces dramatically because more improbable locations are truncated. However, $\delta$ cannot be too large because it preserves nearly no privacy if size of $\Delta\textbf{X}$ is close to $1$.
Thus we use $\delta=0.01$ by default, which guarantees the sizes of $\Delta\textbf{X}$ are larger than $4$ in the three settings.

\vspace{2mm}\noindent{\bf Drift Ratio vs. $\boldsymbol\epsilon$.}
Figures \ref{Figure-epsilon-drift-4} and \ref{Figure-epsilon-drift-5} show that drift ratio declines with larger $\epsilon$, which is easy to understand because larger $\epsilon$ provides more accurate release. However,
the impact of $\epsilon$ is not obvious in Figure \ref{Figure-epsilon-drift-41}. The reason is that the size of $\Delta\textbf{X}$ is already small as in Figure \ref{Figure-delta-size-41}, hence the increase of $\epsilon$
does not help much in improving the accuracy of the inference.

\vspace{2mm}\noindent{\bf Drift Ratio vs. $\boldsymbol\delta$.}
Figures \ref{Figure-delta-drift-4}, \ref{Figure-delta-drift-41} and \ref{Figure-delta-drift-5} show that drift ratio rises when $\delta$ increases due to reduced $\Delta\textbf{X}$,
and PIM is slightly better than LM. However, due to the phenomenon of ``lagged catch-up'',
we will see next that the accuracy of the released locations was still improved with increasing $\delta$.

\vspace{2mm}\noindent{\bf Distance vs. $\boldsymbol\epsilon$.}
Figures \ref{Figure-epsilon-Var-4}, \ref{Figure-epsilon-Var-41} and \ref{Figure-epsilon-Var-5} show the distance with varying $\epsilon$. We can see that PIM performed better than LM. In Gowalla data, because check-in locations are far away from each other, the distance is larger than Geolife.


\vspace{2mm}\noindent{\bf Distance vs. $\boldsymbol\delta$.}
Because bigger $\delta$ will result in less candidates in $\delta$-location set, the distance declines when $\delta$ increases. Figures \ref{Figure-delta-Var-4}, \ref{Figure-delta-Var-41} and \ref{Figure-delta-Var-4} show that PIM achieves better accuracy than LM. However, from $10^{-1.5}$ to $10^{-1}$, the improvement on distance is very small while privacy guarantee drops significantly as in Figures \ref{Figure-delta-size-4}, \ref{Figure-delta-size-41} and \ref{Figure-delta-size-5}.
Especially in Figure \ref{Figure-delta-size-41}, size of $\Delta\textbf{X}$ is $1$ when $\delta=0.1$.
Therefore, choosing a high value of $\delta$ (like $\delta>0.03$) does not provide the best trade-off of privacy and utility.

\vspace{2mm}\noindent{\bf Impact of Markov model.} Comparing Figures \ref{Figure-impact-GeoLife-popular} on popular $\textbf{M}$ and Figure \ref{Figure-impact-GeoLife-personal} on personal $\textbf{M}$, we can see the impact of different Markov model. With more accurate (personal) model, better utility can be achieved, including smaller size of $\Delta\textbf{X}$, lower drift ratio and less distance. However, the same privacy level ($\epsilon$-differential privacy) is maintained (on different $\Delta\textbf{X}$) regardless of $\textbf{M}$.

\subsection{Utility for Location Based Queries}
To demonstrate the utility of released locations, we also measured the precision and recall of $k$NN queries at each of the 500 timestamps in the 100 trajectories with popular $\textbf{M}$. The average results of $k$NN from original locations and $k'$NN from released locations
are reported in Figure \ref{Figure-KNN} with $\epsilon=1$ and $\delta=0.01$.

In Figures \ref{Figure-KNN-1} and \ref{Figure-KNN-2}, we show the precision and recall with $k=k'$. Note that in this case precision is equal to recall. We can see that when $k$ grows precision and recall also increase because the nearest neighbors have to be found in larger areas. PIM is consistently better than LM.

Next we fixed $k=5$ and varied $k'$. Figures \ref{Figure-KNN-3} and \ref{Figure-KNN-4} show the precision drops when $k'$ rises because of a larger returned set. On the other hand, Figures \ref{Figure-KNN-5} and \ref{Figure-KNN-6} indicate recall increases with large $k'$. Overall, PIM has better precision and recall than LM.
\begin{figure}[!t]
\begin{subfigure}{0.233\textwidth}
\centering
\includegraphics[width=4.2cm]{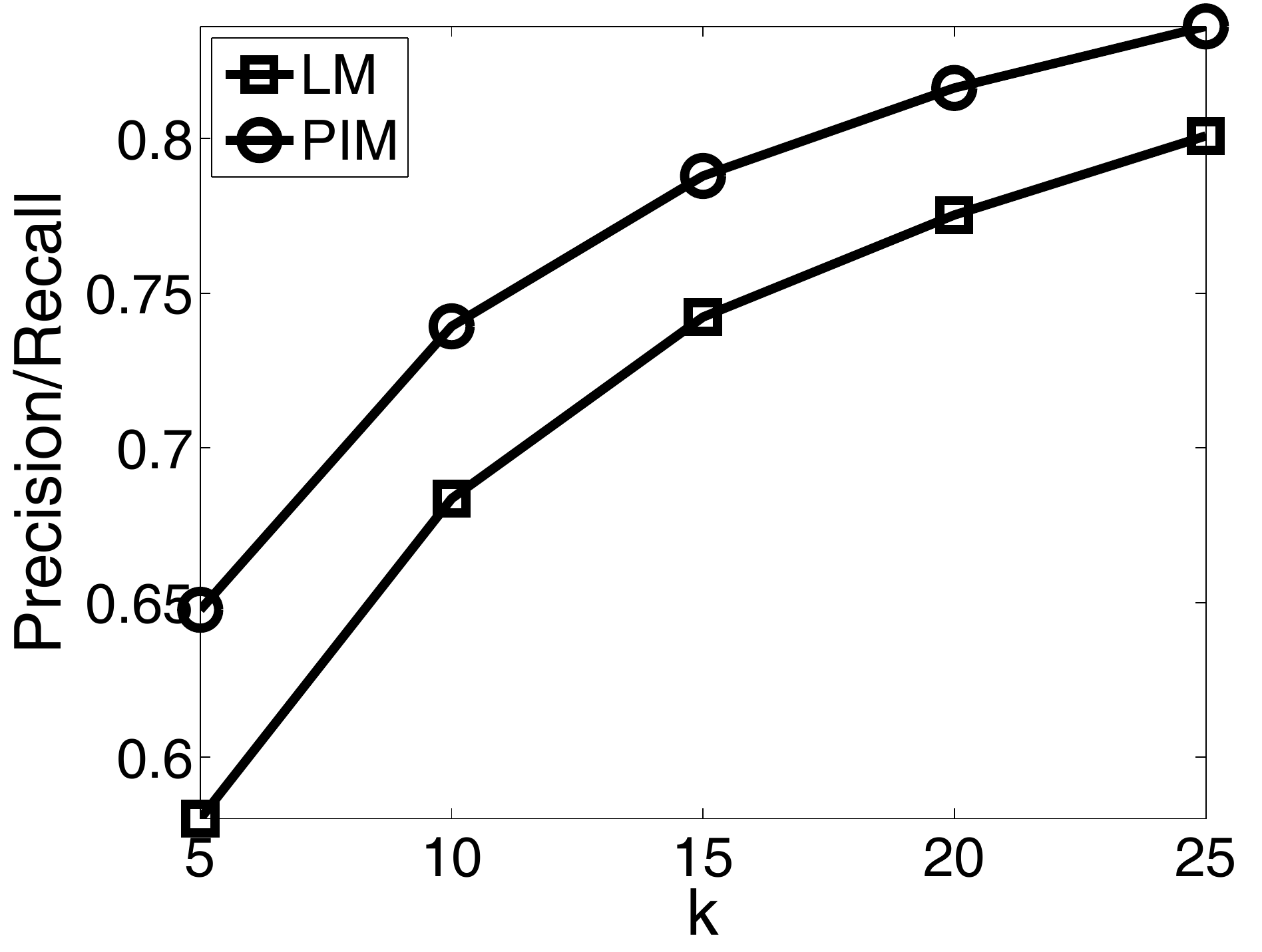}
\caption{{\small Precision/Recall (Geolife)}}
\label{Figure-KNN-1}
\end{subfigure}
\begin{subfigure}{0.233\textwidth}
\centering
\includegraphics[width=4.2cm]{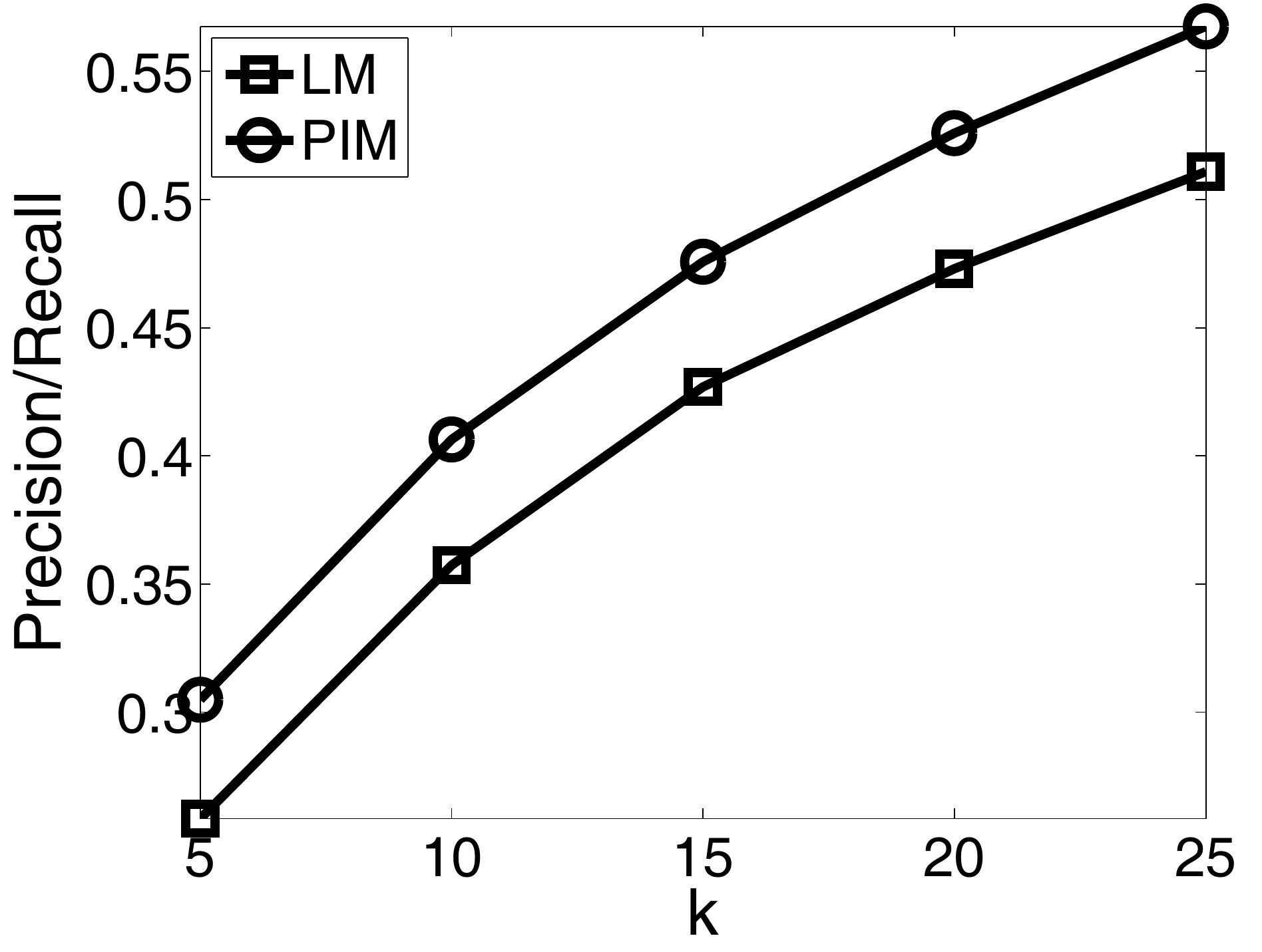}
\caption{{\small Precision/Recall (Gowalla)}}
\label{Figure-KNN-2}
\end{subfigure}
\begin{subfigure}{0.233\textwidth}
\centering
\includegraphics[width=4.2cm]{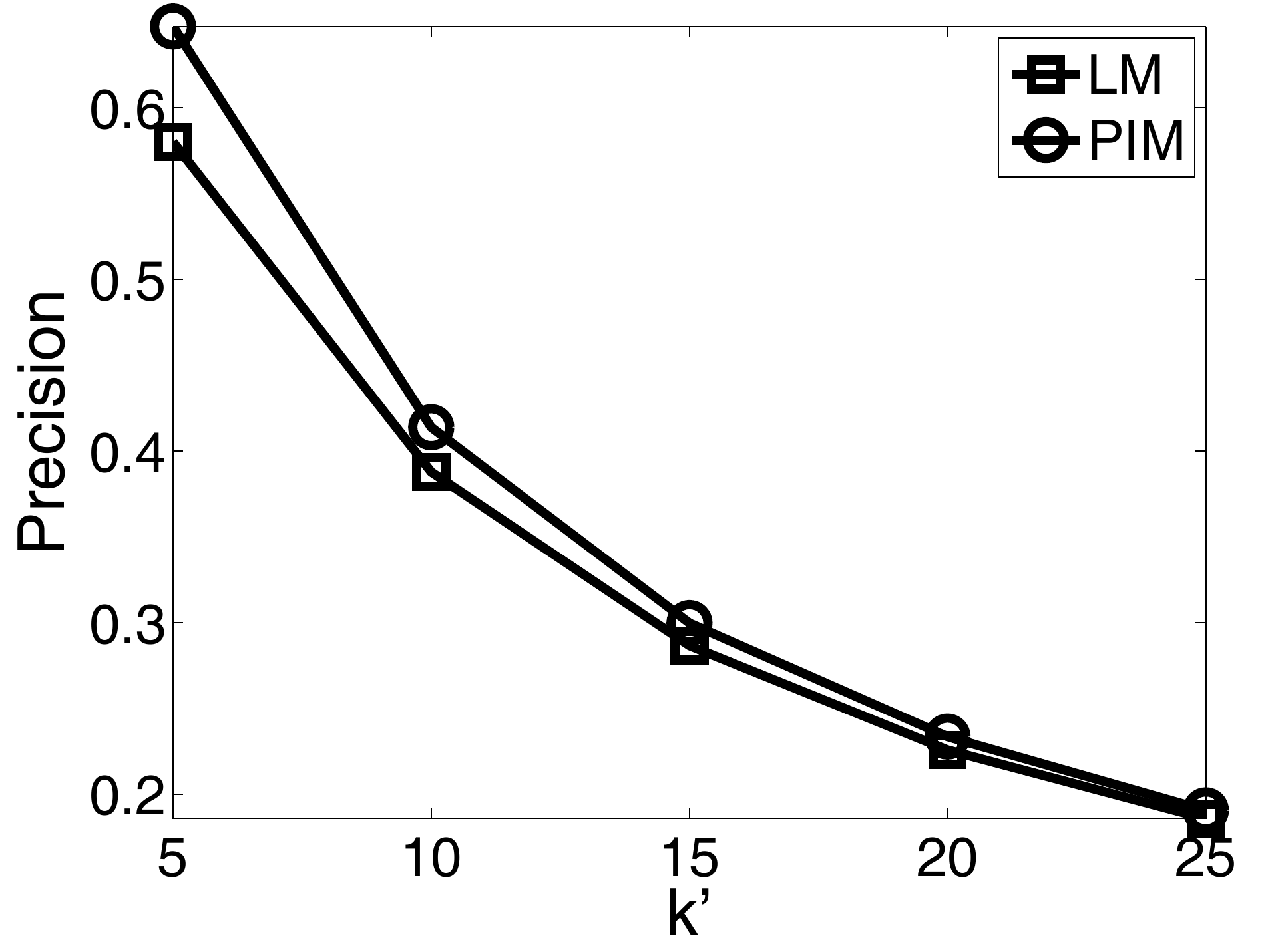}
\caption{{\small Precision (Geolife)}}
\label{Figure-KNN-3}
\end{subfigure}
\begin{subfigure}{0.233\textwidth}
\centering
\includegraphics[width=4.2cm]{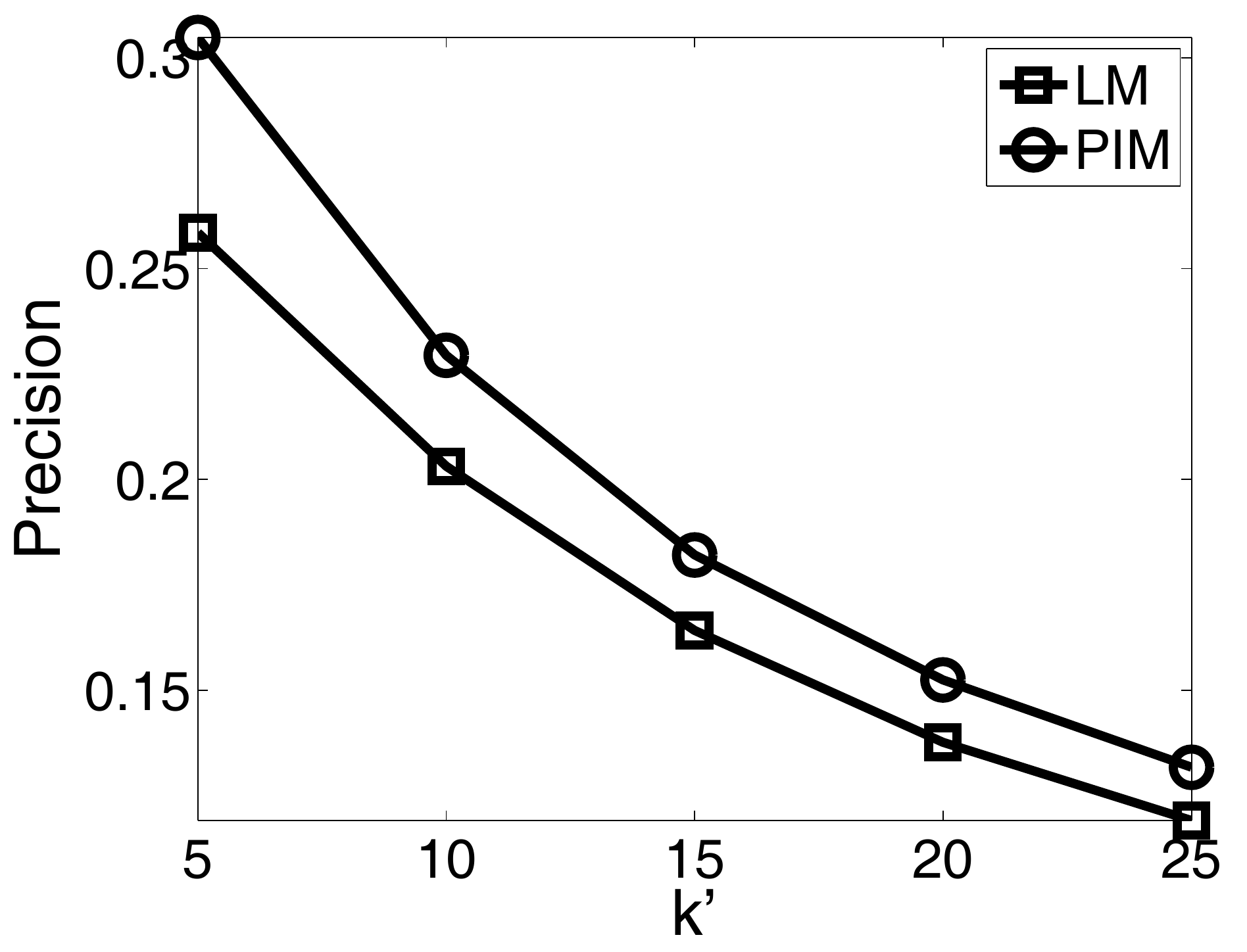}
\caption{{\small Precision (Gowalla)}}
\label{Figure-KNN-4}
\end{subfigure}
\begin{subfigure}{0.233\textwidth}
\centering
\includegraphics[width=4.2cm]{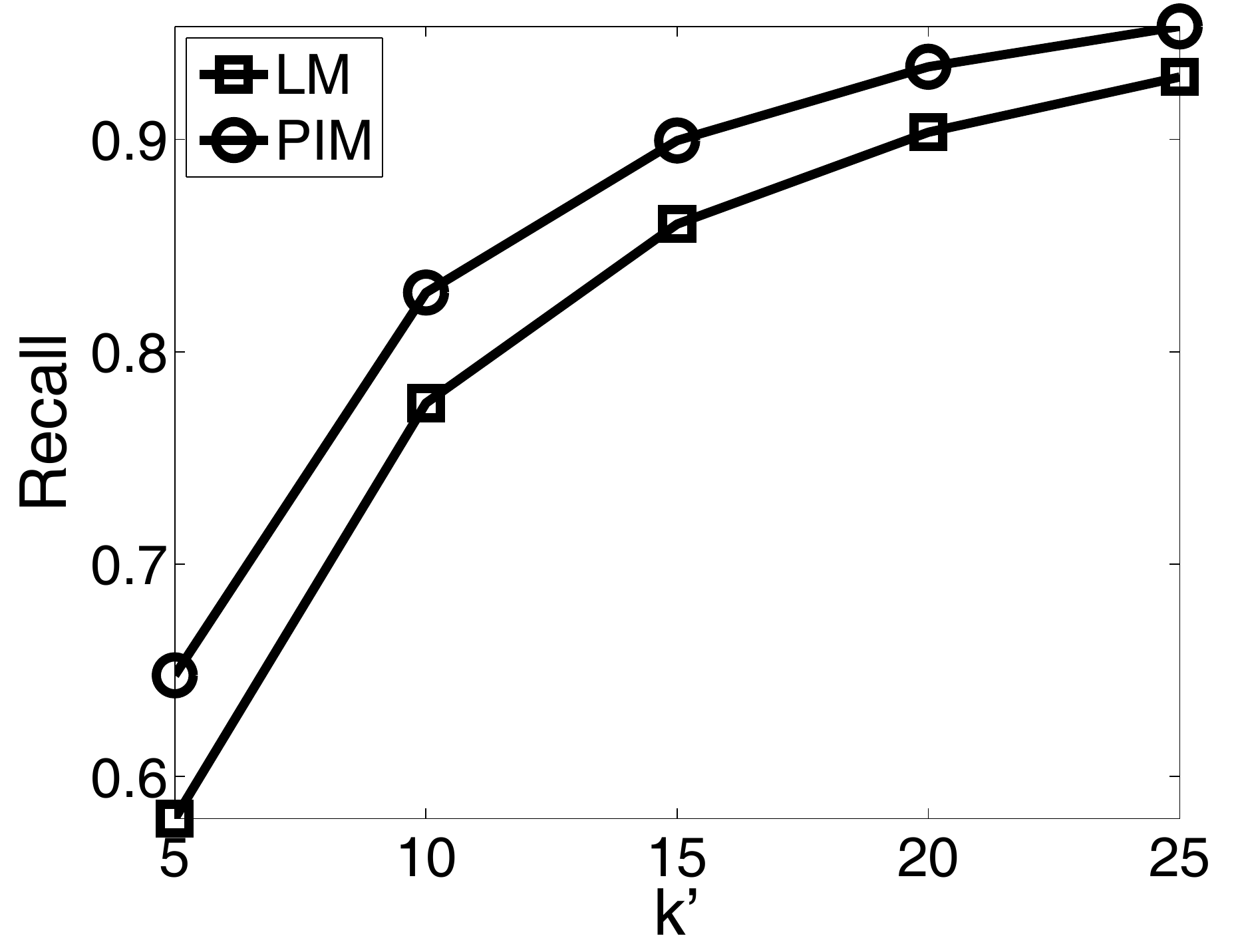}
\caption{{\small Recall (Geolife)}}
\label{Figure-KNN-5}
\end{subfigure}
\begin{subfigure}{0.233\textwidth}
\centering
\includegraphics[width=4.2cm]{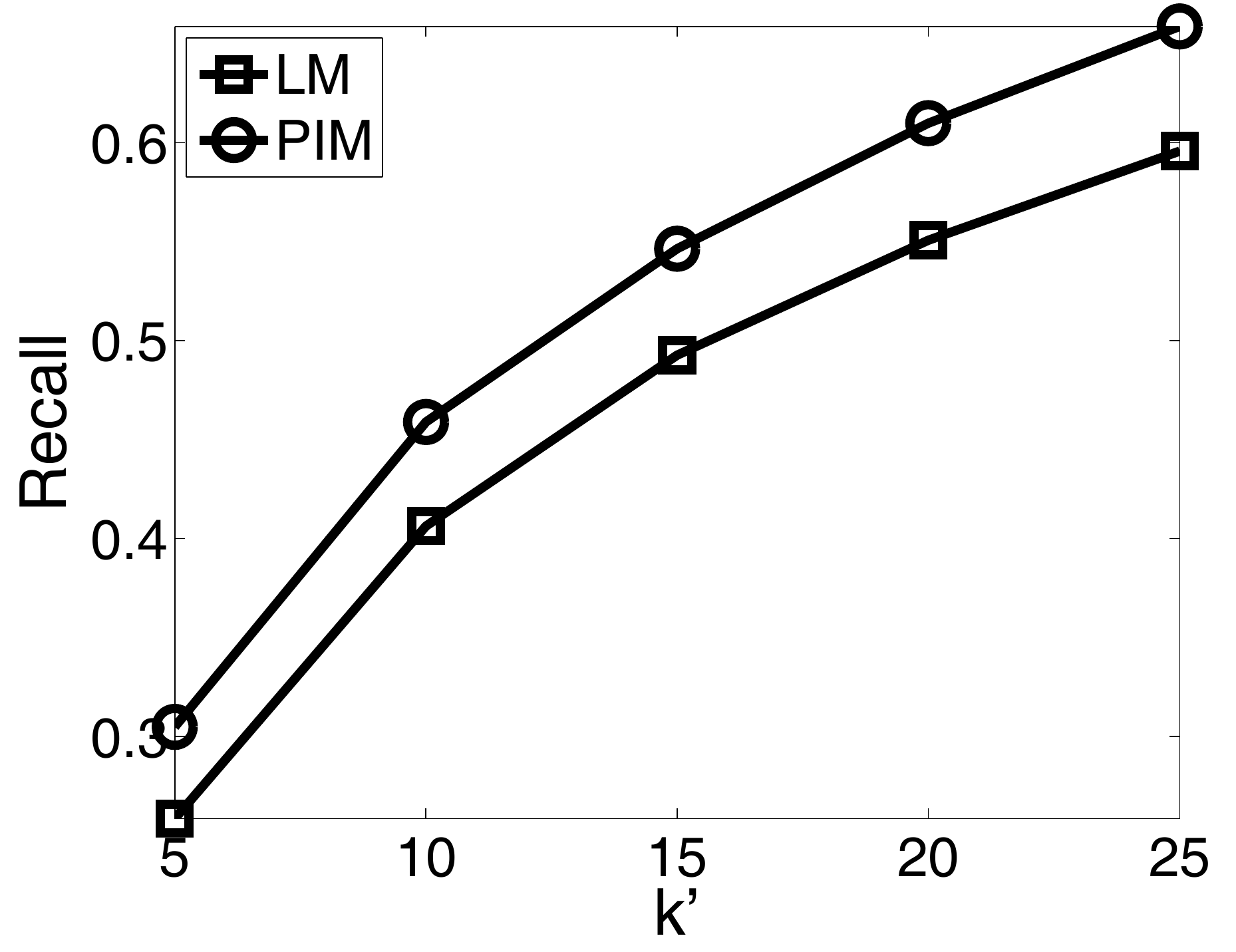}
\caption{{\small Recall (Gowalla)}}
\label{Figure-KNN-6}
\end{subfigure}
\caption{{\small $k$NN results: }
{\small (a)(b) precision and recall under $k=k'$;}
{\small (c)(d) precision vs. $k'$;}
{\small (e)(f) recall vs. $k'$.}
}
\label{Figure-KNN}
\end{figure}

\section{Related Works}
\subsection{Location Privacy}
There is a rich set of literature related to location privacy. A few recent books and surveys
\cite{krumm2009survey, ghinita2013privacy} provide an up-to-date review of Location Privacy Preserving Mechanisms (LPPMs).
%

%
LPPMs
generally use obfuscation methods, such as spatial cloaking, cell merging, location precision reduction or dummy cells, to achieve anonymity based privacy or uncertainty based privacy.  However, anonymity or ad hoc
uncertainty based techniques do not always provide sufficient privacy protection
\cite{
trace-attack2, Quantifying-location-privacy-SP2011}.  Most of them do not consider the temporal correlations between locations and are subject
to various inference attacks.
 The recent work \cite{geo-indistinguishability-CCS13} proposed a notion of geo-indistinguishability which extends differential privacy.  However, a fundamental difference is that neighboring pairs or secrets are defined based on a radius and
it does not consider the temporal correlations of multiple locations.

Several works use Markov models for modeling users'
mobility and inferring user locations or trajectories
\cite{Liao-learning-pattern, putmode_Qiao10}.
\cite{MaskIt-SIGMOD12} proposed an insightful technique with a provable privacy guarantee to filter a user context stream even if the adversaries are powerful and have knowledge about the temporal correlations but it used suppression instead of perturbation.
\cite{Quantifying-location-privacy-SP2011} investigated the question of how to
formally quantify the privacy of existing LPPMs and assumed that an adversary can
model users' mobility using a Markov chain learned from a population.

\subsection{Differential Privacy}
Several variants or generalizations of differential privacy have been studied, as discussed in Section \ref{sec-comparison}.
However, applying differential privacy for location protection has not been investigated in depth.
Several recent works have applied differential privacy to publish {\em aggregate} information from a large volume of location, trajectory or spatiotemporal data (e.g. \cite{dp-trajectory-KDD12,
DBLP:conf/icde/QardajiYL13,
DBLP:conf/dbsec/FanXS13, Copula-haoran-EDBT14}).  Our contribution is to extend differential privacy in a new setting of continual location sharing of only one user whose locations are temporally correlated.
Optimal query answering under differential privacy has been studied recently. Hardt and Talwar \cite{Geometry-Hardt-STOC10} studied the theoretical lower bound for any differentially private mechanisms and proposed $K$-norm mechanism.
Bhaskara et al \cite{Bhaskara-bound-STOC12} studied another $K$-norm based method to project the sensitivity hull onto orthogonal subspaces. Nikolov et al \cite{Nikolov-geometry-STOC13} also improved the efficiency of $K$-norm mechanism by finding the minimal enclosing ellipsoid to release multivariate Gaussian noises. So far the best utility of existing mechanisms can be $log(d)$ approximately optimal. We extended the $K$-norm mechanism to location data by examining the two-dimensional sensitivity hull and designing its isotropic transformation so that optimal utility can be achieved.

\section{Conclusion and Future work}
In this paper we proposed $\delta$-location set based differential privacy to protect a user's true location at every timestamp under temporal correlations.
We generalized the notion of  ``neighboring databases'' to $\delta$-location set for the new setting and extended the well known $\ell_1$-norm sensitivity to sensitivity hull in order to capture the geometric meaning of sensitivity.
 Then with sensitivity hull we derived the lower bound of $\delta$-location set based differential privacy. To achieve the lower bound, we designed the optimal planar isotropic mechanism to release differentially private locations with significantly high efficiency and utility.
%

The framework of  $\delta$-location set based differential privacy can work with any mobility models (besides Markov chain).  As a future work direction, we are interested in instantiating it with different and more advanced mobility models and studying the impact.

\section{Acknowledgement}
This research is supported by NSF under grant No. 1117763 and the AFOSR DDDAS program under grant FA9550-12-1-0240. We thank the anonymous reviewers for their valuable comments that helped improve the final version of this paper.

\vspace{3mm}
\begin{small}
\bibliographystyle{abbrv}
\bibliography{../exportlist,ref/privacy,ref/location_privacy,ref/crowdsourcing}
\end{small}

\newpage
\section{Additional Material}

\subsection{Laplace Mechanism}
%
In the problem of location sharing, the query being asked is: at a timestamp where is the true location ``$\textbf{x}=?$''. We denote the true location $\textbf{x}=[\textbf{x}[1],\textbf{x}[2]]^T$ where $\textbf{x}[1]$ and $\textbf{x}[2]$ are the two coordinates of $\textbf{x}$ on a map.
By Definition \ref{def-standard-sensitivity}, the sensitivity is
\begin{align*}
\Delta f=\mathop{max}\limits_{\textbf{x}_1,\textbf{x}_2\in \Delta\textbf{X}}||\textbf{x}_1-\textbf{x}_2||_1\\
=\mathop {max}\limits_{\textbf{x}_1,\textbf{x}_2\in \Delta\textbf{X}}\left(\left| \textbf{x}_1[1]-\textbf{x}_2[1] \right|+\left|\textbf{x}_1[2]-\textbf{x}_2[2]\right|\right)
\end{align*}
Figure \ref{Figure-naive} shows an example. The true location $\textbf{x}_t^*$ is at the ``$\star$'' position and all the dots constitute the indistinguishable set $\Delta\textbf{X}$.
We can see that the maximum value of above equation is achieved when $\textbf{x}_1$ and $\textbf{x}_2$ are the left-top and right-bottom points. The sensitivity can be calculated as $\Delta_1+\Delta_2$ where $\Delta_1$ and $\Delta_2$ are the differences of $\textbf{x}_1$ and $\textbf{x}_2$ on the two axes respectively. By Laplace Mechanism, the answer $[\textbf{x}[1]+Lap((\Delta_1+\Delta_2)/\epsilon),\textbf{x}[2]+Lap((\Delta_1+\Delta_2)/\epsilon)]$ is $\epsilon$-differentially private.
\begin{figure}
\begin{subfigure}{0.23\textwidth}
\centering
\includegraphics[width=4.2cm]{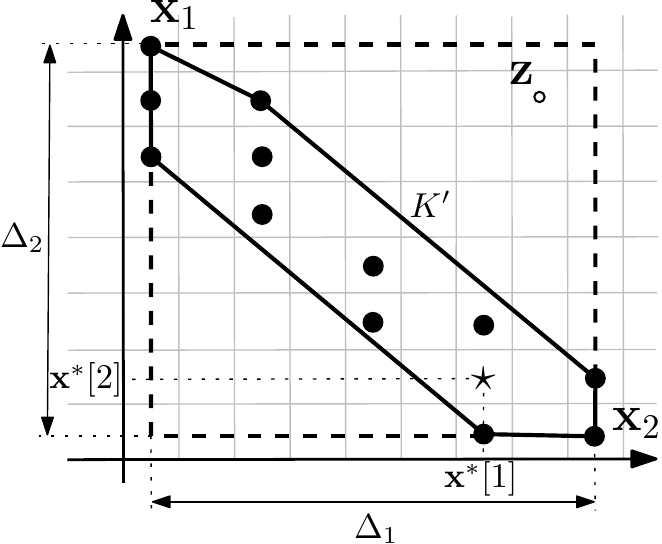}
\caption{}
\label{Figure-naive}
\end{subfigure}
\begin{subfigure}{0.23\textwidth}
\centering
\vspace{-5mm}
\includegraphics[width=4.3cm]{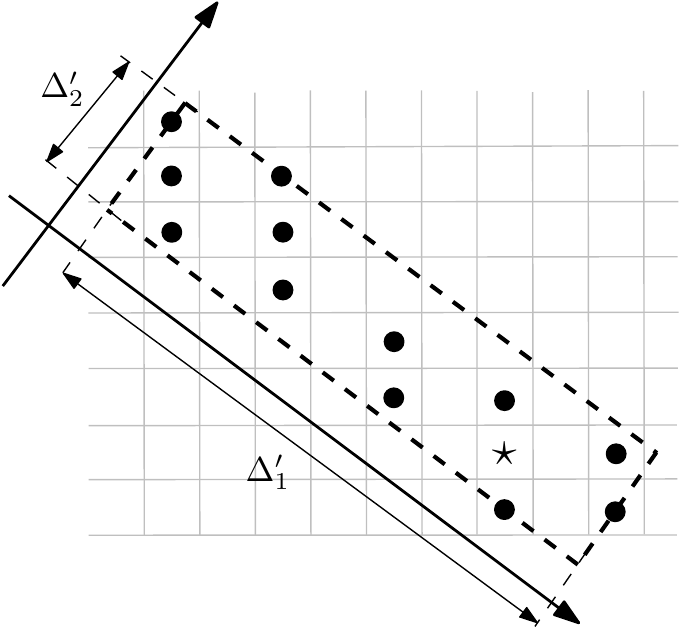}
\caption{}
\label{Figure-naive2}
\end{subfigure}
\caption{{\small (a). Laplace Mechanism.} {\small (b). a rotated Laplace Mechanism.}}
\end{figure}
%

We name the above method ``baseline Laplace Mechanism'' and summarize it in Algorithm \ref{alg-naive}.  We skip the proofs for the following theorems as they are straightforward.
\begin{algorithm}[htb]
\caption{Laplace Mechanism}
\begin{algorithmic}[1]
\Require{
$\epsilon$, $\Delta\textbf{X}$, $\textbf{x}^*$
}
\State{$\Delta\textbf{X}_1 \leftarrow \{\textbf{x}[1]|\textbf{x}\in\Delta\textbf{X}\}$;}
\Comment{project $\Delta\textbf{X}$ onto axis 1.}
\State{$\Delta\textbf{X}_2 \leftarrow \{\textbf{x}[2]|\textbf{x}\in\Delta\textbf{X}\}$;}
\Comment{project $\Delta\textbf{X}$ onto axis 2.}
\State{$\Delta_1 \leftarrow \mathop{max}\limits_{a,b\in\Delta\textbf{X}_1}|a-b|$;}
\Comment{sensitivity on axis 1.}
\State{$\Delta_2 \leftarrow \mathop{max}\limits_{a,b\in\Delta\textbf{X}_2}|a-b|$;}
\Comment{sensitivity on axis 2.}\\
\Return{$[\textbf{x}^*[1]+Lap(\frac{\Delta_1+\Delta_2}{\epsilon}),\textbf{x}^*[2]+Lap(\frac{\Delta_1+\Delta_2}{\epsilon})]^T$}
\end{algorithmic}
\label{alg-naive}
\end{algorithm}

\begin{theorem}
\label{theo-error-LM}
Algorithm \ref{alg-naive} is $\epsilon$-differentially private.
\end{theorem}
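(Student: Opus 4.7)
The plan is to reduce the statement to the classical Laplace mechanism guarantee applied to the identity query $f(\textbf{x}) = \textbf{x}$ on the domain $\Delta\textbf{X}$. First I would observe that for any two locations $\textbf{x}_1, \textbf{x}_2 \in \Delta\textbf{X}$, by the triangle inequality and the definitions of $\Delta_1, \Delta_2$ in Algorithm \ref{alg-naive},
\begin{align*}
\|\textbf{x}_1 - \textbf{x}_2\|_1 &= |\textbf{x}_1[1] - \textbf{x}_2[1]| + |\textbf{x}_1[2] - \textbf{x}_2[2]| \\
&\leq \Delta_1 + \Delta_2.
\end{align*}
Thus $\Delta_1 + \Delta_2$ is a valid upper bound on the $\ell_1$-sensitivity (Definition \ref{def-standard-sensitivity}) of the identity query restricted to the $\delta$-location set.

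Next I would invoke the standard Laplace mechanism argument. The algorithm outputs $\textbf{z} = \textbf{x}^* + (N_1, N_2)^T$ where $N_1, N_2$ are i.i.d.\ $\mathrm{Lap}((\Delta_1+\Delta_2)/\epsilon)$. By independence and the explicit form of the Laplace density, for any output $\textbf{z}$ and any $\textbf{x}_1, \textbf{x}_2 \in \Delta\textbf{X}$,
\begin{align*}
\frac{\Pr(\mathcal{A}(\textbf{x}_1) = \textbf{z})}{\Pr(\mathcal{A}(\textbf{x}_2) = \textbf{z})}
= \prod_{k=1}^{2} \exp\!\left( \frac{\epsilon\,(|\textbf{z}[k] - \textbf{x}_2[k]| - |\textbf{z}[k] - \textbf{x}_1[k]|)}{\Delta_1 + \Delta_2} \right).
\end{align*}
Applying the reverse triangle inequality to each factor and summing the exponents yields an upper bound of $\exp(\epsilon \,\|\textbf{x}_1 - \textbf{x}_2\|_1 / (\Delta_1+\Delta_2))$, which by the sensitivity bound above is at most $e^\epsilon$. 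This matches the criterion of Definition \ref{def-danamic-dp}.

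There is no real obstacle: the argument is a direct specialization of the textbook Laplace mechanism proof, with the only non-standard ingredient being the axis-aligned bounding box over $\Delta\textbf{X}$ in place of the usual neighboring-database sensitivity. The reason $\Delta_1 + \Delta_2$ (rather than the $\ell_1$ diameter of $\Delta\textbf{X}$) suffices is exactly the triangle-inequality step; this also explains why the mechanism is loose in general, motivating the tighter sensitivity-hull-based PIM developed in the main body.
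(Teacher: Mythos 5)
Your proof is correct and is exactly the standard Laplace-mechanism argument the paper alludes to when it states that the proof of this theorem is ``straightforward'' and omits it: bound the per-coordinate density ratio via the reverse triangle inequality and observe that $\Delta_1+\Delta_2$ upper-bounds the $\ell_1$ distance between any two points of $\Delta\textbf{X}$. Your phrasing is in fact slightly more careful than the paper's surrounding discussion, since you treat $\Delta_1+\Delta_2$ only as an upper bound on the $\ell_1$ diameter of $\Delta\textbf{X}$ (the coordinate-wise maxima need not be attained by the same pair), which is all the privacy argument requires.
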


\begin{theorem}
\label{theo-time-LM}
Algorithm \ref{alg-naive} takes $O(n)$ time where $n$ is the number of points in indistinguishable set $\Delta\textbf{X}$.
\end{theorem}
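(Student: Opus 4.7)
The plan is to walk through Algorithm \ref{alg-naive} line by line and bound the cost of each step, then sum. The claim is straightforward because each operation reduces to a single linear scan over $\Delta\textbf{X}$.

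First I would observe that lines 1 and 2 merely project the points of $\Delta\textbf{X}$ onto the two coordinate axes, which is a single pass through the $n$ points extracting the first or second component; each such projection is clearly $O(n)$ and produces a one-dimensional multiset of size $n$. Next, for lines 3 and 4, the key observation is that although the definition $\max_{a,b\in\Delta\textbf{X}_i}|a-b|$ naively suggests an $O(n^2)$ scan over pairs, on a one-dimensional set this quantity equals $\max(\Delta\textbf{X}_i) - \min(\Delta\textbf{X}_i)$. Both the max and the min can be computed in a single linear pass, so each of lines 3 and 4 is $O(n)$.

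Finally, line 5 draws two independent Laplace samples (constant time per sample) and adds them to the two components of $\textbf{x}^*$, which is $O(1)$. Summing the costs gives $O(n)+O(n)+O(n)+O(n)+O(1)=O(n)$, which is the claimed bound.

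The ``hard part'' here is really no obstacle at all; the only thing worth explicitly pointing out is the reduction of the pairwise maximum to the one-dimensional range $\max-\min$, which is what keeps the running time linear rather than quadratic. For this reason I expect the paper to present this theorem without detailed proof, exactly as it does for Theorem \ref{theo-error-LM}.
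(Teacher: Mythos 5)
Your proof is correct: the paper omits the proof entirely (stating it is straightforward), and your line-by-line accounting, including the key reduction of the pairwise maximum to $\max - \min$ on each one-dimensional projection to avoid a quadratic scan, is exactly the argument the authors leave implicit. Nothing is missing.
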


\vspace{2mm}\noindent{\bf Utility Analysis} We can compute the utility of Laplace Mechanism with Equation \ref{equation-util}.
\begin{align*}
\textsc{Error}=\sqrt{2\mathbb{E}(Lap((\Delta_1+\Delta_2)/\epsilon))}
=\Theta(\frac{\Delta_1+\Delta_2}{\epsilon})
\end{align*}

Because Laplace Mechanism considers the two dimensions independently, utility of the perturbed location may be significantly reduced. For example, in \ref{Figure-naive}, it has a high probability to release point $\textbf{z}$ because $\textbf{z}[1]$ is near $\textbf{x}[1]$ (due to the Laplace mechanism). However, the position of $\textbf{z}$ is far from $\textbf{x}$. If we draw the convex hull of $\Delta\textbf{X}$ as $K'$, shown as the black polygon, $\textbf{z}$ is far from $K'$. Intuitively, we do not have to allocate a high perturbation probability at $\textbf{z}$. In general, if the released point is near $K'$, it has better utility because $\textbf{x}^*$ is in $K'$. This motivates us to consider the two dimensions together to obtain better utility.

One may wonder what happens if we rotate the map with a degree. For example, in Figure \ref{Figure-naive2}, if we rotate the map with the two axes, the new sensitivity  $(\Delta_1' + \Delta_2')$ in the rotated space is significantly reduced, hence the error is much smaller than that in Figure \ref{Figure-naive} because $\textsc{Error}$ is proportional to $(\Delta_1+\Delta_2)$.
Then the question is how to find an optimal rotation.
Or fundamentally  what is the optimal utility we can obtain?
To answer these questions, 
we have to go back to Section \ref{sec-sensitivityhull} to investigate the sensitivity hull and the error bound.

\end{document}